\newtheorem{theorem}{Theorem}
\newtheorem{definition}{Definition}
\newtheorem{lemma}{Lemma}
\newtheorem{proposition}{Proposition}
\newtheorem{assumption}{Assumption}
\newcommand{\eq}[1]{(\ref{eq:#1})}
\newcommand{\thm}[1]{\hyperref[thm:#1]{Theorem~\ref*{thm:#1}}}
\newcommand{\cor}[1]{\hyperref[cor:#1]{Corollary~\ref*{cor:#1}}}
\newcommand{\defn}[1]{\hyperref[defn:#1]{Definition~\ref*{defn:#1}}}
\newcommand{\lem}[1]{\hyperref[lem:#1]{Lemma~\ref*{lem:#1}}}
\newcommand{\prop}[1]{\hyperref[prop:#1]{Proposition~\ref*{prop:#1}}}
\newcommand{\assum}[1]{\hyperref[assum:#1]{Assumption~\ref*{assum:#1}}}
\newcommand{\fig}[1]{\hyperref[fig:#1]{Figure~\ref*{fig:#1}}}
\newcommand{\tab}[1]{\hyperref[tab:#1]{Table~\ref*{tab:#1}}}
\newcommand{\algo}[1]{\hyperref[algo:#1]{Algorithm~\ref*{algo:#1}}}
\renewcommand{\sec}[1]{\hyperref[sec:#1]{Section~\ref*{sec:#1}}}
\newcommand{\append}[1]{\hyperref[append:#1]{Appendix~\ref*{append:#1}}}
\newcommand{\fac}[1]{\hyperref[fac:#1]{Fact~\ref*{fac:#1}}}
\newcommand{\lin}[1]{\hyperref[lin:#1]{Line~\ref*{lin:#1}}}
\def\>{\rangle}
\def\<{\langle}
\newcommand{\R}{\mathbb{R}}
\DeclareMathOperator{\poly}{poly}
\title{Quantum Algorithms for Bandits with Knapsacks with Improved Regret and Time Complexities}
\author[1,2]{Yuexin Su\thanks{\texttt{yuexinsu@stu.pku.edu.cn}}}
\author[1,2,3]{Ziyi Yang\thanks{\texttt{2100010833@stu.pku.edu.cn}}}
\author[4]{Peiyuan Huang\thanks{\texttt{pyhuang@gsm.pku.edu.cn}}}
\author[1,2]{Tongyang Li\thanks{Corresponding author. \texttt{tongyangli@pku.edu.cn}}}
\author[5]{Yinyu Ye\thanks{\texttt{yinyu-ye@stanford.edu}}}
\affil[1]{Center on Frontiers of Computing Studies, Peking University}
\affil[2]{School of Computer Science, Peking University}
\affil[3]{School of Mathematical Science, Peking University}
\affil[4]{Guanghua School of Management, Peking University}
\affil[5]{Department of Management Science and Engineering, Stanford University}
\date{}
\begin{document}

\maketitle

\begin{abstract}
Bandits with knapsacks (BwK) constitute a fundamental model that combines aspects of stochastic integer programming with online learning. Classical algorithms for BwK with a time horizon $T$ achieve a problem-independent regret bound of $\mathcal{O}(\sqrt{T})$ and a problem-dependent bound of $\mathcal{O}(\log T)$. In this paper, we initiate the study of the BwK model in the setting of quantum computing, where both reward and resource consumption can be accessed via quantum oracles. 
We establish both problem-independent and problem-dependent regret bounds for quantum BwK algorithms. For the problem-independent case, we demonstrate that a quantum approach can improve the classical regret bound by a factor of $(1+\sqrt{B/\mathrm{OPT}_\mathrm{LP}})$, where $B$ is budget constraint in BwK and $\mathrm{OPT}_{\mathrm{LP}}$ denotes the optimal value of a linear programming relaxation of the BwK problem. 
For the problem-dependent setting, we develop a quantum algorithm using an inexact quantum linear programming solver. This algorithm achieves a quadratic improvement in terms of the problem-dependent parameters, as well as a polynomial speedup of time complexity on problem's dimensions compared to classical counterparts.
Compared to previous works on quantum algorithms for multi-armed bandits, our study is the first to consider bandit models with resource constraints and hence shed light on operations research.
\end{abstract}

\section{Introduction}

The multi-armed bandit (MAB) problem is a foundational model in operations research, management science, and machine learning \cite{slivkins2019introduction, gittins2011multi}. First introduced by Robbins in 1952 \cite{robbins1952some}, the MAB problem involves a decision-maker faced with a fixed set of arms, each associated with an unknown stochastic reward. During the online decision-making process, repeated selection of an arm yields increasing information about its reward distribution. The objective in the MAB problem is to minimize the cumulative regret over a fixed time horizon $T$, a goal that necessitates navigating the exploration-exploitation trade-off \cite{weber1992gittins}. 
A significant extension to this framework is the Bandits with Knapsacks (BwK) problem, which introduces resource constraints into the arm-selection process, drawing on ideas from the classical knapsack problem \cite{badanidiyuru2018bandits}. This model lies at the intersection of online linear programming (LP) and Markov decision process (MDP), the latter of which is a widely studied topic in operations research \cite{yu2009markov, borkar2002risk, wang2020randomized, bertsimas2000restless, ruszczynski2010risk}. The BwK model has also been connected to a wide range of real-world problems in operations research, including dynamic pricing and revenue management \cite{besbes2012blind, ferreira2018online}, online advertising \cite{mehta2007adwords}, and network routing \cite{agrawal2014dynamic}.

Quantum computing offers new possibilities for solving optimization problems by using quantum effects to explore complex solution spaces more efficiently than classical approaches. Techniques such as superposition and entanglement allow quantum algorithms to solve certain tasks faster. 
Within the domain of constrained optimization, quantum algorithms have demonstrated a polynomial speedup compared to their classical counterpart \cite{grigoriadis1995sublinear} in solving zero-sum games and linear programming (LP) problems through fast quantum multi-Gibbs sampling \cite{gao2024logarithmic,bouland2023quantum}. There also exist fast quantum subroutines for the simplex method~\cite{nannicini2024fast}. More general, semi-definite programming (SDP) also constitutes a significant class within constrained optimization, and quantum SDP solvers \cite{brandao2016quantum,brandao2017quantum,van2017quantum,vanApeldoorn2018SDP,augustino2023interior} offer a polynomial speedup over the current state-of-the-art classical SDP solvers \cite{lee2015faster}. Furthermore, polynomial speedup is also proved for general constrained convex optimization problems~\cite{chakrabarti2020quantum,van2020convex}. 
Beyond constrained optimization, the potential of quantum algorithms also extends to unconstrained optimization, especially nonconvex optimization \cite{gong2022robustness,liu2023quantum,zhang2023quantum,chen2025quantum,leng2023quantum,leng2025sub}. A general survey of quantum algorithms for optimization can be found at~\cite{Abbas2024}.

On the other hand, the intersection of quantum computing and reinforcement learning (RL) has yielded notable advancements \cite{dunjko2018machine,meyer2022survey,jerbi2022quantum}. For instance, Ref.~\cite{wang2021quantum} introduced a quantum exploration algorithm for efficiently identifying the optimal arm in a MAB setting, achieving a quadratic quantum speedup compared to the best possible classical result. Ref.~\cite{wan2023quantum} investigated quantum multi-armed bandits and stochastic linear bandits, establishing a regret bound of $\poly(\log T)$ regret for both models, thus achieving an exponential speedup compared to the $\Omega(\sqrt{ T} )$ regret bound of classical MAB algorithms~\cite{auer2002nonstochastic}. This $\poly(\log T)$ regret was subsequently extended to the quantum version of kernelized bandits~\cite{dai2023quantum}. More recently, Zhong et al.~\cite{zhong2023provably} studied the online exploration problem in quantum RL, presenting a novel quantum algorithm for both tabular MDPs and linear mixture MDPs with poly-logarithmic regrets.

While these theoretical developments mark significant progress, applying quantum algorithms to real-world problems—particularly those arising at the intersection of constrained optimization and operations research—remains a largely open and challenging area of research. Among practical settings in operations research, decision-making agents must simultaneously learn from data and operate under strict resource constraints, necessitating algorithmic frameworks that can handle both aspects effectively.

\paragraph{Main results}
In this paper, we investigate quantum algorithms for the Bandits with Knapsacks (BwK) problem, a core framework that integrates reinforcement learning with constrained optimization, with wide applications in operations research. Our model consists of $m$ arms and $d$ resources, operates over a finite time horizon $T$, and is subject to a total resource budget $B$ (see \sec{model} for details).
Our approach draws inspiration from classical BwK algorithms and leverages the efficiency of quantum Monte Carlo methods. Specifically, we propose two primal-dual based quantum algorithms for the BwK problem: one for the problem-independent setting where we target at a general regret bound in terms of $m$, $d$, $T$, and $B$, and the other for the problem-dependent setting where we pursue a fine-grained regret bound for each specific BwK instance (see \sec{Primal-dual} for details on problem-dependent parameters.). Our quantum algorithms are inspired by~\cite{badanidiyuru2018bandits} and~\cite{li2021symmetry}, the state-of-the-art classical algorithms for BwK in the problem-independent and problem-dependent settings, respectively. The expected regret bounds of our quantum algorithms, in comparison to their classical counterparts, are summarized in \tab{regret tab}.

\begin{table}[H]
  \centering
  \begin{tabularx}{\textwidth}{cccc}
    \toprule
    \textbf{Model} & \textbf{Method} & \textbf{Setting}  & \textbf{Regret} \\
    \midrule
    \multirow{2}{*}{\makecell{Problem\\Independent}} & \cite{badanidiyuru2018bandits} & Classical & \begin{small}
$\mathcal{O}\left(\sqrt{\log(dT)}\right)\left( \mathrm{OPT}_{\mathrm{LP}} \sqrt{\frac{m}{B}}\left(1+\sqrt{\frac{B}{\mathrm{OPT}_{\mathrm{LP}}}}\right)\right)+\mathcal{O}\left(m\log(dT)\log(T)\right)$
    \end{small} \\
    & \thm{main-independent} & Quantum & 
    \begin{small}$\mathcal{O}\left(\sqrt{\log (dT)}\right)\left(\mathrm{OPT}_{\mathrm{LP}} \sqrt{\frac{m}{B}}\right) + \mathcal{O}\left(m\log(dT)\log(T)\right)$\end{small}\\
    \midrule
    \multirow{2}{*}{\makecell{Problem\\Dependent}} & \cite{li2021symmetry} & Classical & $\mathcal{O} \left( \left( 2+\frac1b \right)^2 \frac{md}{b\delta^2}\log T + \frac{d^4}{b^2 \min \{ \chi^2,\delta^2 \} \min \{1,\sigma^2\}}\right)$ \\
    & \thm{quant-bwk2} & Quantum & $\tilde{\mathcal{O}}\left( \left(2+\frac1b\right)\frac{m \sqrt{d} }{b \delta} \log T  + \frac{d^4}{b^2 \min \{\chi^2,\delta^2\} \min\{1,\sigma^2\}} \right)$  \\
    \bottomrule
  \end{tabularx}
  \caption{Comparison of regret bounds for classical and quantum algorithms for BwK problems. (In this paper, we use the notation $\tilde{\mathcal{O}}$ to suppress poly-logarithmic factors, i.e., $\tilde{\mathcal{O}}(f)=\mathcal{O}(f\poly(\log f))$ for a function $f$.)}
    \label{tab:regret tab}
\end{table}
In the problem-independent regime, our quantum algorithm achieves an improvement in regret by a factor of $(1+\sqrt{B/\mathrm{OPT}_\mathrm{LP}})$ over the classical counterpart. 
In the problem-dependent regime, our quantum algorithm yields a quadratic improvement in the leading $\log T$ term with respect to the key parameters $\delta$ (defined in Eq.~\eq{delta}) and $(2+\frac{1}{b})$, where $b=B/T$. 

Beyond regret, our work addresses the significant computational overhead of the classical problem-dependent algorithm \cite{li2021symmetry}, which requires solving multiple computationally expensive LP problems. We mitigate this bottleneck by integrating the state-of-the-art quantum LP solver \cite{gao2024logarithmic, bouland2023quantum} into our problem-dependent quantum algorithm. This solver utilizes an improved quantum Gibbs sampling technique to accelerate LP solving, achieving a time complexity of $\tilde{\mathcal{O}}\left( \frac{\sqrt{m + d}}{\left( \epsilon_{\mathrm{LP}} / Rr \right)^{2.5} } \right)$, where $m$ is the number of variables, $d$ is the number of constraints, $R$ is an upper bound on the sum $\sum_i \xi_i$ of the optimal primal solution $\boldsymbol{\xi}$, and $r$ is an upper bound on the sum $\sum_i \eta_i$ of the optimal dual solution $\boldsymbol{\eta}$. Building on this technique, this quantum LP solver achieves a polynomial speedup in the dependence on the problem’s dimension compared to classical LP solvers  \cite{grigoriadis1995sublinear,jiang2020faster,van2021minimum,lee2015efficient} at the expense of introducing a mild polynomial term of the inverse of approximation error. To address this, we carry out a comprehensive robustness analysis that quantifies how such errors propagate through our algorithm and affect its regret guarantees, which may be of independent interest.

\begin{table}[H]
\centering
\begin{tabular}{cc}
\toprule
 \textbf{Algorithm}     &
\textbf{Time complexity} \\
\hline
\cite{li2021symmetry} 
& $
\tilde{\mathcal{O}}\left( \max\{m, d\}^{2.372} (T + m + d) \right)$ \\
\hline
\algo{quant-bwk2} 
&  $\tilde{\mathcal{O}} \left( md + \frac{(m + d)^{1.5}}{\epsilon_{\mathrm{LP}}^{2.5}} + \left( \frac{\sqrt{m + d}}{\epsilon_{\mathrm{LP}}^{2.5}} + d \right) T \right)$ \\
\bottomrule
\end{tabular}
\caption{Time complexities of classical and quantum problem-dependent algorithms for BwK.}
\label{tab:time complexity}
\end{table}

\tab{time complexity} illustrates the time complexity improvements offered by our quantum algorithm. For the classical algorithm benchmark, we note that Ref.~\cite{li2021symmetry} does not provide a detailed time complexity analysis. Since its execution requires solving LP problems to near-exact precision, we adopt the classical LP solver from Ref.~\cite{jiang2020faster}. (Different classical LP solvers exhibit varying time complexities depending on the specific parameter regime of the LP problem, as discussed in \sec{time complexity analysis} and \append{LP solver}. We note that our quantum algorithm maintains a polynomial speedup over the number of variables $m$ and constraints $d$ compared to existing classical approaches.) Notably, this solver has poly-logarithmic dependence on the inverse of precision parameter, which is hence omitted in the $\tilde{O}$ time complexity bound. In contrast, our \algo{quant-bwk2} incorporates the aforementioned quantum LP solvers from \cite{gao2024logarithmic, bouland2023quantum} and achieves polynomial improvements in time complexity with respect to both $m$ and $d$.

\paragraph{Techniques}
In our quantum algorithm, we leverage the quantum Monte Carlo method (QMC) to obtain tighter upper and lower confidence bounds (UCB and LCB) on random variables. Intuitively, after $N$ samples the confidence‐interval length in classical Monte Carlo methods scales as $\mathcal{O}({1}/{\sqrt{N}})$, whereas in the quantum Monte Carlo method it improves to $\tilde{\mathcal{O}}({1}/{N})$ via quantum amplitude estimation (see \sec{monte carlo} for details). By exploiting this faster convergence rate, we design quantum algorithms that achieve better regret bounds. However, QMC yields useful information only upon measurement—and each measurement collapses the quantum state—so we cannot reuse information from previous samples within the same quantum run, unlike in classical Monte Carlo methods. Consequently, in both our problem-independent and problem-dependent algorithms, QMC executions are scheduled at geometrically increasing time intervals for updating UCB and LCB. 

Furthermore, to accelerate our problem-dependent algorithm, we employ quantum LP solvers. While quantum LP solvers offer a polynomial speedup with respect to the number of variables and constraints, it yields approximate solutions rather than the near-optimal ones typically obtained from classical LP solvers, such as those by interior point methods. To ensure the reliability of our quantum algorithm in the presence of such approximations, we develop a novel robustness analysis framework. 
If the LP approximation error $\epsilon_{\textrm{LP}}$ is appropriately bounded (as detailed in \prop{regret 1} and \prop{regret 2}), the overall regret guarantees are largely preserved. This suggests that employing approximate LP solvers does not significantly compromise the algorithm’s performance. Consequently, our robustness analysis demonstrates that the superior regret performance characteristic of our quantum BwK algorithm is preserved despite these approximation errors.

\section{Preliminaries}
\subsection{Model and setup}
\label{sec:model}

\paragraph{Problem formulation}
We consider the BwK problem with $m$ arms, $d$ resources, and a finite time horizon $T$. In each round $t\in[T]$, the algorithm selects an arm $i \in [m]$, receives a reward $r_t \in [0,1]$, and consumes a resource vector $\boldsymbol{c}_t \in [0,1]^d$. Each resource $j \in [d]$ is subject to a budget constraint $B_j\in\R^{+}$. The algorithm terminates either when $T$ rounds have been completed or when any resource budget is exceeded.

We assume that for each arm $i\in[m]$, the reward and cost vectors are i.i.d.~drawn from an unknown distribution $\boldsymbol{\pi}_i$ supported on $[0,1]^{d+1}$. Thus, a problem instance is fully characterized by the set of distributions $(\boldsymbol{\pi}_i)_{i=1}^m$. We denote the expected reward vector as $\boldsymbol{r} \in [0,1]^m$ and the expected cost matrix as $\boldsymbol{C} \in [0,1]^{d \times m}$.
Throughout the paper, we adopt the convention that scalars are denoted by non-bold variables (e.g., $r_i, C_{j,i}$) and vectors/matrices by bold variables (e.g., $\boldsymbol{r}, \boldsymbol{C}$). In particular, $\boldsymbol{C}_{\cdot,i}$ represents the $i$-th column vector of the matrix $\boldsymbol{C}$ and $\boldsymbol{C}_{j,\cdot}$ represents the $j$-th row vector of the matrix $\boldsymbol{C}$.

The algorithm's expected total reward is defined as the expectation of the sum of received rewards over rounds. Let $\mathrm{OPT}$ denote the maximum achievable expected reward under the resource constraints. The  \emph{expected regret} of an algorithm is then defined as $\mathrm{OPT}$ minus its expected total reward. Our goal is to design algorithms that minimize the expected regret.

\paragraph{Assumptions}
Without loss of generality, we assume that all resource budgets are uniform, i.e., $B_j = B$ for all $j \in [d]$. This uniformity can be achieved by scaling each resource’s consumption vector $\boldsymbol{C}_{j,\cdot} \in [0,1]^m$ by a factor of $B_j / B$, where $B = \min_j B_j$. To ensure feasibility of the resource constraints, we require $B_j \leq T$ for all $j \in [d]$. Define $b = B/T \in [0,1]$, and let $\boldsymbol{B} = (B, B, \ldots, B)^\top$ denote the budget vector.

For analytical convenience, we incorporate the time horizon $T$ as an additional (first) resource constraint. That is, each arm consumes $b$ units of  time resource in every round, and we set $C_{1,i} = b$ for all $i \in [m]$.

\subsection{Primal-dual perspective for BwK}
\label{sec:Primal-dual}
\paragraph{LP relaxation of optimal value}
To analyze the BwK problem, we begin by formulating its LP relaxation. Recall that $\boldsymbol{r}$ denotes the vector of expected rewards, $\boldsymbol{C}$ represents the matrix of expected resource consumption, and $\boldsymbol{B}$ is the budget vector. The decision variables are represented by the vector $\boldsymbol{\xi} = (\xi_1, \xi_2, \dots, \xi_m) \in \mathbb{R}^m$. The primal LP is formulated as:
\begin{equation}
\label{eq:primal-LP}
\begin{aligned}
\mathrm{OPT}_{\mathrm{LP}}:=
\max_{\boldsymbol{\xi}}\quad &\boldsymbol{r}^\top \boldsymbol{\xi} \\
\mathrm{s.t.}\quad &\boldsymbol{C}\boldsymbol{\xi} \leq  \boldsymbol{B} \\
& \boldsymbol{\xi} \geq {0}.
\end{aligned}
\end{equation}
We denote the optimal value to this primal problem as $\mathrm{OPT}_{\mathrm{LP}}$, and the optimal solution as $\boldsymbol{\xi}^*$. By \cite[Lemma 3.1]{badanidiyuru2018bandits}, $\mathrm{OPT}_{\mathrm{LP}}$ upper bounds the true optimal reward $\mathrm{OPT}$ of the BwK problem. The gap between $\mathrm{OPT}_{\mathrm{LP}}$ and the expected reward of our algorithm therefore provides an upper bound on the expected regret.  The corresponding dual LP problem is formulated as follows:
\begin{equation}
\label{eq:dual-LP}
\begin{aligned}
\min_{{\eta}}\quad  
& \boldsymbol{B}^{\top} \boldsymbol{\eta}\\
\mathrm{s.t.}\quad & \boldsymbol{C}^{\top} \boldsymbol{\eta} \geq \boldsymbol{r}\\
&\boldsymbol{\eta} \geq {0},
\end{aligned}
\end{equation}
where $\boldsymbol{\eta} \in \mathbb{R}^d$ is the vector of dual variables, and we denote the dual optimal solution as $\boldsymbol{\eta}^*$.

\paragraph{Arm-bandit symmetry}
We introduce some fundamental properties concerning the relationship between the primal and dual LP problems. We partition the set of arms into optimal arms $\mathcal{I}^{*}$ and sub-optimal arms $\mathcal{I^\prime}$. Similarly, we divide the set of constraints into binding constraints $\mathcal{J}^{*}$ and non-binding constraints $\mathcal{J^\prime}$ according to the following equations:
\begin{equation}
\begin{aligned}
    \mathcal{I}^{*} &:= \{\xi_i^*>0,i\in[m]\} , \\
    \mathcal{I^\prime} &:= \{\xi_i^*=0,i\in[m]\} , \\
    \mathcal{J}^* &:= \left\{B-\sum_{i=1}^m C_{ji}\xi_i^*=0,\ j \in[d] \right\}, \\
    \mathcal{J^\prime} &:= \left\{B-\sum_{i=1}^m C_{ji}\xi_i^*>0,\ j \in [d] \right\}. 
\end{aligned}
\end{equation}
As defined, the optimal arms $\mathcal{I}^{*}$ are those selected in the optimal primal solution $\boldsymbol{\xi}^*$, whereas the sub-optimal arms $\mathcal{I}^{\prime}$ are not. For the binding constraints $\mathcal{J}^*$, the corresponding resources are fully utilized when the BwK algorithm terminates. Conversely, the non-binding constraints $\mathcal{J}^\prime$ have remaining capacity. This implies that the sets of optimal arms and binding constraints predominantly govern the behavior of the BwK algorithm. Furthermore, we introduce the following assumption to ensure the uniqueness and non-degeneracy of the relaxed LP problem. This assumption can be satisfied for any LP problem through an arbitrarily small perturbation \cite{megiddo1989varepsilon}.
\begin{assumption}
\label{assum:lp}
    $\mathrm{OPT}_{\mathrm{LP}}$ in \eq{primal-LP} has an unique optimal solution, and the optimal solution is non-degenerate:
    \begin{equation}
        \vert \mathcal{I}^* \vert = \vert \mathcal{J}^* \vert.
    \end{equation}
\end{assumption}

Building on these definitions, we define the LP problems $\mathrm{OPT}_i$ and $\mathrm{OPT}_j$ to further explore the properties of optimal arms and binding constraints. These problems involve removing arm $i$ or constraint $j$ from the original LP. 
\begin{equation}
\begin{aligned}
\label{eq:opt-i}
    \mathrm{OPT}_i:=
    \max_{\boldsymbol{\xi}} \quad & {\boldsymbol{r}}^\top \boldsymbol{\xi}, \\
    \mathrm{s.t.} \quad & \boldsymbol{C} \boldsymbol{\xi} \leq \boldsymbol{B}, \\
    & \xi_i=0, \boldsymbol{\xi} \geq \boldsymbol{0}.
\end{aligned}
\end{equation}
\begin{equation}
\begin{aligned}
\label{eq:opt-j}
    \mathrm{OPT}_j :=
    \min_{\boldsymbol{\eta}}\quad & \boldsymbol{B}^\top \boldsymbol{\eta} - B,\\
    \mathrm{s.t.}\quad & {\boldsymbol{C}}^\top \boldsymbol{\eta} \geq \boldsymbol{r} +\boldsymbol{C}_{j,\cdot},\\
    &\boldsymbol{\eta} \geq0 .\
\end{aligned}
\end{equation}
\prop{optimal gap} characterizes the impact of removing arms or modifying constraints in terms of optimality gaps.
\begin{proposition}[{\cite[Proposition 2]{li2021symmetry}}]
\label{prop:optimal gap}
With \assum{lp}, the following inequality holds:
\begin{equation}
\begin{aligned}
&\mathrm{OPT}_{i} < \mathrm{OPT}_\mathrm{LP} \Leftrightarrow i\in\mathcal{I}^{*}, \\
&\mathrm{OPT}_{i} = \mathrm{OPT}_\mathrm{LP} \Leftrightarrow i\in\mathcal{I'}, \\
&\mathrm{OPT}_{j} = \mathrm{OPT}_\mathrm{LP} \Leftrightarrow j\in\mathcal{J}^*, \\
&\mathrm{OPT}_{j} < \mathrm{OPT}_\mathrm{LP} \Leftrightarrow j\in\mathcal{J'}. 
\end{aligned}
\end{equation}
\end{proposition}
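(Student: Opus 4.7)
The plan is to handle the arm statements and the constraint statements separately, using the uniqueness half of Assumption~\ref{assum:lp} to convert all of the ``$\leq$'' inequalities into strict ones whenever needed.

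\textbf{Arm statements.} First I would observe that removing the variable $\xi_i$ from \eqref{eq:primal-LP} only shrinks the feasible set, so $\mathrm{OPT}_i\le \mathrm{OPT}_\mathrm{LP}$ always. If $i\in\mathcal{I}'$ then $\xi_i^{*}=0$, so $\boldsymbol{\xi}^{*}$ itself remains feasible in \eqref{eq:opt-i}, giving $\mathrm{OPT}_i\ge \mathrm{OPT}_\mathrm{LP}$, hence equality. If instead $i\in\mathcal{I}^{*}$ then $\xi_i^{*}>0$, so $\boldsymbol{\xi}^{*}$ is infeasible in \eqref{eq:opt-i}; any optimizer $\boldsymbol{\xi}'$ of \eqref{eq:opt-i} satisfies $\xi'_i=0\ne \xi_i^{*}$, so $\boldsymbol{\xi}'\ne\boldsymbol{\xi}^{*}$, and by the uniqueness in Assumption~\ref{assum:lp} we get $\boldsymbol{r}^{\top}\boldsymbol{\xi}'<\mathrm{OPT}_\mathrm{LP}$, i.e.\ $\mathrm{OPT}_i<\mathrm{OPT}_\mathrm{LP}$.

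\textbf{Constraint statements.} The natural move is to dualize \eqref{eq:opt-j}. Since the original primal is feasible and bounded, strong duality applies to this modified LP and gives
\begin{equation*}
\mathrm{OPT}_j \;=\; \max_{\boldsymbol{\xi}\ge 0,\ \boldsymbol{C}\boldsymbol{\xi}\le \boldsymbol{B}}\bigl(\boldsymbol{r}+\boldsymbol{C}_{j,\cdot}^{\top}\bigr)^{\top}\boldsymbol{\xi}\;-\;B
\;=\;\max_{\boldsymbol{\xi}\in\mathcal{F}}\bigl(\boldsymbol{r}^{\top}\boldsymbol{\xi}+\boldsymbol{C}_{j,\cdot}\boldsymbol{\xi}-B\bigr),
\end{equation*}
where $\mathcal{F}=\{\boldsymbol{\xi}\ge 0:\boldsymbol{C}\boldsymbol{\xi}\le \boldsymbol{B}\}$ is the feasible region of \eqref{eq:primal-LP}. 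For every $\boldsymbol{\xi}\in\mathcal{F}$ one has $\boldsymbol{r}^{\top}\boldsymbol{\xi}\le \mathrm{OPT}_\mathrm{LP}$ and $\boldsymbol{C}_{j,\cdot}\boldsymbol{\xi}\le B$, so the objective is at most $\mathrm{OPT}_\mathrm{LP}$, giving $\mathrm{OPT}_j\le \mathrm{OPT}_\mathrm{LP}$ in all cases.

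If $j\in\mathcal{J}^{*}$ then plugging in $\boldsymbol{\xi}^{*}$ gives $\mathrm{OPT}_\mathrm{LP}+B-B=\mathrm{OPT}_\mathrm{LP}$, so the upper bound is attained and $\mathrm{OPT}_j=\mathrm{OPT}_\mathrm{LP}$. If $j\in\mathcal{J}'$, then I would argue that equality with $\mathrm{OPT}_\mathrm{LP}$ in the reformulation requires \emph{both} $\boldsymbol{r}^{\top}\boldsymbol{\xi}=\mathrm{OPT}_\mathrm{LP}$ and $\boldsymbol{C}_{j,\cdot}\boldsymbol{\xi}=B$; the first forces $\boldsymbol{\xi}=\boldsymbol{\xi}^{*}$ by uniqueness, but then $\boldsymbol{C}_{j,\cdot}\boldsymbol{\xi}^{*}<B$ contradicts the second, so $\mathrm{OPT}_j<\mathrm{OPT}_\mathrm{LP}$. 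The four two-way implications in the statement then follow immediately because $\{\mathcal{I}^{*},\mathcal{I}'\}$ and $\{\mathcal{J}^{*},\mathcal{J}'\}$ each partition their respective index sets.

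The main subtlety — and the only place where the argument could go wrong — is the strict inequality in the $i\in\mathcal{I}^{*}$ and $j\in\mathcal{J}'$ cases: without uniqueness one could only conclude $\le$. So the key step is really invoking Assumption~\ref{assum:lp} to upgrade $\boldsymbol{\xi}\ne\boldsymbol{\xi}^{*}\Rightarrow \boldsymbol{r}^{\top}\boldsymbol{\xi}<\mathrm{OPT}_\mathrm{LP}$, and making sure the $\mathrm{OPT}_j$ dualization is set up so that this same uniqueness argument is usable.
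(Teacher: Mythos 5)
The paper does not prove this proposition at all: it is imported verbatim as a citation to \cite[Proposition~2]{li2021symmetry}, so there is no in-paper proof to compare against. Your self-contained argument is correct and is the natural one. The arm half (restriction shrinks the feasible set, $\boldsymbol{\xi}^*$ remains feasible when $\xi_i^*=0$, and uniqueness upgrades $\le$ to $<$ when $\xi_i^*>0$) is exactly right, and the constraint half via dualizing \eq{opt-j} into $\max_{\boldsymbol{\xi}\in\mathcal{F}}(\boldsymbol{r}^\top\boldsymbol{\xi}+\boldsymbol{C}_{j,\cdot}\boldsymbol{\xi}-B)$ is the clean way to make the arm/knapsack symmetry visible; the ``both terms must be tight'' argument correctly isolates where uniqueness and the definition of $\mathcal{J}'$ enter. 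Two small points you leave implicit but which do hold here: (i) the maxima in \eq{opt-i} and in your dualized form of \eq{opt-j} are attained, because the time-resource row $C_{1,\cdot}=b\boldsymbol{1}^\top$ makes $\mathcal{F}$ compact (and feasibility of \eq{opt-j} itself follows by taking $\eta_1$ large on that same row), which is what licenses both the strong-duality step and the ``evaluate at the optimizer'' step; (ii) the reverse implications do follow from the forward ones only because the conclusions $<$ and $=$ are mutually exclusive and $\{\mathcal{I}^*,\mathcal{I}'\}$, $\{\mathcal{J}^*,\mathcal{J}'\}$ are partitions, which you note. No gaps.
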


\paragraph{Notations for problem-dependent BwK}
To ensure consistency with prior work on problem-dependent BwK, we adopt some notations introduced in \cite{li2021symmetry}. Given prior definitions of the optimal arms $\mathcal{I}^*$ and the binding constraints $\mathcal{J}^*$, and based on \prop{optimal gap}, we define the parameter $\delta$ to quantify the gap between the optimal LP value $\mathrm{OPT}_{\mathrm{LP}}$ and the best sub-optimal LP values associated with arms and constraints:
\begin{equation}
\label{eq:delta}
    \delta:=\frac1T\left(\mathrm{OPT}_{\mathrm{LP}}-\max\left\{\max_{i\in\mathcal{I}^*}\mathrm{OPT}_i,\max_{j\in\mathcal{J}^{\prime}}\mathrm{OPT}_j\right\}\right).
\end{equation}
We further define the parameter $\sigma$ as the minimum singular value of the submatrix $\boldsymbol{C}_{\mathcal{J}^*,\mathcal{I}^*}$ of the resource consumption matrix $\boldsymbol{C}$: 
\begin{equation}
    \sigma = \sigma_{\min}\left(\boldsymbol{C}_{\mathcal{J}^*,\mathcal{I}^*} \right).
\end{equation}
Finally, $\chi$ is defined to capture the smallest non-zero component of the optimal primal solution:
\begin{equation}
    \chi=\frac{1}{T} \min\{\xi^*_i \neq 0, i \in [m] \}.
\end{equation}

\subsection{Classical and quantum oracles}
In the classical setting, pulling an arm yields a random reward sample and a corresponding $d$-dimensional resource consumption vector. To formally analyze this process, we model the act of sampling as an interaction with an ``oracle". This concept is then extended to the quantum domain. Specifically, we define both classical and quantum oracles for the reward $r_i$ and the resource consumption vector $\boldsymbol{C}_{\cdot,i}$ associated with each arm $i$.

Let $\Omega_i^r$ and $\boldsymbol{\Omega}_i^{\boldsymbol{C}} = (\Omega_{i,1}, \Omega_{i,2}, \ldots, \Omega_{i,d})$ denote the finite sample spaces of the reward and the $d$-dimensional resource consumption, respectively for arm $i$. We then define the classical oracles $f_i^r$ and $f_i^{\boldsymbol{C}}$, as well as the corresponding quantum oracles $\mathcal{O}_i^r$ and $\mathcal{O}_i^{\boldsymbol{C}}$, for querying the reward and resource consumption of arm $i$.

\begin{definition}[Classical oracles for BwK]
\label{defn:classical-oracle}
Define 
\begin{equation}
\begin{aligned}
    f_{i}^r &\colon i \rightarrow X_{i}^r(\omega),\: \omega \in \Omega_i^r \quad (\text{reward oracle}), \\
    f_{i}^{\boldsymbol{C}} &\colon  i \rightarrow X_{i}^{\boldsymbol{C}}(\omega),\: \omega \in \Omega_i^{\boldsymbol{C}} \quad(\text{resource consumption oracle}),
\end{aligned}
\end{equation}
where $X_i^r(\omega)$ is the random reward at $\omega$ and $X_i^{\boldsymbol{C}}(\omega)$ is the corresponding random $d$-dimensional resource consumption vector.
\end{definition}
In essence, the classical oracles model the process of drawing a sample from a random variable. Each query to a classical oracle for a given arm returns an immediate sample of both the reward and the resource consumption for that arm.

\begin{definition}[Quantum oracles for BwK]
\label{defn:quantum-oracle}
\begin{equation}
\begin{aligned}
\mathcal{O}_{i}^r &\colon
\ket{0} \to \sum_{\omega \in \Omega_{i}^r}
\sqrt{P(\omega)} \ket{\omega} \ket{y(\omega)} \quad(\text{reward oracle}), \\
\mathcal{O}_{i}^{\boldsymbol{C}} &\colon
\ket{0} \to \sum_{\omega \in \Omega_{i}^{\boldsymbol{C}}}
\sqrt{P(\omega)} \ket{\omega} \ket{y(\omega)} \quad(\text{resource consumption oracle}),
\end{aligned}
\end{equation}
where $y(\omega)\colon \Omega \to [0,1]$ or $y(\omega)\colon \Omega \to [0,1]^d$ is the random reward or resource consumption associated with $\omega$.
\end{definition}

In quantum computing, a single qubit is represented by a vector in the complex Hilbert space $\mathbb{C}^2$, while an $n$-qubit quantum state resides in the space $\mathbb{C}^{2^n}$. Given two quantum states $\ket{v} \in \mathbb{C}^{2^{n_1}}$ and $\ket{u} \in \mathbb{C}^{2^{n_2}}$, their tensor product $\ket{v} \otimes \ket{u}$ is a vector in $\mathbb{C}^{2^{n_1 + n_2}}$ of the form $(v_0 u_0, v_0 u_1, \ldots, v_{2^{n_1}-1} u_{2^{n_2}-1})$. For brevity, this tensor product is often denoted by $\ket{v}\ket{u}$. Quantum oracles in \defn{quantum-oracle} prepare a superposition over all possible outcomes from the underlying reward and cost distributions associated with each arm. This definition naturally generalizes the classical sampling oracle: if a measurement is performed on the state $\mathcal{O}_{i}^{r}\ket{0}$ or $\mathcal{O}_{i}^{\boldsymbol{C}}\ket{0}$, the quantum oracle  behaves like a classical oracle by returning a single sample according to the corresponding distribution. In contrast to classical oracles, which provide the exact reward and resource costs associated with each arm pull, quantum oracles provide no direct feedback until a measurement is performed.  When used within a quantum algorithm, the algorithm can query both the unitary oracle $\mathcal{O}$ and its inverse $\mathcal{O}^\dagger$ during each interaction with the arm.

Additionally, the quantum LP solver requires oracle access to the coefficients of the linear programming problem (see \defn{lp solver setting} for details). We implement this oracle by storing the coefficients in a quantum-read, classical-write random access memory (QRAM) \cite{giovannetti2008quantum}. 
Specifically, QRAM allows for the storage and modification of an array $a_1,a_2,\dots,a_l$ of classical data while providing quantum read-only access. That is, it enables the implementation of a unitary operator $\mathcal{U}_{\textrm{QRAM}}$ such that
\begin{equation}
    \mathcal{U}_{\textrm{QRAM}}\colon \ket{i} \ket{0} \rightarrow \ket{i} \ket{a_i}.
\end{equation}
Given $l$ classical data points, the QRAM can be constructed in $\tilde{\mathcal{O}}(l)$ time and allows access to each element in $\tilde{\mathcal{O}}(1)$ time \cite{kerenidis2016quantum,grover2002creating}.

\subsection{Classical and quantum UCB and LCB}
\label{sec:monte carlo}
Classical upper confidence bound (UCB) and lower confidence bound (LCB) algorithms are known to achieve the following guarantee by the Chernoff-Hoeffding's inequality.
\begin{lemma}[Classical univariate and multivariate mean estimator]
\label{lem:classic-monto}
Let $\boldsymbol{y} \colon \Omega \to [0,1]^d$ be a $d$-dimensional random variable with bounded value, and let $\hat{\boldsymbol{y}}$ be the average of $N$ independent samples from this distribution. Then the Chernoff-Hoeffding's inequality guarantees that $P(\vert \hat{\boldsymbol{y}}- \mathbb{E}[\boldsymbol{y}] \vert \leq \epsilon) \geq 1- \delta $ with at most $\frac{\log\left( 2d/\delta \right)}{\epsilon^2}$ samples.
\end{lemma}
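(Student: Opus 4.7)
The plan is to apply the standard Chernoff-Hoeffding inequality coordinate-wise and then take a union bound over the $d$ coordinates, which is the natural strategy whenever the deviation $|\hat{\boldsymbol{y}} - \mathbb{E}[\boldsymbol{y}]|$ is interpreted in the $\ell_\infty$ sense. For the univariate case ($d=1$), the statement is just the one-sided/two-sided Hoeffding inequality applied to $N$ i.i.d.\ samples drawn from a $[0,1]$-valued distribution, so that case is immediate.

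For the multivariate case, I would proceed as follows. First, for each coordinate $k \in [d]$, the random variables $y^{(1)}_k, \dots, y^{(N)}_k$ are i.i.d.\ and take values in $[0,1]$. By Hoeffding's inequality applied to the sample mean $\hat{y}_k = \tfrac{1}{N} \sum_{s=1}^N y^{(s)}_k$, we have
\begin{equation}
P\bigl(|\hat{y}_k - \mathbb{E}[y_k]| > \epsilon\bigr) \leq 2 \exp(-2 N \epsilon^2).
\end{equation}
Second, I apply the union bound across all $d$ coordinates to obtain
\begin{equation}
P\bigl(\|\hat{\boldsymbol{y}} - \mathbb{E}[\boldsymbol{y}]\|_\infty > \epsilon\bigr) \leq 2 d \exp(-2 N \epsilon^2).
\end{equation}
Third, setting the right-hand side to be at most $\delta$ and solving for $N$ gives $N \geq \frac{\log(2d/\delta)}{2 \epsilon^2}$, which is absorbed into the stated bound $\frac{\log(2d/\delta)}{\epsilon^2}$ (the constant factor of $2$ being dropped in the $\mathcal{O}$-style statement).

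There is no real obstacle here, since the result is essentially Hoeffding plus union bound; the only small subtlety is that the norm $|\cdot|$ in the statement should be read as the coordinate-wise $\ell_\infty$ deviation, so that the union bound is the right tool. If instead one wanted an $\ell_2$-type guarantee, one would rescale $\epsilon$ by a factor of $\sqrt{d}$, but this would only affect constants and still fit within the stated sample complexity. Thus the proof reduces to invoking Hoeffding $d$ times and taking a union bound.
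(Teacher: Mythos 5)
Your proof is correct and matches the paper's (implicit) justification: the paper states this lemma without proof as a direct consequence of the Chernoff--Hoeffding inequality, and your coordinate-wise Hoeffding bound plus union bound is exactly the standard argument, with the resulting $N \geq \frac{\log(2d/\delta)}{2\epsilon^2}$ comfortably within the stated $\frac{\log(2d/\delta)}{\epsilon^2}$. Your reading of $\vert\cdot\vert$ as the coordinate-wise ($\ell_\infty$) deviation is also the intended one, consistent with how the bound is used elsewhere in the paper.
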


In the quantum setting, by employing quantum oracles and quantum Monte Carlo (QMC) techniques, we achieve a nearly quadratic enhancement in estimation precision. This acceleration stems from quantum amplitude estimation \cite{brassard2000quantum}, as detailed in the following discussion. 
Suppose we have a quantum algorithm $\mathcal{A}$ acting on $n$ qubits, whose output state is
\begin{equation}
    \ket{\phi}:=\mathcal{A}\ket{0}^{\otimes n} = \sum_x \sqrt{p(x)} \ket{x},
\end{equation}
where $\{\ket{x} \mid x \in \{0,1\}^n\}$ denotes the set of $2^n$ computational basis states, and $p(x)\in[0,1]$ is the probability of obtaining outcome $x$ upon measurement. If the measured outcome is $x$, the algorithm returns a value $\phi(x)$, and our goal is to estimate the expectation $\mathbb{E}[\phi]=\sum_x p(x)\phi(x)$. To achieve this, the QMC employs quantum amplitude estimation. Specifically, we introduce one ancillary qubit and apply a unitary operator $W$ defined by 
\begin{equation}
    \ket{x}\ket{0} \rightarrow \ket{x} (\sqrt{\phi(x)}\ket{1} + \sqrt{1-\phi(x)}\ket{0} ).
\end{equation} 
Consequently, applying $\mathcal{A}$ on the first $n$ qubits followed by $W$ on the appended qubit transforms the initial state $\ket{0}^{\otimes(n+1)}$ into 
\begin{equation}
\begin{aligned}
   \ket{\psi} := W(\mathcal{A} \otimes I)\ket{0}^{\otimes(n+1)} &= \sum_x \left(\sqrt{p(x)\phi(x)} \ket{x}\ket{1} + \sqrt{p(x)(1-\phi(x))}\ket{x}\ket{0}\right) \\
   &=\sqrt{\mathbb{E}[\phi]}\ket{\psi_1}\ket{1} + \sqrt{1-\mathbb{E}[\phi]}\ket{\psi_0}\ket{0},
\end{aligned}
\end{equation}
where $\ket{\psi_1}:=\frac{1}{\sqrt{\mathbb{E}[\phi]}}\sum_x\sqrt{p(x)\phi(x)} \ket{x}$ and $\ket{\psi_0}:=\frac{1}{\sqrt{1-\mathbb{E}[\phi]}} \sum_x\sqrt{p(x)(1-\phi(x))} \ket{x}$.
In this form, the probability of measuring the ancilla in $\ket{1}$ is exactly $\mathbb{E}[\phi]$.  
Amplitude estimation then proceeds by alternating reflections about $\ket{\psi}$ and about the bad subspace $\ket{\psi_0}\ket{0}$ through the reflection on the last qubit. After $t$ such iterations, one obtains an estimate $\tilde{\phi}$ satisfying $\vert \tilde{\phi} - \mathbb{E}[\phi] \vert \leq \mathcal{O}(\sqrt{\mathbb{E}[\phi]}/t + 1/{t^2})$~\cite{brassard2000quantum}. This can be further extended to fast quantum univariate and multivariate mean estimators as follows:

\begin{lemma}[Quantum univariate mean estimator {\cite{montanaro2015quantum,kothari2023mean}}] 
\label{lem:quant-uni}
Assume that $y\colon \Omega \to [0,1]$ is a random variable, $\Omega$ is equipped with a probability measure $P$, and the quantum unitary oracle $\mathcal{O}$ encodes $P$ and $y$.
There is a constant $C_1 \geq 1$ and a quantum algorithm $\mathrm{QMC}_1(\mathcal{O},\epsilon,\delta)$ which returns an estimate $\hat{y}$ of $\mathbb{E}[y]$ such that $P(\vert \hat{y}- \mathbb{E}[y] \vert \leq \epsilon) \geq 1-\delta$ using at most $\frac{C_1}{\epsilon} \log \frac1\delta$ queries to $\mathcal{O}$ and $\mathcal{O}^\dagger $.
\end{lemma}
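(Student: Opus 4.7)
The plan is to build on the amplitude estimation framework already sketched in the excerpt, and then use a standard confidence boosting trick to introduce the $\log(1/\delta)$ factor.

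First, I would construct the amplitude-encoded state exactly as described in the preceding paragraph: apply $\mathcal{O}$ to produce $\sum_\omega \sqrt{P(\omega)}\ket{\omega}\ket{y(\omega)}$, then use a controlled rotation $W$ on a fresh ancilla conditioned on the value register $\ket{y(\omega)}$ to produce
\begin{equation}
\ket{\psi} = \sqrt{\mathbb{E}[y]}\ket{\psi_1}\ket{1} + \sqrt{1-\mathbb{E}[y]}\ket{\psi_0}\ket{0}.
\end{equation}
Since $y(\omega)\in[0,1]$, the rotation angle $2\arcsin(\sqrt{y(\omega)})$ is well-defined, and $W$ can be implemented with $O(1)$ arithmetic gates on the value register. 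The probability of observing $\ket{1}$ on the ancilla is exactly $\mathbb{E}[y]$, so estimating $\mathbb{E}[y]$ reduces to estimating this amplitude.

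Next I would invoke the standard quantum amplitude estimation result of Brassard--Høyer--Mosca--Tapp: with $t$ applications of the Grover-type operator $Q = -(W \otimes I)(\mathcal{A}\otimes I)\,S_0\,(\mathcal{A}\otimes I)^\dagger (W\otimes I)^\dagger S_{\chi}$ (reflections about $\ket{0}^{\otimes(n+1)}$ and about the good subspace), one obtains, with constant probability at least $8/\pi^2$, an estimate $\tilde y$ with
\begin{equation}
|\tilde y - \mathbb{E}[y]| \le \frac{2\pi\sqrt{\mathbb{E}[y](1-\mathbb{E}[y])}}{t} + \frac{\pi^2}{t^2}.
\end{equation}
Choosing $t = \lceil C/\epsilon\rceil$ for a suitable absolute constant $C$ (using $\mathbb{E}[y]\le 1$ to bound the first term by $O(1/t)$) makes the right-hand side at most $\epsilon$. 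This uses $O(1/\epsilon)$ queries to $\mathcal{O}$ and $\mathcal{O}^\dagger$.

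Finally, to upgrade the constant success probability to $1-\delta$, I would apply the \emph{powering/median-of-means lemma}: run the above amplitude estimation procedure $k = \Theta(\log(1/\delta))$ independent times to obtain estimates $\tilde y_1,\dots,\tilde y_k$, and output their median $\hat y$. By a Chernoff bound, as long as each individual run satisfies $|\tilde y_j - \mathbb{E}[y]|\le \epsilon$ with probability at least $2/3$, the median is within $\epsilon$ of $\mathbb{E}[y]$ with probability at least $1-\delta$ once $k \ge C'\log(1/\delta)$. The total query count is therefore $O\!\bigl(\tfrac{1}{\epsilon}\log\tfrac{1}{\delta}\bigr)$, as claimed.

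The one genuinely delicate step is the amplitude estimation analysis itself — in particular, justifying the $O(1/t)$ dependence that underlies the $1/\epsilon$ (rather than $1/\epsilon^2$) query complexity — but since I am allowed to invoke the Brassard et al.\ result as a black box, the remaining work is just the rotation construction and the median trick, both standard.
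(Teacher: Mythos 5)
Your proposal is correct and follows essentially the same route the paper takes: the paper does not prove this lemma itself but cites it from Montanaro and Kothari--O'Donnell, and the amplitude-encoding construction plus the BHMT estimate $\vert \tilde{y} - \mathbb{E}[y]\vert \leq \mathcal{O}(\sqrt{\mathbb{E}[y]}/t + 1/t^2)$ that you invoke is exactly the sketch given in the paragraph preceding the lemma. Your choice of $t = \lceil C/\epsilon\rceil$ (using boundedness of $y$) and the median-of-$\Theta(\log(1/\delta))$ boosting correctly yield the stated $\frac{C_1}{\epsilon}\log\frac{1}{\delta}$ query bound.
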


\begin{lemma}[Quantum multivariate mean estimator {\cite{cornelissen2022near}}] 
\label{lem:quant-multi}
Assume that $ \boldsymbol{y}\colon \Omega \to [0,1]^d$ is a $d$-dimensional random variable, $\Omega$ is equipped with a probability measure $P$, and the quantum unitary oracle $\mathcal{O}$ encodes $P$ and $\boldsymbol{y}$.
There is a constant $C_2 \geq 1$ and a quantum algorithm $\mathrm{QMC}_2(\mathcal{O},\epsilon,\delta)$ which returns an estimate $\hat{\boldsymbol{y}}$ of $\mathbb{E}[\boldsymbol{y}]$ such that $P(\Vert \hat{\boldsymbol{y}}- \mathbb{E}[\boldsymbol{y}] \Vert_\infty \leq \epsilon) \geq 1-\delta$ using at most $ \frac{C_2 \sqrt{d} \log(d/\delta) }{\epsilon} \sqrt{\log\left(\frac{\sqrt{d} \log(d/\delta) }{\epsilon} \right)} $ queries to $\mathcal{O}$ and $\mathcal{O}^\dagger $.
\end{lemma}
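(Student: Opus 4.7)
The plan is to prove this bound by combining Lemma~3 (univariate quantum mean estimation) with a reduction that exploits the coherent $d$-dimensional structure of the oracle $\mathcal{O}$. A naive reduction applies Lemma~3 to each coordinate $y_j$ separately with per-coordinate precision $\epsilon$ and failure probability $\delta/d$, then takes a union bound; this yields an $\ell_\infty$ guarantee but costs $\mathcal{O}(d\log(d/\delta)/\epsilon)$ queries, a factor $\sqrt{d}$ worse than the claim. The aim is to recover this $\sqrt{d}$ savings by processing coordinates in a single coherent run rather than independently, leveraging the fact that $\mathcal{O}\ket{0}$ prepares a superposition over all $d$ entries of $\boldsymbol{y}(\omega)$ simultaneously.

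The key step I would carry out is to build a block encoding of the (suitably normalized) mean vector $\mathbb{E}[\boldsymbol{y}]/\sqrt{d}$ from $\mathcal{O}(1)$ queries to $\mathcal{O}$ and $\mathcal{O}^\dagger$, by combining the coordinate register of the oracle's output with a uniform superposition over $[d]$ so that inner products $\langle \boldsymbol{u},\mathbb{E}[\boldsymbol{y}]\rangle$ for unit vectors $\boldsymbol{u}$ appear as amplitudes in a distinguished subspace. I would then apply amplitude / phase estimation to this block encoding along directions $\boldsymbol{u}$ chosen to expose each coordinate of $\mathbb{E}[\boldsymbol{y}]$. Because of the $1/\sqrt{d}$ normalization hidden in the block encoding, extracting each coordinate to additive precision $\epsilon$ requires amplitude estimation to precision $\epsilon/\sqrt{d}$, costing $\mathcal{O}(\sqrt{d}/\epsilon)$ queries per direction. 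Median-of-means boosting promotes the constant success probability of each estimation to $1-\delta/d$ at the cost of a $\log(d/\delta)$ factor, and a union bound over the $d$ coordinates yields the $\ell_\infty$ guarantee. The residual $\sqrt{\log(\cdot)}$ in the final query bound comes from carefully tracking the iteration count in amplitude estimation at precision $\epsilon/\sqrt{d}$ against the probability-amplification overhead.

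The main obstacle is designing the block encoding and its associated phase oracle so that the $\sqrt{d}$ normalization survives the passage from amplitude estimates of inner products to the $\ell_\infty$ guarantee on individual coordinates, without the errors of the $d$ separate amplitude estimation calls compounding additively in $d$. In particular, one must argue that repeated coherent use of $\mathcal{O}$ across different probe directions can be folded into a single amplitude-estimation routine whose total query cost is bounded by the single-direction cost times polylogarithmic factors, rather than by a linear factor in $d$. An alternative route via random Gaussian sketches of $\mathbb{E}[\boldsymbol{y}]$ followed by a linear-algebraic decoding faces the same core technical challenge of propagating $\ell_2$ precision into $\ell_\infty$ precision with only logarithmic overhead in $d$, and would form the principal technical content of the proof.
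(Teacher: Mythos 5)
There is nothing in the paper to compare your argument against: the paper does not prove this lemma, it imports it wholesale from the cited reference \cite{cornelissen2022near} and uses it as a black box. So the only question is whether your sketch would stand on its own as a proof of the cited result, and as written it does not. The gap is the one you yourself flag at the end: you propose to extract each coordinate of $\mathbb{E}[\boldsymbol{y}]$ by running amplitude estimation along a separate probe direction at cost $\mathcal{O}(\sqrt{d}/\epsilon)$ per direction, and then assert that the $d$ runs ``can be folded into a single amplitude-estimation routine'' with only polylogarithmic overhead. That folding \emph{is} the theorem. Without it your plan costs $d\cdot\mathcal{O}(\sqrt{d}/\epsilon)=\mathcal{O}(d^{3/2}/\epsilon)$ queries, which is worse even than the naive coordinate-wise reduction via \lem{quant-uni} (which already gives $\mathcal{O}(d\log(d/\delta)/\epsilon)$). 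Deferring the entire $\sqrt{d}$ savings to ``the principal technical content of the proof'' means no part of the claimed query bound has actually been established. There is also a secondary issue with the block-encoding step: a block encoding or amplitude encoding of $\mathbb{E}[\boldsymbol{y}]/\sqrt{d}$ built from $\mathcal{O}(1)$ oracle calls would encode $\mathbb{E}[\sqrt{y_j(\omega)}\,\cdot\,]$-type amplitudes rather than the mean itself unless you are careful with the $W$-style controlled rotation, and you have not specified how the per-coordinate additive error $\epsilon$ survives the $1/\sqrt{d}$ normalization together with the $\sqrt{a}/t+1/t^2$ error profile of amplitude estimation.

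For reference, the mechanism in \cite{cornelissen2022near} is genuinely different from what you propose: rather than amplitude-estimating inner products one direction at a time, it encodes the mean into \emph{phases} by implementing $\ket{\boldsymbol{u}}\mapsto e^{i\langle \boldsymbol{u},\boldsymbol{y}(\omega)\rangle}\ket{\boldsymbol{u}}$ in superposition over a $d$-dimensional grid of directions $\boldsymbol{u}$, so that averaging over the randomness yields approximately $e^{i\langle \boldsymbol{u},\mathbb{E}[\boldsymbol{y}]\rangle}$, and then a single multidimensional inverse quantum Fourier transform reads off all $d$ coordinates simultaneously; this is where all $d$ coordinates are handled in one coherent routine and the $\sqrt{d}$ (rather than $d$) dependence comes from. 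If you want to prove the lemma rather than cite it, that phase-estimation route (or an explicit construction realizing your claimed folding) is what you would need to supply.
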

As established in \lem{quant-uni} and \lem{quant-multi}, quantum Monte Carlo reduces the confidence‐interval length to $\tilde{\mathcal{O}}(1/N )$ when an arm is sampled $N$ times. By contrast, any classical algorithm that samples an arm $N$ times attains only $1/{\sqrt{N}}$ scaling. This nearly quadratic improvement in the confidence interval width is a key advantage of the quantum approach compared to classical techniques.

\section{Problem-Independent Quantum BwK}
In this section, we introduce a problem-independent quantum algorithm for the BwK problem.  Our approach builds upon the classical BwK algorithm proposed in Ref.~\cite{badanidiyuru2018bandits}, with the key modification of replacing classical oracles with quantum oracles. 
By leveraging quantum Monte Carlo methods, as stated in \lem{quant-uni}, our algorithm achieves a quadratic improvement in the estimation of confidence intervals, thereby enhancing the performance of the classical PrimalDualBwK algorithm.  

We present the pseudocode of our quantum problem-independent algorithm in \algo{quant-bwk1}. The algorithm utilizes quantum oracles for both reward and resource consumption for each arm $i$, as defined in \defn{quantum-oracle}. Each time the quantum reward oracle for arm $i$ is queried, the immediate expected reward $r_i$ is added to the total expected reward. To ensure that the resource constraint is not exceeded, the quantum resource consumption oracle is queried using an immediate measurement, acting as a classical oracle. As a result, the quantum Monte Carlo method is employed solely to estimate the UCB and LCB of rewards, while the UCB and LCB for resource consumption are computed classically.
During the execution of our algorithm, each arm is initially played for $N$ consecutive rounds to obtain preliminary estimates of UCB and LCB for both rewards and resource consumption vectors. The algorithm then proceeds by selecting the arm that maximizes the ``bang-per-buck" ratio, defined as the expected reward divided by the expected cost.  
Due to the fact that classical estimators allow per-round updates, whereas quantum estimators can only update the UCB and LCB after multiple oracle queries due to entanglement, the confidence bounds for resource consumption are updated classically after each pull of the selected arm, while the quantum confidence bounds for rewards are updated only after the arm has been pulled multiple times. We use $\boldsymbol{r}^U(s)$ and $\boldsymbol{C}^L(s)$ to represent the UCB of rewards and LCB of resource consumption vector at time $s$.

\vspace{4mm}
\begin{algorithm}[!htbp]
    \SetAlgoLined
    \caption{Problem-Independent Quantum BwK}
    \label{algo:quant-bwk1}
    
    \Begin{
    \tcp{Phase \uppercase\expandafter{\romannumeral1}: Initialization}
    \For{$i=1 \to m$}{
        $\mathrm{rad}_i \leftarrow 1$ and $N \leftarrow \frac{2C_1}{\mathrm{rad}_i} \log T$ \;
        Play arm $i$ for consecutive $N$ rounds \;
        Run QMC to get UCB $r^U_i(1) \in [0,1]$ of the reward and run classical algorithm to get LCB $\boldsymbol{C}^L_{\cdot,i}(1)$ for resource consumption\;
        $\mathrm{rad}_{i} \leftarrow \mathrm{rad}_{i}/2$\;
    }
    $\boldsymbol{v}(1) = \boldsymbol{1}_d $\;

    \tcp{Phase \uppercase\expandafter{\romannumeral2}: pick arms that maximize the "bang-per-buck" ratio}
    \For{time $s=1,2,\dots,\tau$}{

    Let $i(s) \leftarrow \arg\max_i r^U_i (s)/(\boldsymbol{C}^L_{\cdot,i} (s) \cdot \boldsymbol{v}(s)) $ \; 
    Play arm $i(s)$ once and
    get $\boldsymbol{C}^L_{\cdot,i(s)}(s+1)$ classically \;

    \If{ arm $i(s)$ has been played for $\frac{2C_1}{\mathrm{rad}_{i(s)}} \log T $ times since the last execution of QMC}
    {Run QMC to update $r^U_i (s)$\;
    $\mathrm{rad}_{i(s)} \leftarrow \mathrm{rad}_{i(s)}/2$ \;}
    
    Update $\boldsymbol{v}(s)$:
    \begin{equation*}
        v(s+1)_j = v(s)_j (1+\epsilon)^{C^L_{j,i(s)}(s)}
    \end{equation*}
    }
    }
\end{algorithm}
\vspace{4mm}

\begin{theorem}[Main theorem, problem-independent setting]\label{thm:main-independent}
    The expected regret of \algo{quant-bwk1} satisfies
    \begin{equation}
        \mathrm{OPT}_{\mathrm{LP}}-\mathrm{REW}\leq 
        \mathcal{O}\left(\sqrt{\log (dT)}\right)\left(\mathrm{OPT}_{\mathrm{LP}} \sqrt{\frac{m}{B}}\right) + \mathcal{O}\left(m\log(dT)\log T\right).
    \end{equation}
\end{theorem}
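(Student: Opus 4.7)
The plan is to adapt the primal-dual regret analysis of the classical PrimalDualBwK algorithm in~\cite{badanidiyuru2018bandits}, substituting the quantum univariate confidence bounds from \lem{quant-uni} for the reward estimators while retaining the classical multivariate Hoeffding bounds from \lem{classic-monto} for resource consumption. The qualitative improvement comes from the fact that after $N$ cumulative queries to the reward oracle, the quantum confidence half-width is $\tilde{\mathcal{O}}(\log T / N)$ instead of the classical $\mathcal{O}(\sqrt{\log T / N})$. This quadratic sharpening is precisely what removes the additive $(1 + \sqrt{B/\mathrm{OPT}_{\mathrm{LP}}})$ factor from the classical bound.

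First I would establish a \emph{clean event} on which every UCB and LCB ever computed by the algorithm brackets the corresponding true mean. Because $\mathrm{rad}_i$ halves geometrically, QMC is rerun for arm $i$ only $\mathcal{O}(\log T)$ times across the entire horizon, producing at most $\mathcal{O}(m \log T)$ reward confidence statements, alongside at most $T$ classical cost statements. Setting the per-call failure probability to $1/\mathrm{poly}(T)$ and applying a union bound keeps the clean event valid with probability $1 - \mathcal{O}(1/T)$, so the contribution to the expected regret from the failure event is $\mathcal{O}(1)$.

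Conditioned on the clean event, I would invoke the primal-dual decomposition of~\cite{badanidiyuru2018bandits}. The bang-per-buck selection rule, combined with the multiplicative-weights update on $\boldsymbol{v}(s)$, yields a bound of the schematic form
\begin{equation*}
\mathrm{OPT}_{\mathrm{LP}} - \mathrm{REW} \;\leq\; \underbrace{\sum_{s}\bigl(r_{i(s)}^U(s) - r_{i(s)}\bigr)}_{(\mathrm{A})\colon\ \text{reward slack}} \;+\; \frac{\mathrm{OPT}_{\mathrm{LP}}}{B}\cdot \underbrace{\sqrt{\log(dT)}\sum_{s}\bigl\|\boldsymbol{C}_{\cdot,i(s)} - \boldsymbol{C}_{\cdot,i(s)}^L(s)\bigr\|_\infty}_{(\mathrm{B})\colon\ \text{cost slack}},
\end{equation*}
plus lower-order multiplicative-weights regret terms. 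Term $(\mathrm{B})$ is handled exactly as in the classical analysis: summing Hoeffding widths $\sqrt{\log(dT)/n_i}$ across pulls and invoking Cauchy-Schwarz over the $m$ arms produces a contribution of order $\mathcal{O}(\sqrt{\log(dT)}) \cdot \mathrm{OPT}_{\mathrm{LP}} \sqrt{m/B}$, which is the surviving leading term in the theorem.

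The main technical step, and also the main obstacle, is bounding term $(\mathrm{A})$ under the quantum schedule. Let $n_i$ denote the total number of pulls of arm $i$. Between consecutive QMC updates, $\mathrm{rad}_i$ halves and the epoch length doubles, so after $t$ cumulative pulls the prevailing UCB width is $\mathcal{O}(\log T / t)$; summing,
\begin{equation*}
\sum_{i=1}^m \sum_{t=1}^{n_i} \mathcal{O}\!\left(\frac{\log T}{t}\right) \;=\; \mathcal{O}(m \log^2 T),
\end{equation*}
which is absorbed into the additive $\mathcal{O}(m \log(dT) \log T)$ term. This replaces the classical reward-slack bound of $\mathcal{O}(\sqrt{mT \log T})$, which is what produced the unwanted $\sqrt{m \cdot \mathrm{OPT}_{\mathrm{LP}}}$ summand. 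The delicate point I would need to handle carefully is \emph{staleness}: because the quantum estimator collapses on measurement, the UCB is frozen at its previous value throughout each doubling epoch. I would show that within a single epoch of length $\Theta(\log T / \mathrm{rad}_i)$, the total reward slack is only $\mathcal{O}(\log T)$ (since epoch length is inversely proportional to the width), so summing over $\mathcal{O}(\log T)$ epochs per arm and $m$ arms preserves the $\mathcal{O}(m \log^2 T)$ bound and yields the theorem.
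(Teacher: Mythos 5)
Your proposal is correct and follows essentially the same route as the paper: retain the classical primal-dual/multiplicative-weights machinery of \cite{badanidiyuru2018bandits} for the cost slack (yielding the $\mathcal{O}(\sqrt{\log(dT)})\,\mathrm{OPT}_{\mathrm{LP}}\sqrt{m/B}$ term via the analogue of \prop{mean-mu}), and bound the reward slack by observing that each QMC epoch has length inversely proportional to its frozen confidence width, so the slack per epoch is $\mathcal{O}(\log T)$ and summing over $\mathcal{O}(\log T)$ epochs per arm gives $\mathcal{O}(m\log^2 T)$ — exactly the computation in Eq.~\eq{rew}. Your treatment of the clean event and of staleness within epochs matches \lem{algo1-times}, \prop{bounds}, and the paper's per-epoch accounting.
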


The proof of \thm{main-independent} can be found in \append{proof in algo1}. In contrast to its classical counterpart, which establishes a regret bound of
\begin{equation}
\mathcal{O}\left(\sqrt{\log(dT)}\right)\left( \mathrm{OPT}_{\mathrm{LP}} \sqrt{\frac{m}{B}}\left(1+\sqrt{\frac{B}{\mathrm{OPT}_{\mathrm{LP}}}}\right)\right)+\mathcal{O}\left(m\log(dT)\log(T)\right).
\end{equation}
Our result in \thm{main-independent} achieves a tighter upper bound by a factor of $(1+\sqrt{B/\mathrm{OPT}_\mathrm{LP}})$. 
This indicates the advantage of quantum algorithms for solving the BwK problem. The analysis of the quantum time complexity is straightforward. The time complexity is determined by the quantum Monte Carlo method used for reward estimation. Specifically, a QMC estimation that makes $N$ queries to the oracle has a time complexity of $\tilde{\mathcal{O}}(N)$. Since the algorithm utilizes at most $T$ oracle queries in total, its overall quantum time complexity is $\tilde{\mathcal{O}}(T)$.

\section{Problem-Dependent Quantum BwK}
\subsection{Algorithm}
In this section, we introduce a problem-dependent quantum algorithm for the BwK problem, leveraging the primal-dual approach. Our algorithm builds upon the classical problem-dependent BwK algorithm proposed in \cite{li2021symmetry}. The key modifications in our approach exploit the quantum advantages of Monte Carlo methods, as outlined in \sec{monte carlo}, along with quantum speedups in solving LP problems. To this end, we first formally define the quantum LP solver.

\begin{definition}[$\epsilon_{\mathrm{LP}}$-approximate quantum LP solver]
\label{defn:lp solver setting}
Consider a linear programming (LP) problem in the form of \eq{primal-LP}: $\max_{\boldsymbol{\xi \in \mathbb{R}^n},\: \boldsymbol{\xi} \geq {0}}\:  \boldsymbol{r}^\top \boldsymbol{\xi} 
,\:\: \mathrm{s.t.}\: \boldsymbol{C}\boldsymbol{\xi} \leq  \boldsymbol{B}$, where all coefficients are bounded within the range $[-1,1]$. Let $m$ denote the number of variables and $d$ the number of constraints. We define OPT as the optimal value of the LP,  $\boldsymbol{\xi}$ as the optimal primal solution, and $\boldsymbol{\eta}$ as the optimal dual solution. Assume that $\sum_i \xi_i \leq R$ and $\sum_i \eta_i \leq r$. 
Given quantum oracles $\mathcal{O}_{\boldsymbol{r}}, \mathcal{O}_{\boldsymbol{C}}$, and $\mathcal{O}_{\boldsymbol{B}}$ that encode the coefficient matrix of the LP problem such that, 
\begin{equation*}
\begin{aligned}
    \mathcal{O}_{\boldsymbol{r}} \ket{i} \ket{0} &= \ket{i} \ket{r_{i}}, \text{ for $i \in [m]$} \\
    \mathcal{O}_{\boldsymbol{C}} \ket{i} \ket{j}\ket{0} &= \ket{i} \ket{j}\ket{C_{i,j}},\text{ for $i \in [d]$, $j \in [m]$},\\
    \mathcal{O}_{\boldsymbol{B}} \ket{i}\ket{0} &= \ket{i} \ket{B_{i}}, \text{ for $i \in [d]$} .
\end{aligned}
\end{equation*}
An quantum LP solver is supposed to compute an $\epsilon_{\mathrm{LP}}$-optimal and $\epsilon_{\mathrm{LP}}$-feasible solution $\boldsymbol{\xi}^* \geq 0 $, satisfying 
\begin{equation*}
    \vert \boldsymbol{r}^\top \boldsymbol{\xi}^* - \mathrm{OPT} \vert \leq \epsilon_{\mathrm{LP}} \quad \mathrm{and} \quad  \boldsymbol{C}\boldsymbol{\xi}^* \leq  \boldsymbol{B} + \epsilon_{\mathrm{LP}}.
\end{equation*}
\end{definition}

Prior work has established the following result for solving LPs by quantum algorithms:

\begin{lemma}[Quantum LP solver \cite{bouland2023quantum,gao2024logarithmic}] 
\label{lem:quantum solver 1}
 Under the setting of \defn{lp solver setting}, there exists a quantum algorithm that solves LP problems with time complexity $\tilde{\mathcal{O}}\left( \frac{\sqrt{d+m}}{\left( \epsilon_{\mathrm{LP}} /Rr \right)^{2.5} }\right)$. 
\end{lemma}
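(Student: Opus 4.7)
The plan is to reduce the LP to an approximate feasibility problem for a zero-sum matrix game, then run a multiplicative-weights (MWU) outer loop whose per-iteration cost is sped up quadratically by quantum Gibbs sampling. The rescaling bounds $R$ and $r$ on the primal/dual $\ell_1$ norms enter because the MWU framework requires decision variables that live in simplices; after dividing $\boldsymbol{\xi}$ by $R$ and $\boldsymbol{\eta}$ by $r$, the additive error target becomes $\epsilon_{\mathrm{LP}}/(Rr)$, which explains the normalized accuracy $\epsilon_{\mathrm{LP}}/Rr$ in the stated bound.

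First I would write down the Lagrangian $L(\boldsymbol{\xi},\boldsymbol{\eta}) = \boldsymbol{r}^\top\boldsymbol{\xi} - \boldsymbol{\eta}^\top(\boldsymbol{C}\boldsymbol{\xi}-\boldsymbol{B})$ and, via LP duality, observe that an $\epsilon$-approximate saddle point of $L$ over the rescaled simplices $\{\boldsymbol{\xi}\ge 0 : \sum_i \xi_i \le R\}$ and $\{\boldsymbol{\eta}\ge 0 : \sum_j \eta_j \le r\}$ yields an $(\epsilon_{\mathrm{LP}})$-optimal and $(\epsilon_{\mathrm{LP}})$-feasible primal solution of the original LP, once $\epsilon$ is set to $\Theta(\epsilon_{\mathrm{LP}}/Rr)$. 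Standard slack-variable tricks (adding an extra coordinate to absorb the inequality $\sum_i \xi_i \le R$, and similarly on the dual side) reduce the problem to computing an approximate equilibrium of a zero-sum game whose payoff matrix has entries in $[-1,1]$ and dimensions $O(m)\times O(d)$.

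Next I would invoke the MWU equilibrium-finding scheme of Grigoriadis--Khachiyan: at each iteration, one of the two players maintains a Gibbs distribution $p_i \propto \exp(-\alpha \sum_{t}L_{t,i})$ over its pure strategies, and the other player best-responds. Convergence to an $\epsilon'$-equilibrium requires $T_{\mathrm{iter}} = \tilde{\mathcal{O}}(1/\epsilon'^2)$ iterations with $\epsilon' = \epsilon_{\mathrm{LP}}/Rr$. Classically each iteration costs $O(m+d)$ to update weights and sample; the quantum speedup replaces this step with a quantum Gibbs sampler and a quantum estimator for the best-response coordinate, giving per-iteration cost $\tilde{\mathcal{O}}(\sqrt{m+d}/\epsilon')$ (the extra $1/\epsilon'$ arises from amplitude-estimation precision needed to read off an inner product accurately enough for MWU to remain stable). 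Multiplying iterations by per-iteration cost yields total time $\tilde{\mathcal{O}}(\sqrt{m+d}\cdot \epsilon'^{-2.5}) = \tilde{\mathcal{O}}(\sqrt{m+d}/(\epsilon_{\mathrm{LP}}/Rr)^{2.5})$, matching the claim.

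The main obstacle, which is also the technically deepest part of \cite{bouland2023quantum,gao2024logarithmic}, is implementing the quantum Gibbs sampler within the requisite total-variation accuracy while using only $\tilde{\mathcal{O}}(\sqrt{m+d})$ queries to the QRAM-based oracles $\mathcal{O}_{\boldsymbol{r}}, \mathcal{O}_{\boldsymbol{C}}, \mathcal{O}_{\boldsymbol{B}}$, and then propagating the sampling and amplitude-estimation errors through the MWU analysis without blowing up the iteration count. Showing that these noisy quantum subroutines can be plugged into the classical MWU regret bound while preserving the $1/\epsilon'^2$ iteration complexity, and giving the careful accounting that trades $\sqrt{\log(1/\delta)}$-style failure probabilities across $T_{\mathrm{iter}}$ rounds via union bounds, is the delicate portion; the rest of the argument reduces to bookkeeping and the standard MWU regret lemma.
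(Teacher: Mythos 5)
Your overall plan follows the same route the paper takes: this lemma is imported from the cited references, and the paper's own justification (in \append{LP solver} and the proof of \prop{time-lp}) is precisely the reduction you describe --- normalize the primal and dual by $R$ and $r$ so that the required additive accuracy becomes $\Theta(\epsilon_{\mathrm{LP}}/(Rr))$, recast the LP as a bounded zero-sum game, and invoke the quantum game solver of \cite{bouland2023quantum,gao2024logarithmic}. The only structural difference is that the paper uses the van Apeldoorn--Gily\'en reduction (binary search over $\mathrm{OPT}\in[-R,R]$, each step a feasibility game with the explicit matrix $\boldsymbol{A}''$), whereas you go directly through a Lagrangian saddle point over rescaled simplices; both are standard and the $\log(R/\epsilon)$ binary-search overhead is absorbed into $\tilde{\mathcal{O}}$, so this difference is immaterial.

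There is, however, a genuine gap in your complexity accounting. You state $T_{\mathrm{iter}}=\tilde{\mathcal{O}}(1/\epsilon'^2)$ iterations with per-iteration cost $\tilde{\mathcal{O}}(\sqrt{m+d}/\epsilon')$, and then claim the product is $\tilde{\mathcal{O}}(\sqrt{m+d}\cdot\epsilon'^{-2.5})$; but $(1/\epsilon'^2)\cdot(\sqrt{m+d}/\epsilon')=\sqrt{m+d}/\epsilon'^{3}$, which is the complexity of the \emph{earlier} quantum zero-sum-game algorithm of van Apeldoorn--Gily\'en, not the $\epsilon'^{-2.5}$ bound being proved. The entire point of \cite{bouland2023quantum,gao2024logarithmic} --- and the reason the exponent drops from $3$ to $2.5$ --- is the improved \emph{dynamic} Gibbs sampler, which amortizes the sampling cost across the MWU iterations so that the total Gibbs-sampling work over all $\tilde{\mathcal{O}}(1/\epsilon'^2)$ rounds is $\tilde{\mathcal{O}}(\sqrt{m+d}/\epsilon'^{2.5})$ rather than $\tilde{\mathcal{O}}(\sqrt{m+d}/\epsilon'^{3})$. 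You correctly flag the Gibbs sampler as the technically deepest ingredient, but the per-iteration cost you assign to it is the one that does not yield the claimed exponent; as written, your argument proves a weaker bound than the lemma states. To close the gap you would need to replace the static per-iteration estimate with the amortized analysis of the dynamic sampler (or simply cite the game-solving theorem of those papers as a black box, which is what the paper itself does).
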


In \algo{quant-bwk2}, we employ the quantum LP solver defined in \lem{quantum solver 1} to solve the corresponding LP problems. \lem{quantum solver 1} are based on
the quantum speedup for zero-sum games via Gibbs sampling, combined with the reduction of general LP problems to zero-sum games, as established in \cite{van2019quantum}. This method also achieves a quadratic speedup with respect to the number of constraints $d$ and variables $m$ compared to classical approximate LP solvers, whose time complexity—based on the classical zero-sum game algorithm in \cite{grigoriadis1995sublinear}—is $\tilde{\mathcal{O}}\left( \frac{d+m}{\left( \epsilon_{\mathrm{LP}} /Rr \right)^2} \right)$. A more detailed discussion on quantum and classical LP solvers is provided in \append{LP solver}.

\begin{algorithm}[!htbp]
    \SetAlgoLined
    \caption{Problem-Dependent Quantum BwK}
    \label{algo:quant-bwk2}
    \small
    
    \Begin{
    \tcp{Phase \uppercase\expandafter{\romannumeral1}: Identification of $\mathcal{I}^*$ and $\mathcal{J}^{\prime}$ }
    Initialize $\mathcal{I}^* = \mathcal{J}^{\prime} = \emptyset$, $r = 1$, $t=0$, $\boldsymbol{B}^{(0)} = \boldsymbol{B}$\; $\theta=\min\left\{\frac{\min\{1,\sigma^2\}\min\{\chi,\delta\}}{12\min\{m^2,d^2\}},\left(2+\frac{1}{b}\right)^{-2}\cdot\frac{\delta}{5}
 \right\}$, $\epsilon=\frac{\min\{1,\sigma\} \min\{\chi,\delta\}b}{5d^{3/2}}$\;
    \While{$\vert \mathcal{I}^* \vert +\vert \mathcal{J}^{\prime} \vert \leq d $}{
        \For{$i=1 \to m$}{
            $N \leftarrow \frac{ \log T}{r} $, $r\leftarrow r/2$ \; 
            Play arm $i$ for $N$ consecutive rounds (by its quantum reward and resource consumption oracles)\;
            Run QMC to get UCB and LCB for the reward and resource consumption of arm $i$ \;
            $t \leftarrow t + N$ \;
        }
        Solve the LCB problem for $\mathrm{OPT}^L_\mathrm{LP}(t)$ in \eq{lcb-lp}.

        \For{$i \not \in \mathcal{I}^{*}$}{
        Solve the UCB problem for $\mathrm{OPT}_i^U(t)$ in \eq{ucb-i}.
    
        \If{$\mathrm{OPT}^L_{\mathrm{LP}}(t) > \mathrm{OPT}^U_i (t)$}{Update $\hat{\mathcal{I}}^*=\hat{\mathcal{I}}^* \cup \{i\} $}
        }

        \For{$j \not \in \mathcal{J}^{\prime}$}{
        Solve the UCB problem for $\mathrm{OPT}_j^U(t)$ in \eq{ucb-j}.
        
        \If{$\mathrm{OPT}^L_{\mathrm{LP}}(t) > \mathrm{OPT}^U_j (t)$}{Update $\hat{\mathcal{J}}^{\prime}=\hat{\mathcal{J}}^{\prime}\cup\{j\}$}
        }
        Update $t=t+1$ and $ \boldsymbol{B}^{(t)}$ \;
    }

    \tcp{Phase \uppercase\expandafter{\romannumeral2}: Exhausting the binding resources}
    \While{$t \leq \tau $ }
    {
    \If{$\Vert \boldsymbol{C}_{\cdot,i}^L(t)-\boldsymbol{C}_{\cdot, i} \Vert_\infty > \theta$ for some $i \in \mathcal{I}^*$ or $\boldsymbol{B}^{(t-1)}/(T-t+1) \not \in [\boldsymbol{b}-\epsilon,\boldsymbol{b}+\epsilon]$}{Solve the following LP
    \begin{equation} 
    \begin{aligned}
    \label{eq:residual-lp}
        \max_{\boldsymbol{\xi}}\quad & \left(\boldsymbol{r}^U(t-1)\right)^\top \boldsymbol{\xi},\\
        \mathrm{s.t.}\quad  & \boldsymbol{C}^L(t-1) \boldsymbol{\xi} \leq \boldsymbol{B}^{(t-1)},\\
        & \xi_i=0,\:i \notin \hat{\mathcal{I}}^*, \\
        & \boldsymbol{\xi} \geq \boldsymbol{0} .
    \end{aligned}
    \end{equation}}
     \Else{Solve the following LP
    \begin{equation} 
    \begin{aligned}
    \label{eq:residual-lp2}
        \max_{\boldsymbol{\xi}}\quad & \left(\boldsymbol{r}^U(t-1)\right)^\top \boldsymbol{\xi},\\
        \mathrm{s.t.}\quad  
        & \boldsymbol{C}^L(t-1) \boldsymbol{\xi} \leq \boldsymbol{B}^{(t-1)},\\
        & \boldsymbol{C}^L_{\mathcal{J}^*,\mathcal{I}^*}(t-1) \boldsymbol{\xi}_{\mathcal{I}^*} \geq \boldsymbol{B}^{(t-1)}_{\mathcal{I}^*},\\
        & \xi_i=0,\:i \notin \hat{\mathcal{I}}^*, \\
        & \boldsymbol{\xi} \geq \boldsymbol{0} .
    \end{aligned}
    \end{equation}}

    Denote its optimal solution as $\tilde{\xi}$ \;
    Normalize $\tilde{\xi}$ into a probability and randomly play an arm according to the probability. (Using classical reward and resource consumption oracles) \;
    Update UCB and LCB of rewards and resources consumption by classical Monte Carlo \;
    Update $\boldsymbol{B}^{(t)}$ and $t=t+1$ \;
    }
    }
\end{algorithm}
\vspace{4mm}

We present the pseudocode of our algorithm in \algo{quant-bwk2}. At time step $t$, we denote by $\boldsymbol{r}^U(t)$ and $\boldsymbol{r}^L(t)$ the UCB and LCB of the reward vector, and by $\boldsymbol{C}^U(t)$ and $\boldsymbol{C}^L(t)$ the UCB and LCB of the resource consumption matrix, respectively. We let $\boldsymbol{B}^{(t)}$ denote the remaining resource budget at time $t$. The algorithm comprises two phases. In Phase \uppercase\expandafter{\romannumeral1}, our objective is to identify all optimal arms $i \in \mathcal{I}^* $ and non-binding constraints $j \in \mathcal{J}^\prime $. In each iteration of the while loop in Phase \uppercase\expandafter{\romannumeral1}, the algorithm plays all arms multiple times and queries the quantum reward and resource consumption oracles. Subsequently, the quantum Monte Carlo method is employed to compute the UCB and LCB for the reward and resource consumption of each arm. To formalize the decision-making process, we define a sequence of LPs that are solved in this phase. First, we introduce $\mathrm{OPT}^L_{\mathrm{LP}}(t)$, which provides a conservative estimate of the achievable reward under the UCB estimates of resource consumption and LCB estimates of reward:
\begin{equation}
\label{eq:lcb-lp}
\begin{aligned}
        \mathrm{OPT}^L_{\mathrm{LP}}(t):=
        \max_{\boldsymbol{\xi}}\quad & {\boldsymbol{r}^L}(t)^\top \boldsymbol{\xi} \\
        \mathrm{s.t.}\quad & {\boldsymbol{C}^U(t)} \boldsymbol{\xi} \leq  \boldsymbol{B}. \\
        & \boldsymbol{\xi} \geq {0}.
\end{aligned}
\end{equation}
Next, for each arm $i$, we formulate an LP to evaluate whether excluding this will degrade the optimal value.
\begin{equation}
\label{eq:ucb-i}
\begin{aligned}
    \mathrm{OPT}_i^U(t):=
    \max_{\boldsymbol{\xi}} \quad & {\boldsymbol{r}^U(t)}^\top \boldsymbol{\xi}, \\
    \mathrm{s.t.} \quad & \boldsymbol{C}^L(t) \boldsymbol{\xi} \leq \boldsymbol{B}, \\
    & \xi_i=0, \boldsymbol{\xi} \geq \boldsymbol{0}.
\end{aligned}
\end{equation}
Finally, to identify non-binding constraints, we formulate the following dual LP:
\begin{equation}
\label{eq:ucb-j}
    \begin{aligned}
    \mathrm{OPT}_j^U(t) :=
    \min_{\boldsymbol{\eta}}\quad & \boldsymbol{B}^\top \boldsymbol{\eta} - B,\\
    \mathrm{s.t.}\quad & {\boldsymbol{C}^L(t)}^\top \boldsymbol{\eta} \geq \boldsymbol{r}^U(t) +\boldsymbol{C}_{j,\cdot}^U(t),\\
    &\boldsymbol{\eta} \geq0 .
\end{aligned}
\end{equation}
As we know in \prop{optimal gap}, there exists a gap in the optimal value between $\mathrm{OPT}_{\mathrm{LP}}$ and $\mathrm{OPT}_i$ (or $\mathrm{OPT}_j$). Provided that the expectations of $\boldsymbol{r}$ and $\boldsymbol{C}$ lie within the confidence interval, it is evident that no matter classical or quantum algorithms are used to estimate the UCB and LCB, the following conditions hold: $\mathrm{OPT}^L_{\mathrm{LP}}(t) \leq \mathrm{OPT}_{\mathrm{LP}}$, $\mathrm{OPT}_i^U(t) \geq \mathrm{OPT}_i$, and $\mathrm{OPT}_j^U (t)\geq \mathrm{OPT}_j$. Consequently, by solving the UCB and LCB LP problems defined in \eq{ucb-i} and \eq{ucb-j} for each arm $i$ and resource $j$, and by comparing these results with \eq{lcb-lp}, we can identify the optimal arms $i \in \mathcal{I}^* $ and non-binding constraints $j \in \mathcal{J}^\prime $ with high probability.

Phase \uppercase\expandafter{\romannumeral2} of \algo{quant-bwk2} leverages the optimal arms $i \in \mathcal{I}^* $ identified in Phase \uppercase\expandafter{\romannumeral1}. In each round of this phase, an adaptive LP approach \eq{residual-lp} is employed to efficiently utilize the remaining resources constrained by binding conditions. When the additional conditions on $\boldsymbol{C}_{\cdot,i}^L(t)$ and $\boldsymbol{B}^{(t-1)}$ are satisfied—namely, when the estimated resource consumption is accurate and the budget is being appropriately utilized—the LP in \eq{residual-lp} becomes equivalent to \eq{residual-lp2}, yielding a more structured allocation that guarantees consumption requirements on the submatrix $\boldsymbol{C}^L_{\mathcal{J}^*,\mathcal{I}^*}(t)$ (see \append{proof-regret-2} for details).
In both formulations, the decision variables $\xi_i$ for arms $i \not \in \mathcal{I}^*$ are fixed at zero, and the available capacity is updated to $\boldsymbol{B}^{(t-1)}$ rather than remaining at its initial value $\boldsymbol{B}$. Once the LP is solved, the optimal solution of this LP is then normalized into a probability distribution, according to which an arm is randomly selected in each round. 
In this phase, to fully utilize the remaining resources, we use the quantum oracle for reward and resource consumption as a classical oracle, meaning a measurement is performed after each oracle call. 
Classical Monte Carlo methods are employed to estimate the UCB and LCB for both rewards and resource consumption. This approach, utilizing quantum oracles in a classical manner, is driven by the necessity for precise control over the probability of playing each optimal arm, ensuring the efficient depletion of the remaining resources. 
Conversely, using quantum Monte Carlo methods in this phase would be less efficient. Each update of the UCB and LCB estimates would necessitate multiple pulls of an arm to refine, as quantum Monte Carlo methods rely on the superposition of quantum states and cannot directly improve these estimates from a single pull. Consequently, solving the LP in each round while pulling an arm multiple times—rather than just once—would reduce the probability of selecting other arms, thereby compromising the efficient exhaustion of the remaining resources.

\subsection{Time complexity analysis}
\label{sec:time complexity analysis}

In \algo{quant-bwk2}, the main time complexity arises from building the LP oracles and solving the LP problems. 
Specifically, in Phase \uppercase\expandafter{\romannumeral1} during each round, all arms are played multiple times, leading to updates in the UCB and LCB for both rewards and resource consumptions. Consequently, the coefficients in the LP formulations—namely Eqs.~\eq{lcb-lp}, \eq{ucb-i}, and \eq{ucb-j}—must be updated accordingly. Moreover, Eq.~\eq{ucb-i} needs to be solved for every $i \notin \mathcal{I}^*$ and Eq.~\eq{ucb-j} for every $j \notin \mathcal{J}^\prime$. 
In Phase \uppercase\expandafter{\romannumeral2}, at each time step we need to solve an LP problem to decide which arm to pull. 
In terms of computational complexity, classical LP solvers exhibit certain limitations. While the simplex method is efficient in many practical scenarios, it has exponential worst-case complexity. The interior point method, which produces high-accuracy solutions, requires $n^\omega$ time, where $\omega < 2.371339$ is the matrix multiplication exponent \cite{alman2025more}. 
This becomes less desirable when the number of arms $m$ and the number of resource constraints $d$ are large.

In fact, we will show that \algo{quant-bwk2} is robust to approximation errors in solving the LP problems in \sec{robustness}. Specifically, we prove that using LP solutions that are only $\epsilon_{\mathrm{LP}}$-feasible and $\epsilon_{\mathrm{LP}}$-optimal does not compromise the algorithm's performance or increase its regret. This allows us to employ approximate LP solvers with significantly lower computational complexity, especially in terms of matrix dimensions.
Classically, such approximate solutions can be obtained with a time complexity of $\tilde{O}((m + d)/\epsilon_{\mathrm{LP}}^2)$~\cite{grigoriadis1995sublinear}. In comparison, the quantum LP solver in \lem{quantum solver 1} achieves a time complexity of $\tilde{O}(\sqrt{m + d}/\epsilon_{\mathrm{LP}}^{2.5})$, 
providing a quadratic speedup with respect to the number of variables $m$ and constraints $d$. We summarize the required solution accuracy for ensuring algorithmic robustness and the corresponding time complexity in \prop{time-lp}.

\begin{proposition}
\label{prop:time-lp}
    Using the quantum LP solver in \lem{quantum solver 1}, the LP problems in \eq{lcb-lp},\eq{ucb-i}, and \eq{ucb-j} can be solved to $\epsilon_{\mathrm{LP}}\cdot T$-optimal and $\epsilon_{\mathrm{LP}}\cdot T$-approximate in quantum time complexity $\tilde{\mathcal{O}}\left(\sqrt{m+d}/\epsilon_{\mathrm{LP}}^{2.5} \right)$. The LPs in \eq{residual-lp} and \eq{residual-lp2} can be solved to $\epsilon_{\mathrm{LP}}\cdot (T-t)/\log^2T $-optimal and $\epsilon_{\mathrm{LP}}\cdot (T-t)/\log^2T $-approximate in quantum time complexity $\tilde{\mathcal{O}}\left(\sqrt{m+d}\log^5 T/\epsilon_{\mathrm{LP}}^{2.5}\right)$. 
\end{proposition}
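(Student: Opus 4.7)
}
The plan is to apply \lem{quantum solver 1} to each of the five LP instances, handling the normalization needed so that the coefficients lie in $[-1,1]$ and carefully tracking how $R$, $r$, and the target precision translate between the original and rescaled problems. The first step is to rescale the decision variables. For \eq{lcb-lp}, \eq{ucb-i}, and \eq{ucb-j}, we write $\tilde{\boldsymbol{\xi}} = \boldsymbol{\xi}/T$ so that the budget becomes $\boldsymbol{B}/T = b\boldsymbol{1}$ and the rescaled objective and constraint coefficients lie in $[0,1]\subset[-1,1]$. The equality constraints of the form $\xi_i=0$ (if present) are handled by variable elimination. An identical rescaling $\tilde{\boldsymbol{\xi}} = \boldsymbol{\xi}/(T-t+1)$ is used for \eq{residual-lp} and \eq{residual-lp2}, since the effective time budget at step $t$ is $B_1^{(t-1)} = b(T-t+1)$.

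The second step is to bound the $L_1$ norms $R$ and $r$ needed in \lem{quantum solver 1}. For the rescaled primal, the time constraint $b\sum_i \tilde{\xi}_i \le b$ immediately yields $R=1$. For the rescaled dual, the objective $b\,\boldsymbol{1}^\top\boldsymbol{\eta}$ equals (at optimum, by strong duality) the rescaled primal optimum, which is bounded by $1$ since the rewards lie in $[0,1]$ and $\sum_i \tilde{\xi}_i \le 1$. Hence $r=\mathcal{O}(1/b)$, which is treated as an $\tilde{\mathcal{O}}$-factor throughout (and the same argument applies to the UCB/LCB variants of the LPs, whose coefficients differ from the true ones only through the confidence bounds and thus inherit the same $[0,1]$-range). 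For \eq{residual-lp2}, the extra inequality $\boldsymbol{C}^L_{\mathcal{J}^*,\mathcal{I}^*}\boldsymbol{\xi}_{\mathcal{I}^*} \geq \boldsymbol{B}^{(t-1)}_{\mathcal{J}^*}$ adds at most $d$ dual variables, but their total $L_1$ mass is also controlled by the same strong duality argument applied to the modified LP, giving $r=\tilde{\mathcal{O}}(1)$ again.

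The third step is to translate the target accuracy. An $\epsilon_{\mathrm{LP}}\cdot T$-optimal/feasible solution of the original \eq{lcb-lp} corresponds to an $\epsilon_{\mathrm{LP}}$-optimal/feasible solution of its rescaled counterpart, because the scaling $\tilde{\boldsymbol{\xi}} = \boldsymbol{\xi}/T$ multiplies both objective and constraint residuals by $1/T$. Plugging $R=1$, $r=\tilde{\mathcal{O}}(1)$, and target precision $\epsilon_{\mathrm{LP}}$ into \lem{quantum solver 1} yields the stated $\tilde{\mathcal{O}}(\sqrt{m+d}/\epsilon_{\mathrm{LP}}^{2.5})$ time bound for the three Phase~I LPs. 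For \eq{residual-lp} and \eq{residual-lp2}, the target $\epsilon_{\mathrm{LP}}(T-t)/\log^2 T$-accuracy in the original LP corresponds to a target precision of $\epsilon_{\mathrm{LP}}/\log^2 T$ in the rescaled problem; substituting into \lem{quantum solver 1} produces a factor $(\log^2 T)^{2.5}=\log^5 T$ in the numerator, yielding the claimed $\tilde{\mathcal{O}}(\sqrt{m+d}\log^5 T/\epsilon_{\mathrm{LP}}^{2.5})$.

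The main obstacle I anticipate is tracking the dual-norm bound $r$ cleanly across all five LPs while keeping the coefficients in $[-1,1]$, since the UCB/LCB variants replace $\boldsymbol{C}$ and $\boldsymbol{r}$ by confidence-bound estimates (which still lie in $[0,1]$ with high probability) and the residual LPs modify the RHS to $\boldsymbol{B}^{(t-1)}$ (which does not change the rescaling argument but must be verified to still admit the strong-duality bound on $\sum_j \eta_j$). Provided these straightforward but slightly tedious checks hold, the complexity bounds follow directly from \lem{quantum solver 1} with the $R$, $r$, and precision translations described above.
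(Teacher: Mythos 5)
Your proposal is correct and follows essentially the same route as the paper: rescale the decision variables by $T$ (resp.\ $T-t$) so that coefficients lie in $[-1,1]$ and the primal/dual $\ell_1$ bounds become $R=1$ and $r=\mathcal{O}(1/b)$, translate the target accuracy accordingly, and invoke \lem{quantum solver 1}, with the $(\log^2 T)^{2.5}=\log^5 T$ factor arising from the tighter precision required for the residual LPs. The only cosmetic difference is that the paper additionally unpacks the reduction to a zero-sum game inside the solver, whereas you use the lemma as a black box, which is equivalent.
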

\prop{time-lp} specifies the required precision for approximately solving the LP problems using quantum solvers. The scaling factors of $T$ and $(T-t)/\log^2(T)$ in the precision requirements are due to the scaling of the respective LP problems. A detailed proof of this proposition is provided in \append{proof of time-lp}.
Furthermore, the parameter $\epsilon_{\mathrm{LP}}$ is a problem-dependent constant with a poly-logarithmic dependence on $T$. In \sec{robustness}, we will establish a formal bound on $\epsilon_{\mathrm{LP}}$. Building upon these results, we then derive the expected time complexity of \algo{quant-bwk2}, as stated in \thm{time-complexity}. 
\begin{theorem}[Time complexity]
\label{thm:time-complexity}
The overall time complexity of \algo{quant-bwk2} with the quantum LP solver in \lem{quantum solver 1} is
\begin{equation}
\tilde{\mathcal{O}} \left( md + \frac{(m + d)^{1.5}}{\epsilon_{\mathrm{LP}}^{2.5}} + \left( \frac{\sqrt{m + d}}{\epsilon_{\mathrm{LP}}^{2.5}} + d \right) T \right).
\end{equation}
\end{theorem}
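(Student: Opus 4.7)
}
The plan is to separately bound the cost of Phase I and Phase II of \algo{quant-bwk2}, charging (i) the one–time construction of the QRAM that backs the quantum LP oracles of \defn{lp solver setting}, (ii) the quantum LP invocations whose per-call cost is supplied by \prop{time-lp}, and (iii) the classical bookkeeping for updating the UCB/LCB coefficients between LP solves. I will then combine the three buckets and verify they collapse to the claimed expression.

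First, for Phase I, I would observe that the per–arm sample budget $N=\log T / r$ doubles with every outer iteration (because $r\leftarrow r/2$), while the total number of oracle queries issued in this phase is bounded by $T$; hence the outer \textbf{while} loop executes at most $\mathcal{O}(\log T)$ times. In each iteration, the algorithm first updates the QRAM entries that feed the LP oracles, and then solves one LCB LP \eq{lcb-lp} together with at most $m$ LPs of the form \eq{ucb-i} and at most $d$ LPs of the form \eq{ucb-j}, giving $\mathcal{O}(m+d)$ LPs per round. Using \prop{time-lp}, each such LP runs in $\tilde{\mathcal{O}}(\sqrt{m+d}/\epsilon_{\mathrm{LP}}^{2.5})$ quantum time, so Phase I accumulates $\tilde{\mathcal{O}}((m+d)^{1.5}/\epsilon_{\mathrm{LP}}^{2.5})$ after absorbing the $\log T$ factor into $\tilde{\mathcal{O}}$. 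The initial loading of the $md$ coefficient entries of $\boldsymbol{C}$ into QRAM contributes the additive $\tilde{\mathcal{O}}(md)$ term, since each insertion costs $\tilde{\mathcal{O}}(1)$ by the QRAM model of \cite{giovannetti2008quantum,kerenidis2016quantum}.

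Next, for Phase II, the loop runs for at most $T$ time steps. In every step the algorithm (a) solves one LP — either \eq{residual-lp} or \eq{residual-lp2} — for which \prop{time-lp} provides a cost of $\tilde{\mathcal{O}}(\sqrt{m+d}/\epsilon_{\mathrm{LP}}^{2.5})$ (the $\log^5 T$ overhead coming from the $(T-t)/\log^2 T$ precision scaling is folded into $\tilde{\mathcal{O}}$), and (b) performs a single classical arm pull, which produces a $d$-dimensional consumption sample that must be pushed into the QRAM and combined with the running LCB/UCB estimates at cost $\mathcal{O}(d)$. Summed over the at most $T$ rounds of Phase II, this gives $\tilde{\mathcal{O}}\bigl((\sqrt{m+d}/\epsilon_{\mathrm{LP}}^{2.5}+d)T\bigr)$.

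Adding the two phases produces
\begin{equation*}
\tilde{\mathcal{O}}\!\left(md\;+\;\frac{(m+d)^{1.5}}{\epsilon_{\mathrm{LP}}^{2.5}}\;+\;\left(\frac{\sqrt{m+d}}{\epsilon_{\mathrm{LP}}^{2.5}}+d\right)T\right),
\end{equation*}
matching the statement of \thm{time-complexity}. The only subtlety I expect to have to address carefully is the Phase I iteration count: one must confirm that the termination rule $\lvert\hat{\mathcal I}^*\rvert+\lvert\hat{\mathcal J}'\rvert>d$ is triggered within $\mathcal{O}(\log T)$ doubling rounds rather than after a genuinely polynomial-in-$T$ wait, which follows from the geometric growth of $N$ together with the accuracy guarantees of the quantum estimator in \lem{quant-uni}–\lem{quant-multi} that force $\mathcal{I}^*$ and $\mathcal{J}'$ to be identified once the total sample size per arm exceeds a problem-dependent constant. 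All remaining steps — bounding per-iteration LP counts, charging QRAM updates, and invoking \prop{time-lp} — are routine and do not require new technical machinery.
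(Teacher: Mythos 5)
Your proposal is correct and follows essentially the same route as the paper's proof: the same Phase~I/Phase~II decomposition, the same count of $\mathcal{O}(m+d)$ LPs per Phase~I round and one LP per Phase~II round, the same per-LP costs from \prop{time-lp}, and the same $\tilde{\mathcal{O}}(md)$ and $\tilde{\mathcal{O}}(d)$ QRAM charges. The only cosmetic differences are that you bound the Phase~I iteration count by $\mathcal{O}(\log T)$ via the total query budget rather than via the $\tilde{\mathcal{O}}(\sqrt{d}\log T)$ per-arm pull bound of \prop{phase 1}, and you fold the QMC sampling cost into that budget rather than itemizing it as the paper does; both choices are immaterial under the $\tilde{\mathcal{O}}$ notation.
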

For comparison, when the original classical algorithm of \cite{li2021symmetry} employs interior‑point methods for its LP subroutines~\cite{jiang2020faster}, its overall time complexity is
\begin{equation}
\tilde{\mathcal{O}}\left( \max\{m, d\}^{2.372} (T + m + d) \right).
\end{equation}
\thm{time-complexity} shows that the quantum LP solver yields a polynomial speedup over classical counterparts with respect to the number of arms $m$ and the number of constraints $d$. (Note that for different parameter ranges of $m$ and $d$, especially when $m$ is large, Ref.~\cite{van2021minimum} gave an LP solver with time complexity $\tilde{O}(md + d^{2.5})$, resulting in a classical algorithm with total time complexity $\tilde{O}((md + d^{2.5})(T+m+d))$. Nevertheless, our quantum algorithms still achieve polynomial speedup in both $m$ and $d$.) 
The proof of this theorem is provided in \append{proof-time}.

\subsection{Regret analysis and robustness of solving LP problems approximately} 
\label{sec:robustness}
The following regret analysis establishes that the algorithm is robust to the aforementioned LP approximation—namely, solving LPs approximately does not degrade performance or increase regret. This robustness is formally established through a detailed analysis, culminating in the regret bound stated in \thm{quant-bwk2}. 

\paragraph{Regret analysis of Phase \uppercase\expandafter{\romannumeral1}}
\
In Phase \uppercase\expandafter{\romannumeral1}, the algorithm identifies the set of optimal arms $\mathcal{I}^*$ and the set of non-binding constraints $\mathcal{J}^\prime$. We first establish a bound on the number of times, denoted by $N$, that each arm is played by the end of this phase. Specifically, the proof consists of two steps. First, it is straightforward that $\hat{\mathcal{I}}^*$ and $\hat{\mathcal{J}}^\prime$ will not include sub-optimal arms $i \in \mathcal{I}^\prime$ and binding constraints $j \in \mathcal{J}^*$, as indicated by the following inequalities:
\begin{equation}
\begin{aligned}
    \mathrm{OPT}^U_i(t) \geq \mathrm{OPT}_i &= \mathrm{OPT}_\mathrm{LP} \geq \mathrm{OPT}^L_\mathrm{LP}(t) \quad \mathrm{for}\ i \in \mathcal{I}^\prime, \\
    \mathrm{OPT}^U_j(t) \geq \mathrm{OPT}_j &= \mathrm{OPT}_\mathrm{LP} \geq \mathrm{OPT}^L_\mathrm{LP}(t) \quad \mathrm{for}\ j \in \mathcal{J}^*.
\end{aligned}
\end{equation}

Next, we need to determine how many times each arm must be played so that, with high probability, the conditions $\mathrm{OPT}^U_i(t) < \mathrm{OPT}^L_\mathrm{LP}(t)$ and $\mathrm{OPT}^U_j (t)< \mathrm{OPT}^L_\mathrm{LP}(t)$ hold for optimal arms $i \in \mathcal{I}^*$ and non-binding constraints $j \in \mathcal{J}^\prime$ with high probability using quantum reward and resource consumption oracles. This can be achieved by employing the quantum Monte Carlo method to obtain the UCB and LCB. Specifically, we will prove the following proposition:

\begin{proposition}
\label{prop:phase 1}
In Phase \uppercase\expandafter{\romannumeral1} of \algo{quant-bwk2}, suppose we obtain an $\epsilon_{\mathrm{LP}}T$-optimal and $\epsilon_{\mathrm{LP}}T$-approximate solution for $\mathrm{OPT}^\mathrm{L}_{\mathrm{LP}}(t)$, $\mathrm{OPT}^\mathrm{U}_{i}(t)$, and $\mathrm{OPT}^\mathrm{U}_{j}(t)$, denoted as $\widetilde{\mathrm{OPT}}^\mathrm{L}_{\mathrm{LP}}(t)$, $\widetilde{\mathrm{OPT}}^\mathrm{U}_{i}(t)$, and $\widetilde{\mathrm{OPT}}^\mathrm{U}_{j}(t)$. To ensure that the following inequalities hold with high probability $1 - \frac{2md}{ T^2}$, each arm will be played no more than
\begin{equation}
    \tilde{\mathcal{O}} \left( \left(2+\frac1b\right)\frac{ \sqrt{d} \log T}{\delta-2\epsilon_{\mathrm{LP}}} \right)
\end{equation}
times:

\begin{align}
     &\mathrm{OPT}^U_i(t) - \mathrm{OPT}_i < \frac{\delta T}{2}- \epsilon_{\mathrm{LP}}T \quad \mathrm{for}\ i \in \mathcal{I}^*, \label{eq:opt-ij-1} \\ 
     &\mathrm{OPT}^U_j(t) - \mathrm{OPT}_j \leq \frac{\delta T}{2}-\epsilon_{\mathrm{LP}}T \quad \mathrm{for}\ j \in \mathcal{J}^\prime, \label{eq:opt-ij-2}\\
     &\mathrm{OPT}^L_{\mathrm{LP}}(t) - \mathrm{OPT}_\mathrm{LP} > -\frac{\delta T}{2}+\epsilon_{\mathrm{LP}}T. \label{eq:opt-ij-3}
 \end{align}

Under these conditions, the approximate solutions satisfy

\begin{equation}
\begin{aligned}
    & \widetilde{\mathrm{OPT}}^L_\mathrm{LP}(t) - \widetilde{\mathrm{OPT}}^U_i(t) \geq \mathrm{OPT}^L_\mathrm{LP}(t)- \mathrm{OPT}^U_i(t) - 2\epsilon_{\mathrm{LP}}T > \mathrm{OPT}_\mathrm{LP}- \mathrm{OPT}_i - \delta T  \geq 0, \\
    & \widetilde{\mathrm{OPT}}^L_\mathrm{LP}(t) - \widetilde{\mathrm{OPT}}^U_j(t) \geq \mathrm{OPT}^L_\mathrm{LP}(t) - \mathrm{OPT}^U_j(t) - 2 \epsilon_{\mathrm{LP}}T >\mathrm{OPT}_\mathrm{LP}- \mathrm{OPT}_j - \delta T  \geq 0 .
\end{aligned}
\end{equation}
This implies that if the LP is approximately solved, the optimal arms $\mathcal{I}^*$ and the non-binding constraints $\mathcal{J}^\prime$ can be correctly identified with high probability.
\end{proposition}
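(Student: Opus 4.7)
The plan is to combine a QMC concentration bound for the per-arm reward and resource means with an LP-sensitivity argument, and then choose the sampling budget so that the induced perturbation of $\mathrm{OPT}^L_\mathrm{LP}(t),\mathrm{OPT}^U_i(t),\mathrm{OPT}^U_j(t)$ stays inside the $\delta$-separation given by \prop{optimal gap}. First I would apply \lem{quant-multi} to each arm to obtain an estimate of the $(d{+}1)$-dimensional reward/cost mean with $\ell_\infty$-error at most $\epsilon$ and failure probability $O(1/T^2)$; this costs $N=\tilde{\mathcal{O}}(\sqrt{d}\log T/\epsilon)$ queries per arm. A union bound over arms and components then guarantees $\|\boldsymbol{r}^U(t)-\boldsymbol{r}\|_\infty\leq\epsilon$ together with $\|\boldsymbol{C}^U(t)-\boldsymbol{C}\|_\infty,\|\boldsymbol{C}^L(t)-\boldsymbol{C}\|_\infty\leq\epsilon$ simultaneously with probability at least $1-2md/T^2$.

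The main step is the LP-sensitivity bound. For $\mathrm{OPT}^U_i(t)$, since $\boldsymbol{C}^L(t)\leq\boldsymbol{C}$ and $\boldsymbol{r}^U(t)\geq\boldsymbol{r}$ the feasible region enlarges and the objective grows pointwise; any perturbed optimizer $\tilde{\boldsymbol{\xi}}$ satisfies $\boldsymbol{C}\tilde{\boldsymbol{\xi}}\leq\boldsymbol{B}+\epsilon\|\tilde{\boldsymbol{\xi}}\|_1\boldsymbol{1}$, so rescaling $\tilde{\boldsymbol{\xi}}$ by $(1+\epsilon\|\tilde{\boldsymbol{\xi}}\|_1/B)^{-1}$ yields an original-feasible point. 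After bounding the reward perturbation by $\|\boldsymbol{r}^U(t)-\boldsymbol{r}\|_\infty\|\tilde{\boldsymbol{\xi}}\|_1$, this gives $\mathrm{OPT}^U_i(t)-\mathrm{OPT}_i=O\bigl(\epsilon\|\tilde{\boldsymbol{\xi}}\|_1(1+\mathrm{OPT}_i/B)\bigr)$. Because every arm consumes $b$ units of the time resource ($C_{1,i}=b$), the time constraint alone forces $\|\tilde{\boldsymbol{\xi}}\|_1\leq T$ and $\mathrm{OPT}_i\leq T$, producing a perturbation bound of $O(\epsilon T(1+1/b))=O(\epsilon T(2+1/b))$. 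An essentially identical argument, or its dual counterpart using $\|\boldsymbol{\eta}^*\|_1\leq 1/b$ (from $B\sum_j\eta_j^*=\mathrm{OPT}_\mathrm{LP}\leq T$) for \eq{ucb-j}, handles $\mathrm{OPT}^L_\mathrm{LP}(t)$ and $\mathrm{OPT}^U_j(t)$.

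Choosing $\epsilon=\Theta\bigl((\delta-2\epsilon_\mathrm{LP})/(2+1/b)\bigr)$ then makes each of the three perturbations at most $\delta T/2-\epsilon_\mathrm{LP}T$, which is exactly \eq{opt-ij-1}--\eq{opt-ij-3}; substituting back gives the claimed per-arm budget $N=\tilde{\mathcal{O}}((2+1/b)\sqrt{d}\log T/(\delta-2\epsilon_\mathrm{LP}))$. The final implication for the approximate LP solutions is then immediate: each $\widetilde{\mathrm{OPT}}$ differs from its exact counterpart by at most $\epsilon_\mathrm{LP}T$ (by \defn{lp solver setting} after rescaling), so $\widetilde{\mathrm{OPT}}^L_\mathrm{LP}(t)-\widetilde{\mathrm{OPT}}^U_i(t)\geq\mathrm{OPT}^L_\mathrm{LP}(t)-\mathrm{OPT}^U_i(t)-2\epsilon_\mathrm{LP}T$, and by \prop{optimal gap} together with the definition of $\delta$ we have $\mathrm{OPT}_\mathrm{LP}-\mathrm{OPT}_i\geq\delta T$ for $i\in\mathcal{I}^*$ and $\mathrm{OPT}_\mathrm{LP}-\mathrm{OPT}_j\geq\delta T$ for $j\in\mathcal{J}'$, so the chain of inequalities in the proposition closes out. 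I expect the LP-sensitivity step to be the main obstacle: the clean $O(\epsilon T(1+1/b))$ perturbation bound requires carefully rescaling the perturbed optimizer using the time-resource row of $\boldsymbol{C}$, and the argument for the dual LP \eq{ucb-j} must mirror it via the bound $\|\boldsymbol{\eta}^*\|_1\leq 1/b$; once those are in place, the rest is essentially bookkeeping.
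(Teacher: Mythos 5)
Your proposal is correct and follows essentially the same route as the paper: QMC estimation of the reward/cost means to $\ell_\infty$-accuracy $\epsilon$, an LP-sensitivity argument that rescales (near-)feasible solutions between the true LPs and their confidence-bound versions at a cost of $O(\epsilon T(2+\tfrac1b))$, the choice $\epsilon=\Theta\bigl((\delta-2\epsilon_{\mathrm{LP}})/(2+\tfrac1b)\bigr)$, and the final $2\epsilon_{\mathrm{LP}}T$ bookkeeping against the $\delta T$ gap from \prop{optimal gap}. The one place where the paper's execution differs in mechanism from your sketch is the dual LP \eq{ucb-j}, where a multiplicative rescaling of $\boldsymbol{\eta}$ would fail (components of $\boldsymbol{r}+\boldsymbol{C}_{j,\cdot}$ can be arbitrarily small), so the paper instead restores feasibility by adding a multiple of $(1,0,\dots,0)^\top$ to the dual solution, exploiting the constant time-resource row $C_{1,\cdot}=b$ together with your bound $\boldsymbol{1}^\top\boldsymbol{\eta}\le 1+\tfrac1b$.
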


By establishing an upper bound on the total number of times each arm is played during Phase \uppercase\expandafter{\romannumeral1}, we derive the corresponding regret bound stated in \prop{regret 1}. The proofs of \prop{regret 1} and \prop{phase 1} are provided in \append{proof-opt-ij}.
\begin{proposition}[Expected regret in Phase \uppercase\expandafter{\romannumeral1}]
\label{prop:regret 1}
With $\epsilon_{\mathrm{LP}} \leq \frac{\delta}{4}$, the expected regret in Phase \uppercase\expandafter{\romannumeral1} of \algo{quant-bwk2} is upper bounded by 
\begin{equation}
    \tilde{\mathcal{O}} \left( \left(2+\frac1b\right)\frac{ m\sqrt{d} \log T}{b\delta} \right).
\end{equation}
\end{proposition}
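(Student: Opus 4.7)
The plan is to invoke Proposition~\ref{prop:phase 1} to control the length of Phase I and then translate this round count into a regret bound via an LP-duality per-round argument.

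First, with the hypothesis $\epsilon_{\mathrm{LP}}\leq \delta/4$, one has $\delta - 2\epsilon_{\mathrm{LP}} \geq \delta/2$, so Proposition~\ref{prop:phase 1} implies that on a ``good'' event $\mathcal{E}$ of probability at least $1-2md/T^2$, every arm is pulled at most $N = \tilde{\mathcal{O}}\bigl((2+1/b)\sqrt{d}\log T/\delta\bigr)$ times during Phase I (the optimal arms and non-binding constraints are correctly identified by then, so the outer while-loop terminates). Summing across all $m$ arms, the total number of Phase I rounds is bounded by $mN = \tilde{\mathcal{O}}\bigl(m(2+1/b)\sqrt{d}\log T/\delta\bigr)$.

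Second, I would bound the regret contributed by each Phase I round by $\mathcal{O}(1/b)$ through a primal--dual argument. Applying LP duality to \eq{primal-LP} and \eq{dual-LP}, $\mathrm{OPT}_{\mathrm{LP}} = \boldsymbol{B}^\top \boldsymbol{\eta}^* = B\sum_j \eta_j^*$, so $\sum_j \eta_j^* = \mathrm{OPT}_{\mathrm{LP}}/B \leq T/B = 1/b$ (using $\mathrm{OPT}_{\mathrm{LP}}\leq T$, which follows from the built-in time resource $C_{1,i}=b$ and $r_i\leq 1$). In any Phase I round the algorithm forgoes at most $\mathrm{OPT}_{\mathrm{LP}}/T \leq 1$ units of expected reward relative to the LP optimum, and additionally consumes at most one unit of each resource; charging that consumption at the dual prices $\eta_j^*$ yields an additional future regret of at most $\sum_j \eta_j^* \leq 1/b$. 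Hence every round in Phase I costs at most $1 + 1/b = \mathcal{O}(1/b)$ in regret. Combining with the round count gives a Phase I regret of at most $mN\cdot\mathcal{O}(1/b)=\tilde{\mathcal{O}}\bigl((2+1/b)\,m\sqrt{d}\log T/(b\delta)\bigr)$ on the event $\mathcal{E}$. On the complementary event $\mathcal{E}^c$, regret is trivially bounded by $T\cdot\mathcal{O}(1/b)=\mathcal{O}(T/b)$, and multiplying by the failure probability $2md/T^2$ yields an expected contribution of $\mathcal{O}(md/(bT))$, which is of lower order and absorbed into the $\tilde{\mathcal{O}}$ notation.

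The main obstacle is the per-round regret argument in the second step: one needs a clean primal--dual accounting that simultaneously charges the foregone reward and the resource consumption, while remaining valid even though the algorithm does not know $\boldsymbol{\eta}^*$. This is handled via a potential-function argument tracking $\boldsymbol{\eta}^{*\top}\boldsymbol{B}^{(t)}$ alongside the collected rewards, following the primal--dual template already developed in the classical BwK analyses of \cite{badanidiyuru2018bandits,li2021symmetry}; the quantum oracles do not affect this step since only a deterministic per-round opportunity-cost bound is needed. The robustness to the LP approximation $\epsilon_{\mathrm{LP}}\leq\delta/4$ is already baked into Proposition~\ref{prop:phase 1}, so no further analysis is required to handle the inexactness of the quantum LP solver at this stage.
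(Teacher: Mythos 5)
Your proposal is correct and follows essentially the same route as the paper: invoke Proposition~\ref{prop:phase 1} (with $\epsilon_{\mathrm{LP}}\leq\delta/4$ giving $\delta-2\epsilon_{\mathrm{LP}}\geq\delta/2$) to bound the per-arm pull count by $\tilde{\mathcal{O}}\bigl((2+1/b)\sqrt{d}\log T/\delta\bigr)$, and then charge each pull $\mathcal{O}(1/b)$ via the primal--dual decomposition, using $\boldsymbol{1}^\top\boldsymbol{\eta}^*\leq \mathrm{OPT}_{\mathrm{LP}}/B\leq 1/b$. The ``clean primal--dual accounting'' you flag as the main obstacle is exactly what the paper imports wholesale as Proposition~\ref{prop:regret-bound} (Proposition 1 of \cite{li2021symmetry}), which bounds the regret by $\sum_{i\in\mathcal{I}'}n_i(t)\Delta_i+\mathbb{E}[\boldsymbol{B}^{(\tau)}]^\top\boldsymbol{\eta}^*$ with $\Delta_i\leq\boldsymbol{1}^\top\boldsymbol{\eta}^*$, so no new argument is needed there.
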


\paragraph{Regret analysis of Phase \uppercase\expandafter{\romannumeral2}}
In Phase \uppercase\expandafter{\romannumeral2}, the algorithm focuses on exhausting the binding constraints using an approximate LP solver, which contrasts with classical approaches that assume near exact solvers. Our regret analysis demonstrates the robustness of this method. We show that using an approximate quantum LP solver does not increase the overall regret, provided that the approximation error $\epsilon_{\textrm{LP}}$ is sufficiently small.

\begin{proposition}[Expected regret in Phase \uppercase\expandafter{\romannumeral2}]
\label{prop:regret 2}
With $\epsilon_{\mathrm{LP}} \leq \min\{\frac{\sigma \chi \log^2T}{40\min\{m^{\frac32},d^{\frac32}\}},\frac{2b\theta  \min\{\chi,\delta\}\log^2 T}{405}\}$, the expected regret in Phase \uppercase\expandafter{\romannumeral2} of \algo{quant-bwk2} is upper bounded by 
\begin{equation}
    \mathcal{O} \left( \frac{d^4}{b^2 \min \{\chi^2,\delta^2\} \min\{1,\sigma^2\}} \right).
\end{equation}
\end{proposition}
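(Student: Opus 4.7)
The strategy is to replicate the classical Phase~\uppercase\expandafter{\romannumeral2} analysis of \cite{li2021symmetry} while tracking the perturbation introduced by solving \eq{residual-lp} and \eq{residual-lp2} only to the $\epsilon_{\mathrm{LP}}(T-t)/\log^2 T$-precision granted by \prop{time-lp}. The classical argument, assuming exact LP solutions, establishes two facts: (i) the induced sampling distribution consumes binding resources at the correct average rate, so that $\boldsymbol{B}^{(t)}/(T-t)$ stays close to $\boldsymbol{b}$; and (ii) the resulting Phase~\uppercase\expandafter{\romannumeral2} regret is $\mathcal{O}(d^4/(b^2\min\{\chi^2,\delta^2\}\min\{1,\sigma^2\}))$. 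My task is to show that under the stated bound on $\epsilon_{\mathrm{LP}}$, neither conclusion is degraded by more than constant factors.

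First, I would apply a standard LP-sensitivity argument. Conditioned on the Phase~\uppercase\expandafter{\romannumeral1} success event (so the confidence radii are small and $\hat{\mathcal{I}}^*=\mathcal{I}^*$, $\hat{\mathcal{J}}^\prime=\mathcal{J}^\prime$ by \prop{phase 1}), the exact LP in \eq{residual-lp2} is non-degenerate with optimal basis encoded by $\boldsymbol{C}^L_{\mathcal{J}^*,\mathcal{I}^*}(t-1)$, whose minimum singular value stays within a constant factor of $\sigma$. Hence the $\epsilon_{\mathrm{LP}}(T-t)/\log^2 T$ slack in optimality and feasibility translates, via the basis inverse, into an $\ell_\infty$ deviation $\|\tilde{\boldsymbol{\xi}}-\boldsymbol{\xi}^{\mathrm{ex}}\|_\infty = \mathcal{O}(\epsilon_{\mathrm{LP}}(T-t)/(\sigma\log^2 T))$ between the approximate and exact optima. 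The first half of the hypothesis, $\epsilon_{\mathrm{LP}}\le \sigma\chi\log^2 T/(40\min\{m^{3/2},d^{3/2}\})$, is calibrated so that this deviation is below $\chi T/40$; after normalization, the sampling probabilities over $\mathcal{I}^*$ stay within a constant factor of their classical counterparts.

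Second, I would extend the classical martingale concentration on the filtered budget process. The conditional drift $\mathbb{E}[\boldsymbol{C}_{\cdot,i(t)}\mid\mathcal{F}_{t-1}]$ differs from the classical expression by both the UCB/LCB gap and an additive $\mathcal{O}(\epsilon_{\mathrm{LP}}(T-t)/\log^2 T)$ bias stemming from $\epsilon_{\mathrm{LP}}$-feasibility. The second half of the hypothesis, $\epsilon_{\mathrm{LP}}\le 2b\theta\min\{\chi,\delta\}\log^2 T/405$, forces this bias to lie strictly below the $\theta$-threshold used in the switching criterion between \eq{residual-lp} and \eq{residual-lp2}. Consequently the fallback branch \eq{residual-lp} is active for only $\mathcal{O}(d^4/(b^2\min\{\chi^2,\delta^2\}\min\{1,\sigma^2\}))$ rounds—each contributing at most $\mathcal{O}(1)$ instantaneous regret—while in the stable branch \eq{residual-lp2} the per-round regret is $o(1)$ by the first step. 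Summing these contributions yields the stated bound.

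The main obstacle is a subtle feasibility question underlying the first step: the approximate solver only guarantees $\boldsymbol{C}^L\tilde{\boldsymbol{\xi}}\le\boldsymbol{B}^{(t-1)}+\epsilon_{\mathrm{LP}}(T-t)/\log^2 T$, and combined with the lower-bound constraint $\boldsymbol{C}^L_{\mathcal{J}^*,\mathcal{I}^*}\tilde{\boldsymbol{\xi}}_{\mathcal{I}^*}\ge\boldsymbol{B}^{(t-1)}_{\mathcal{I}^*}$ this could in principle overconstrain the system and destroy the basis structure on which the sensitivity bound relies. The rescue is that under the Phase~\uppercase\expandafter{\romannumeral1} event the exact optimum of \eq{residual-lp2} lies in the interior of the feasible set by at least $\Omega(\sigma\chi T/\log^2 T)$ (from non-degeneracy together with the $\chi T$ separation of primal coordinates), so the $\epsilon_{\mathrm{LP}}$-slack is absorbed as soon as the first inequality of the assumption holds. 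Once this interior property is established, the remaining bookkeeping mirrors \cite{li2021symmetry}, with the extra $\epsilon_{\mathrm{LP}}$ contributions folded into the $\theta$ buffer and the confidence radii used there.
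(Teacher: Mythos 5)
Your high-level plan---layer a perturbation analysis on top of the classical Phase~\uppercase\expandafter{\romannumeral2} argument of \cite{li2021symmetry}, converting the $\epsilon_{\mathrm{LP}}(T-t)/\log^2T$ slack into an $\ell_\infty$ deviation of the sampling probabilities via the inverse of $\boldsymbol{C}^L_{\mathcal{J}^*,\mathcal{I}^*}$---matches the paper's strategy, and your use of the first condition on $\epsilon_{\mathrm{LP}}$ coincides with the paper's \lem{approx-prob-arms}, which bounds $\Vert\tilde{\boldsymbol{\eta}}(t)-\hat{\boldsymbol{\eta}}(t)\Vert_\infty$ by $4\min\{m^{3/2},d^{3/2}\}\epsilon_{\mathrm{LP}}/(\sigma\log^2T)\leq\chi/10$. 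However, there are two genuine gaps. First, you misattribute the role of the second condition $\epsilon_{\mathrm{LP}}\leq 2b\theta\min\{\chi,\delta\}\log^2T/405$: it does not control the drift bias of the budget process relative to the $\theta$ switching threshold (that bias, Eq.~\eq{gap-e-b}, is handled by the \emph{first} condition through $\sigma$). It is used exclusively in the initial stage, where the fallback LP \eq{residual-lp} is active and one must show that the $\epsilon_{\mathrm{LP}}$-optimal, $\epsilon_{\mathrm{LP}}$-feasible solution still assigns every optimal arm mass $\tilde{\xi}_i\geq\frac{\min\{\chi\theta,\delta\theta\}}{90}(T-t)$; this follows from the gap $\tilde{\mathrm{OPT}}^{U}_{\mathrm{LP}}(t)-\tilde{\mathrm{OPT}}_i^{U}(t)\geq\frac{\min\{\chi\theta,\delta\theta\}}{45}(T-t)$ minus the approximation loss, and is precisely where the constant $405$ arises. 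Your proposal contains no analogue of this step, so you cannot conclude that the fallback branch terminates after $\mathcal{O}(\log T)$ rounds.

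Second, the quantitative heart of the bound---why the number of ``bad'' rounds is $\mathcal{O}\bigl(d^4/(b^2\min\{\chi^2,\delta^2\}\min\{1,\sigma^2\})\bigr)$---is asserted rather than derived. The paper obtains it by defining the stopping time $\tau'$ at which $\boldsymbol{b}^{(t)}=\boldsymbol{B}^{(t)}/(T-t)$ exits $[\boldsymbol{b}-\epsilon,\boldsymbol{b}+\epsilon]$ with $\epsilon=\min\{1,\sigma\}\min\{\chi,\delta\}b/(5d^{3/2})$, running an Azuma--Hoeffding argument on the truncated martingale increments of $\boldsymbol{b}^{(t)}$ (\lem{part1} and \lem{part2}) to get $\mathbb{P}(\tau'\leq t)\leq 2d\exp(-(T-t-1)\epsilon^2/18)+\min\{m,d\}/T^3$, and summing the tail to bound $\mathbb{E}[T-\tau']$; the regret then follows from \prop{regret-bound} as $\mathbb{E}[\boldsymbol{B}^{(\tau)}]^\top\boldsymbol{\eta}^*$ with $\boldsymbol{1}^\top\boldsymbol{\eta}^*\leq 1/b$, not as a sum of per-round instantaneous regrets (in Phase~\uppercase\expandafter{\romannumeral2} only optimal arms are played, so there is no per-round suboptimality term). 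Relatedly, your resolution of the ``feasibility obstacle'' is off: the exact optimum of \eq{residual-lp2} is not interior---the binding constraints are saturated, $\boldsymbol{\eta}^*(t)=(\boldsymbol{C}^L(t))^{-1}\boldsymbol{B}^{(t)}$---and what saves the argument is the element-wise lower bound $\tilde{\boldsymbol{\eta}}(t)\geq\chi/5$ (\lem{prob-arm}) together with the probability perturbation bound, not an interiority margin of order $\sigma\chi T/\log^2T$.
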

The core of our proof hinges on the adaptive strategy of the algorithm. We formulate a related LP problem, denoted as \eq{residual-lp2}, which incorporates an additional constraint compared to \eq{residual-lp}. We demonstrate that if the resource consumption matrix is accurately estimated, such that $\Vert \boldsymbol{C}_{\cdot,i}^L(t)-\boldsymbol{C}_{\cdot, i} \Vert_\infty \leq \theta$ for all $i \in \mathcal{I}^*$ and the average remaining resource budget is appropriately bounded, i.e., $\boldsymbol{B}^{(t-1)}/(T-t+1) \in [\boldsymbol{b}-\epsilon,\boldsymbol{b}+\epsilon]$, then the LP problem \eq{residual-lp2} becomes equivalent to \eq{residual-lp}. This equivalence ensures that the approximate solution is sufficiently close to the true optimal solution, making it possible to lower-bound the difference in arm-pulling probabilities between the two policies. Consequently, we show that as long as the LP solutions are accurate within a bounded approximation error $\epsilon_{\textrm{LP}}$, the use of an approximate quantum LP solver does not lead to an increase in regret for Phase \uppercase\expandafter{\romannumeral2}. The detailed proof of \prop{regret 2} is provided in \append{proof-regret-2}.

\paragraph{The expected regret upper bound}
By combining the regret bounds from Phase \uppercase\expandafter{\romannumeral1} in \prop{regret 1} and from Phase \uppercase\expandafter{\romannumeral2} in \prop{regret 2}, we obtain the expected regret upper bound \thm{quant-bwk2} for our quantum algorithm.
\begin{theorem}[Main theorem, problem-dependent setting]
\label{thm:quant-bwk2}
    Under \assum{lp}, the expected regret of \algo{quant-bwk2} is upper bounded by
    \begin{equation}
     \tilde{\mathcal{O}}\left( \left(2+\frac1b\right)\frac{m \sqrt{d} }{b \delta} \log T  + \frac{d^4}{b^2 \min \{\chi^2,\delta^2\} \min\{1,\sigma^2\}} \right),
    \end{equation}
    with $\epsilon_{\mathrm{LP}} \leq \min \{ \frac{\sigma \chi \log^2T}{40\min\{m^{\frac32},d^{\frac32}\}}, \frac{2b\theta  \min\{\chi,\delta\}\log^2 T}{405}, \frac{\delta}{4} \}$.
\end{theorem}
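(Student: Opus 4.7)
The plan is to combine the two regret bounds already established in \prop{regret 1} and \prop{regret 2}, since the execution of \algo{quant-bwk2} is cleanly separated into Phase I (identification of $\mathcal{I}^*$ and $\mathcal{J}^\prime$) and Phase II (exhausting the binding resources). Writing the total expected regret as $R = R_{\mathrm{I}} + R_{\mathrm{II}}$, the claimed bound will follow by summing the two propositional bounds, once the stated hypothesis on $\epsilon_{\mathrm{LP}}$ is verified to be strong enough to invoke both propositions simultaneously.

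First I would check the hypotheses on $\epsilon_{\mathrm{LP}}$. \prop{regret 1} requires $\epsilon_{\mathrm{LP}} \leq \delta/4$, while \prop{regret 2} requires $\epsilon_{\mathrm{LP}} \leq \min\bigl\{\sigma\chi\log^2 T / (40\min\{m^{3/2},d^{3/2}\}),\, 2b\theta\min\{\chi,\delta\}\log^2 T / 405\bigr\}$. The intersection of these two constraints is exactly the condition stated in \thm{quant-bwk2}, so both propositions can be applied to the same execution of \algo{quant-bwk2} with a common quantum LP-solver tolerance.

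Next I would handle the probability-of-failure accounting. From \prop{phase 1}, the good event in which all quantum-Monte-Carlo confidence bounds remain valid throughout Phase I has probability at least $1 - 2md/T^2$, and a standard union bound over the $\tilde{\mathcal{O}}(T)$ classical-Monte-Carlo updates in Phase II contributes only an extra $\mathrm{poly}(\log T)/T$ failure probability. On the complementary bad event the regret is trivially bounded by $T$, contributing at most $T \cdot \mathcal{O}(md/T^2) = \mathcal{O}(md/T)$, which is absorbed into the $\log T$ term of the first summand. On the good event, \prop{regret 1} and \prop{regret 2} directly give $R_{\mathrm{I}} = \tilde{\mathcal{O}}\bigl((2 + 1/b)\, m\sqrt{d}\log T / (b\delta)\bigr)$ and $R_{\mathrm{II}} = \mathcal{O}\bigl(d^4 / (b^2 \min\{\chi^2,\delta^2\}\min\{1,\sigma^2\})\bigr)$, whose sum is precisely the bound asserted by \thm{quant-bwk2}.

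The substantive work is therefore \emph{not} in this final assembly but is front-loaded into the two subsidiary propositions, and particularly into \prop{regret 2}: one must argue that the approximate solutions returned by the quantum LP solver of \lem{quantum solver 1} still steer the Phase II sampling distribution close enough to the ideal distribution to exhaust the binding resources without over-consuming any of them, and that the equivalence between the adaptive LPs in \eq{residual-lp} and \eq{residual-lp2} remains effective under the chosen $\epsilon_{\mathrm{LP}}$. Once that robustness, together with the QMC-based identification guarantee of \prop{phase 1}, is in place, \thm{quant-bwk2} itself reduces to verifying the compatibility of hypotheses on $\epsilon_{\mathrm{LP}}$, performing a union bound over failure events, and adding the two regret contributions.
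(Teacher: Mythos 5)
Your proposal is correct and matches the paper's own argument: the paper proves \thm{quant-bwk2} precisely by summing the Phase I bound of \prop{regret 1} and the Phase II bound of \prop{regret 2}, with the theorem's hypothesis on $\epsilon_{\mathrm{LP}}$ being exactly the intersection of the two propositions' conditions. Your additional bookkeeping of the failure probabilities (the $1-2md/T^2$ good event from \prop{phase 1} and the trivial $T$-bounded regret on the complement) is consistent with how the paper handles these events inside the subsidiary propositions.
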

In comparison, the regret bound for the classical counterpart is given by
\begin{equation}
    \mathcal{O} \left( \left( 2+\frac1b \right)^2 \frac{md\log T}{b\delta^2} + \frac{d^4}{b^2 \min \{ \chi^2,\delta^2 \} \min \{1,\sigma^2\}}\right).
\end{equation}
\thm{quant-bwk2} shows that by utilizing quantum oracles, we achieve a quadratic improvement in the parameter $\delta$ and $(2+\frac{1}{b})$ in the first term.

\section{Conclusion}
In this paper, we conduct a systematic investigation of quantum algorithms for the Bandits with Knapsacks problem—a fundamental problem in operations research that combines online linear programming with Markov decision processes, and which also generalizes to reinforcement learning. Our primary contribution is the extension of the classical BwK model to the quantum domain. 
Our first major result is a quantum algorithm for the problem-independent BwK setting. This algorithm achieves a regret bound of  $\mathcal{O}\left(\sqrt{\log (dT)}\right)\left(\mathrm{OPT}_{\mathrm{LP}} \sqrt{\frac{m}{B}}\right) + \mathcal{O}\left(m\log(dT)\log(T)\right)$. By employing QMC for more precise estimations, our approach improves upon classical benchmarks by a factor of $(1+\sqrt{B/\mathrm{OPT}_{\mathrm{LP}}})$. 
Our second main result concerns the problem-dependent BwK setting. In this framework, we utilize both QMC and quantum LP solvers to simultaneously enhance the regret bound and reduce the time complexity of classical algorithms. Specifically, compared to the classical algorithms, our algorithm achieves a quadratic improvement in the regret bound with respect to the problem-dependent parameters. It also exhibits a polynomial speedup in time complexity in terms of the number of arms $m$ and constraints $d$. A key feature of our approach is its resilience to approximation errors. We rigorously demonstrate that employing an approximate quantum LP solver does not degrade the regret guarantee, thereby ensuring the algorithm's robustness.
Our results indicate that quantum algorithms can offer significant advantages in solving complex constrained online learning problems such as BwK, demonstrating clear superiority in both regret performance and computational efficiency.

Our work leaves open questions that necessitate further investigation:
\begin{itemize}
    \item Can the quantum speedups we found for BwK be replicated in other online settings with constraints, such as those with combinatorial or submodular objectives?
    \item Our algorithms' optimality is not yet known. What are the fundamental quantum lower bounds on regret and time complexity for the BwK problem?
\end{itemize}

\section*{Acknowledgments}
Y. Su, Z. Yang, and T. Li were supported by the National Natural Science Foundation of China (Grant Numbers 62372006 and 92365117). P. Huang was supported by the National Natural Science Foundation of China (Grant Number 72302007).

\newcommand{\arxiv}[1]{arXiv:\href{https://arxiv.org/abs/#1}{\ttfamily{#1}}\?}\newcommand{\arXiv}[1]{arXiv:\href{https://arxiv.org/abs/#1}{\ttfamily{#1}}\?}\def\?#1{\if.#1{}\else#1\fi}
\providecommand{\bysame}{\leavevmode\hbox to3em{\hrulefill}\thinspace}

\appendix

\newpage

\section{Quantum and classical algorithms for linear programming}
\label{append:LP solver}

Linear programming (LP) is a core topic in convex optimization~\cite{boyd2004convex}. Among methods for solving LP problems, the simplex method introduced by Dantzig \cite{dantzig1997simplex} solves LP problems by traversing the edges of the feasible region and presents efficiency in practice. However, in the worst case, the simplex method may take an exponential number of iterations with respect to the problem size. In terms of polynomial-time algorithms, Khachiyan~\cite{khachiyan1979polynomial} was the first to prove that LP problems can be solved in polynomial time using the ellipsoid method, thereby establishing that LP belongs to the complexity class \textsf{P}. The same conclusion was also established by Karmarkar~\cite{karmarkar1984new} using the interior-point method. At the moment, the state-of-the-art time complexities of classical LP solvers depend on the specific structure of the problem, such as the relationship between the number of variables $m$ and constraints $d$, or the sparsity of the constraint matrix. When $d=\Omega(m)$, Refs.~\cite{cohen2021solving,jiang2020faster} solve the LP in current matrix multiplication time $\tilde{\mathcal{O}}(m^{\omega})$. Another key result is the $\tilde{\mathcal{O}}(md + d^{2.5})$ complexity from Ref.~\cite{van2021minimum}, which is better when $d$ is small. For sparse LP problems, Ref.~\cite{lee2015efficient} proposed an algorithm with time complexity $\tilde{\mathcal{O}}((\mathrm{nnz}(\boldsymbol{A}) + m^2)\sqrt{m})$, where $\mathrm{nnz}(\boldsymbol{A})$ is the number of nonzero elements in $\boldsymbol{A}$.

While the aforementioned classical LP solvers have a poly-logarithmic dependence on the inverse of the desired precision, approximate LP solvers are also available. Specifically, an LP problem can be reduced to a zero-sum matrix game. Without loss of generality, assume that all coefficients in the LP are within the range $[-1,1]$. 
Denote the optimal value of the LP as OPT, the optimal primal solution as $\boldsymbol{\xi}$, and the optimal dual solution as $\boldsymbol{\eta}$. Assume that $\sum_i \xi_i \leq R$ and $\sum_i \eta_i \leq r$. In this case, it follows that the optimal value, OPT, is bounded by $R$, i.e., $-R \leq \text{OPT} \leq R$. 
The reduction process involves transforming the LP problem into a decision problem: determining whether 
OPT satisfies $\text{OPT} > \alpha$ or $\text{OPT} \leq \alpha + \epsilon$. By performing binary search over the range $[-R,R]$, the LP problem can be solved in $\log(R/\epsilon)$ iterations. 
Furthermore, the decision problem per se can be reduced to computing an $\epsilon$-approximate Nash equilibrium of the corresponding zero-sum game. In particular, \cite[Lemma 12]{van2019quantum} shows that solving the zero-sum games up to additive error $\epsilon^{\prime\prime} = \epsilon/(6R(r+1))$ suffices to correctly conclude the decision problem. Since Grigoriadis and Khachiyan~\cite{grigoriadis1995sublinear} proposed a classical algorithm that finds an $\epsilon$-approximate Nash equilibrium of an $m\times d$ zero-sum game with time complexity $\tilde{\mathcal{O}}((m+d)/\epsilon^2)$, 
it implies a clssical algorithm for finding an $\epsilon$-optimal and $\epsilon$-feasible solution of an LP problem with time complexity $\tilde{\mathcal{O}}((m+d)(R(r+1)/\epsilon)^2)$.
 
On the quantum side, existing approaches for solving LPs are approximate and primarily offer polynomial improvements in complexity with respect to the problem dimensions. The best-known quantum algorithm for computing an $\epsilon$-approximate Nash equilibrium in zero-sum games  utilizes a faster Gibbs sampler and achieves a time complexity of $\tilde{\mathcal{O}}(\sqrt{m+n}/\epsilon^{2.5})$~\cite{bouland2023quantum,gao2024logarithmic}. By leveraging the reduction from LP to zero-sum games, this leads to the development of a quantum LP solver with improved time complexity of $\tilde{\mathcal{O}}(\sqrt{m+d}({R(r+1)}/{\epsilon})^{2.5})$. There also exist other quantum LP solvers with different complexity forms. For instance, Augustino et al.~\cite{augustino2023quantum} proposed a quantum central path algorithm for LPs with complexity $\tilde{\mathcal{O}}(\sqrt{m+d}\cdot\mathrm{nnz}(\boldsymbol{A})R/\epsilon)$. More recently, Augustino et al.~\cite{augustino2025fast} proposed a quantum LP solver based on a quantum version of the mirror descent method with time complexity $\tilde{\mathcal{O}}(d\sqrt{m}({R(r+1)}/{\epsilon})^{2})$.

\section{Proofs for Problem-Independent Quantum BwK}
\label{append:proof in algo1}

In \algo{quant-bwk1} for problem-independent quantum BwK, both quantum and classical estimators are utilized to construct the UCB and LCB for reward and resource consumption. To analyze the algorithm and derive the upper bound of regret, we first present the following propositions, which demonstrate that, with high probability, the expected reward and resource consumption lie within the confidence intervals of the estimator. We begin by deriving the number of times the QMC has been invoked during the execution of the algorithm. Although similar proofs have been presented in \cite{wan2023quantum}, we provide a self-contained proof of \lem{algo1-times} here.

\begin{lemma}
\label{lem:algo1-times}
Let $\delta=\frac{1}{T^2}$ and $C_1$ be the constant in \lem{quant-uni}. The number of times that the quantum Monte Carlo method is invoked is at most 
\begin{equation}
m \log \left( \frac{T}{2mC_1\log T} + 1 \right).
\end{equation}
\end{lemma}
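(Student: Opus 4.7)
The plan is to exploit the geometric doubling schedule that triggers QMC invocations. For each arm $i \in [m]$, $\mathrm{rad}_i$ is initialized to $1$ and halved after each QMC call, so immediately after the $k$-th invocation on arm $i$ we have $\mathrm{rad}_i = 2^{-k}$. Consequently, the $k$-th invocation is triggered only once an additional $N_k := \frac{2C_1}{2^{-(k-1)}}\log T = 2^k C_1 \log T$ plays of arm $i$ have accumulated since the $(k-1)$-st call (with $N_1 = 2 C_1 \log T$ coming from the Phase~I initialization). Summing this geometric series, if arm $i$ undergoes $K_i$ QMC calls in total, then arm $i$ has been played at least $\sum_{k=1}^{K_i} 2^k C_1 \log T = (2^{K_i+1} - 2)\, C_1 \log T$ times over the run.

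Next I invoke the horizon constraint: the algorithm terminates no later than round $T$, so summing over arms yields $\sum_{i=1}^m (2^{K_i+1} - 2)\, C_1 \log T \leq T$. To convert this into an upper bound on the total QMC count $\sum_i K_i$, I apply Jensen's inequality to the convex function $x \mapsto 2^{x+1}$: for fixed arithmetic mean $\bar K := m^{-1}\sum_i K_i$, the sum $\sum_i 2^{K_i+1}$ is minimized when all $K_i$ are equal, so $m \cdot 2^{\bar K + 1} \leq \sum_i 2^{K_i+1} \leq T/(C_1 \log T) + 2m$. Rearranging gives $2^{\bar K} \leq \frac{T}{2m C_1 \log T} + 1$, hence $\bar K \leq \log_2\!\bigl(\tfrac{T}{2mC_1\log T} + 1\bigr)$, and multiplying by $m$ delivers the claimed bound on $\sum_i K_i$.

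The main bookkeeping subtlety is ensuring that the doubling schedule remains consistent across the Phase~I/Phase~II boundary and across the asynchronous pulls of different arms in Phase~II. Phase~I plays arm $i$ exactly $N_1 = 2C_1 \log T$ times before its first QMC and then halves $\mathrm{rad}_i$ to $1/2$; Phase~II then triggers the next QMC on arm $i(s)$ only once it has accumulated $\frac{2C_1}{\mathrm{rad}_{i(s)}}\log T$ further plays. Because only the selected arm's play counter is incremented and only its $\mathrm{rad}$ is halved at each QMC call, the identity $N_k = 2^k C_1 \log T$ holds independently per arm throughout the whole run, which legitimizes the per-arm geometric-series computation. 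No other delicate argument is needed; the rest is the convexity manipulation above.
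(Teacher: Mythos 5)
Your proposal is correct and follows essentially the same route as the paper's proof: a per-arm geometric series showing that $K_i$ QMC calls require $(2^{K_i+1}-2)C_1\log T$ plays, the horizon constraint $\sum_i(2^{K_i+1}-2)C_1\log T\le T$, and Jensen's inequality on the convex exponential to bound the total count by $m$ times the log of the average. The only cosmetic difference is that you count the Phase~I invocation inside $K_i$ while the paper writes it as $N_i+1$; the resulting bound is identical.
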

\begin{proof}
Given that there are $T$ total rounds, in Phase \uppercase\expandafter{\romannumeral1} each arm is pulled $2C_1 \log T$ times, and the QMC is invoked $m$ times. In Phase \uppercase\expandafter{\romannumeral2}, suppose each arm invokes the QMC $N_i$ times. Summing the number of times each arm is played across the two phases, we obtain the inequality $\sum_{i=1}^m \left( 2^{N_i + 1} -1  \right)2C_1 \log T \leq T$. 
Next, due to the convexity of the exponent function, $\sum_{i=1}^m  2^{N_i + 1} \geq m 2^{1/m \sum_i N_i + 1}$. 
Therefore, we have
\begin{equation}
    \sum_{i=1}^m (N_i + 1)  \leq m \log \left( \frac{T}{2mC_1\log T} + 1 \right) \leq m \log T,
\end{equation}
which proves the lemma.
\end{proof}
Thus, the QMC method for estimating the reward has been invoked at most $m \log T$ times. With $\delta = \frac{1}{T^2}$ in \lem{quant-uni}, the probability that the expected reward lies within each confidence interval is at least $1- \frac{m \log T}{T^2}$.  Furthermore, the classical estimator for the $d$-dimensional resource consumption has been invoked at most $T$ times. With $\delta=\frac{2d}{T^3}$ in \lem{classic-monto},  the expected resource consumption lies within the respective confidence interval with probability at least $1-\frac{2d}{T^2}$. Consequently, with probability 
$
    1-\frac{m \log T+2d}{T^2}
$, both the expected reward and resource consumption lie within their respective confidence bounds, thereby proving the following proposition.

\begin{proposition}
\label{prop:bounds}
Let $\hat{r}_i(s)$ denote the most recent output from QMC executions up to time $s$, and let $n^{\mathrm{QMC}}_i(s)$ represent the number of quantum oracle queries utilized in the QMC process. For each arm $i \in [m]$ at time $s \in [T]$, we define the following upper and lower confidence bounds for the reward:
\begin{equation}
r^U_i(s) = \mathrm{proj}_{[0,1]}\left(\hat{r}_i(s) + \frac{2C_1 \log T}{n^{\mathrm{QMC}}_i(s)}\right), \quad
r^L_i(s) = \mathrm{proj}_{[0,1]}\left(\hat{r}_i(s) - \frac{2C_1 \log T}{n^{\mathrm{QMC}}_i(s)}\right),
\end{equation}
where $\mathrm{proj}_{[0,1]}$ denotes the projection onto the interval $[0,1]$.
Let $\hat{C}_{j,i}(s)$ denote the empirical mean of resource $j$ observations, and let $n_i(s)$ denote the number of times arm $i$ has been selected by time $s$. We define the following upper and lower bounds for the resource consumption:
\begin{equation}
C^U_{j,i}(s) = \hat{C}_{j,i}(s) + \sqrt{\frac{3\log T}{n_i(s)}}, \quad
C^L_{j,i}(s) = \hat{C}_{j,i}(s) - \sqrt{\frac{3\log T}{n_i(s)}}.
\end{equation}
Then, with probability at least $1-\frac{ m \log T + 2d}{T^2}$, the expected reward and resource consumption are contained within their respective confidence intervals:
\begin{equation}
\begin{gathered}
   r_i \in [r^L_i (s), r^U_i (s)],\: i \in [m],\: s \in [T], \quad
   C_{j,i} \in [C^L_{j,i}(s), C^U_{j,i}(s)],\: i \in [m],\: j \in [d],\: s \in [T]
\end{gathered}
\end{equation}
\end{proposition}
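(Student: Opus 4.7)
The plan is to apply the quantum univariate mean estimator (\lem{quant-uni}) to establish the reward confidence bounds and the classical Chernoff–Hoeffding inequality (\lem{classic-monto}) to establish the resource consumption bounds, then combine the two tail estimates via a union bound over all relevant QMC executions and classical updates. The excerpt already hands us \lem{algo1-times}, which caps the total number of QMC invocations over the algorithm's execution by $m\log T$. This is the key ingredient for controlling the aggregate failure probability on the quantum side.

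For the reward side, I would proceed as follows. Each time the QMC is invoked for arm $i$ it uses $n_i^{\mathrm{QMC}}(s)$ oracle queries, and the doubling schedule of $\mathrm{rad}_i$ in \algo{quant-bwk1} is designed so that picking accuracy $\epsilon = 2C_1\log T / n_i^{\mathrm{QMC}}(s)$ together with failure probability $1/T^2$ in \lem{quant-uni} is consistent with the queries spent, since $(C_1/\epsilon)\log T^2 = n_i^{\mathrm{QMC}}(s)$. Thus each invocation guarantees $|\hat{r}_i(s) - r_i| \leq 2C_1\log T / n_i^{\mathrm{QMC}}(s)$ with probability at least $1 - 1/T^2$. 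Union-bounding over the at most $m\log T$ QMC invocations given by \lem{algo1-times} yields total failure probability at most $m\log T/T^2$, and projecting onto $[0,1]$ only tightens the interval so containment of $r_i$ is preserved.

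For the resource side, at each round exactly one arm is played, producing one fresh $d$-dimensional sample. Invoking \lem{classic-monto} on arm $i$ at time $s$ with $\delta = 2d/T^3$ gives $\log(2d/\delta) = 3\log T$, and inverting the sample-count bound returns precisely $\epsilon = \sqrt{3\log T / n_i(s)}$. The estimator $\hat{\boldsymbol{C}}_{\cdot,i}(s)$ only changes when arm $i$ is pulled, so the number of distinct estimator states across all arms and all times is bounded by the total number of pulls, which is at most $T$. A union bound over these $T$ events gives an aggregate resource failure probability of at most $2d/T^2$. Combining with the reward bound yields $(m\log T + 2d)/T^2$, matching the stated probability.

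The only genuinely delicate point, and what I expect to be the main obstacle, is bookkeeping: verifying that the inductive doubling of $n_i^{\mathrm{QMC}}(s)$ is synchronized with the QMC accuracy parameter so that one invocation at precision $2C_1\log T/n_i^{\mathrm{QMC}}(s)$ exactly consumes $n_i^{\mathrm{QMC}}(s)$ queries, and showing that the classical estimator's update schedule does not inflate the number of distinct events beyond $T$. Once this bookkeeping is pinned down, everything else is a routine application of the two concentration lemmas followed by a union bound, and no new probabilistic machinery is needed.
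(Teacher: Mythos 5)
Your proposal is correct and follows essentially the same route as the paper: bound the number of QMC invocations by $m\log T$ via \lem{algo1-times}, apply \lem{quant-uni} with per-invocation failure probability $1/T^2$ and \lem{classic-monto} with $\delta = 2d/T^3$ over at most $T$ classical updates, and union bound to obtain $(m\log T + 2d)/T^2$. Your explicit verification that the query count $n_i^{\mathrm{QMC}}(s)$ matches the accuracy $2C_1\log T/n_i^{\mathrm{QMC}}(s)$ is bookkeeping the paper leaves implicit, but it is consistent with the algorithm's doubling schedule and does not change the argument.
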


\paragraph{Proof of \thm{main-independent}}
Once we have established the high‐probability confidence intervals in \prop{bounds}, we essentially follow the proof outline of Ref.~\cite{badanidiyuru2018bandits} while incorporating the tighter confidence bounds afforded by QMC to achieve an improved quantum regret bound.

Let the vector $\boldsymbol{z}(s)$ denote the $i_s$-th coordinate vector. Then, the expressions $\boldsymbol{r}^U(s) \boldsymbol{z}(s)$ and $\boldsymbol{C}^L(s) \boldsymbol{z}(s)$ represent the UCB of the reward and the LCB of the resource consumption vector, respectively, for the chosen arm $i(s)$.
Let $\boldsymbol{y}(s)=\boldsymbol{v}(s) / \Vert \boldsymbol{v}(s)\Vert_1$. The following equation satisfies in each stage $s$, where $\boldsymbol{\xi}^*$ is the optimal solution of $\mathrm{OPT}_\mathrm{LP}$ defined in \eq{primal-LP}:
\begin{equation}
\boldsymbol{z}(s) \in \arg\max_{\Vert \boldsymbol{z} \Vert_1=1} \frac{\boldsymbol{r}^U(s)^{\top} \boldsymbol{z}}{\boldsymbol{y}(s)^{\top} \boldsymbol{C}^L(s) \boldsymbol{z}},   
\end{equation}
\begin{equation}
    \frac{\boldsymbol{r}^U(s)^{\top} \boldsymbol{z}(s)}{\boldsymbol{y}(s)^{\top} \boldsymbol{C}^L(s) \boldsymbol{z}(s)} \geq \frac{\boldsymbol{r}^U(s)^{\top} \boldsymbol{\xi}^*}{\boldsymbol{y}(s)^{\top} \boldsymbol{C}^L(s) \boldsymbol{\xi}^*}.
\end{equation}
Summing over $s$, we have
\begin{equation}
\label{eq:left-right}
    \sum_{s=1}^{\tau-1} (\boldsymbol{r}^U(s)^{\top} \boldsymbol{z}(s))(\boldsymbol{y}(s)^{\top} \boldsymbol{C}^L(s) \boldsymbol{\xi}^*) \geq \sum_{s=1}^{\tau-1} (\boldsymbol{r}^U(s)^{\top} \boldsymbol{\xi}^*)(\boldsymbol{y}(s)^{\top} \boldsymbol{C}^L(s) \boldsymbol{z}(s)).
\end{equation}
Then, the right side of the above equation satisfies:  
\begin{equation}
\begin{aligned}
\label{eq:left}
    \sum_{s=1}^{\tau-1} (\boldsymbol{r}^U(s)^{\top} \boldsymbol{\xi}^*)(\boldsymbol{y}(s)^{\top} \boldsymbol{C}^L(s) \boldsymbol{z}(s))& \geq \sum_{s=1}^{\tau-1} (\boldsymbol{r}^{\top} \boldsymbol{\xi}^*)(\boldsymbol{y}(s)^{\top} \boldsymbol{C}^L(s) \boldsymbol{z}(s)) 
    = \mathrm{OPT}_{\mathrm{LP}} \sum_{s=1}^{\tau-1}(\boldsymbol{y}(s)^{\top} \boldsymbol{C}^L(s) \boldsymbol{z}(s)).
\end{aligned}
\end{equation}
Here, the first line comes from $\boldsymbol{r}$ lies in the confidence interval, and the second line is due to $\mathrm{OPT}_\mathrm{LP}=\boldsymbol{r}^\top \boldsymbol{\xi}^*$. 

Next, we introduce the regret bound of multiplicative weight, a folklore result in online learning. 
\begin{proposition}[\cite{freund1997decision}]
\label{prop:online-learning}
    For any fixed $\epsilon$, $\forall \boldsymbol{y} \in [0,1]^d$ satisfying $\Vert \boldsymbol{y} \Vert_1=1$ and any sequence of $\boldsymbol{C}^L(s) \boldsymbol{z}(s)$, we have
    \begin{equation}
    \sum_{s=1}^{\tau-1} (\boldsymbol{y}(s)^{\top} \boldsymbol{C}^L(s) \boldsymbol{z}(s)) \geq 
    (1-\epsilon)\sum_{s=1}^{\tau-1} (\boldsymbol{y}^{\top} \boldsymbol{C}^L(s) \boldsymbol{z}(s)) -  \ln d / \epsilon.
    \end{equation}
\end{proposition}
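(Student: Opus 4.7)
The plan is to prove the bound via the standard multiplicative-weights potential argument, tracking $\Phi_s := \|\boldsymbol{v}(s)\|_1$ across rounds. Initially $\Phi_1 = d$ since $\boldsymbol{v}(1) = \boldsymbol{1}_d$, and Phase~\uppercase\expandafter{\romannumeral2} of \algo{quant-bwk1} updates according to $v(s+1)_j = v(s)_j (1+\epsilon)^{g_j(s)}$, where I abbreviate $\boldsymbol{g}(s) := \boldsymbol{C}^L(s)\boldsymbol{z}(s)$. Since $\boldsymbol{C}^L(s) \in [0,1]^{d\times m}$ and $\boldsymbol{z}(s)$ is a standard basis vector, each $g_j(s) \in [0,1]$, and by construction $\boldsymbol{y}(s) = \boldsymbol{v}(s)/\Phi_s$.

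First, I would establish a per-step upper bound on $\Phi_{s+1}/\Phi_s$. Because $(1+\epsilon)^x$ is convex in $x$ on $[0,1]$, it lies below its secant from $(0,1)$ to $(1,1+\epsilon)$, yielding $(1+\epsilon)^x \leq 1 + \epsilon x$. Substituting coordinate-wise and using $\sum_j v(s)_j g_j(s) = \Phi_s \boldsymbol{y}(s)^\top \boldsymbol{g}(s)$,
\[
\Phi_{s+1} = \sum_{j=1}^d v(s)_j (1+\epsilon)^{g_j(s)} \leq \Phi_s \bigl(1 + \epsilon\, \boldsymbol{y}(s)^\top \boldsymbol{g}(s)\bigr) \leq \Phi_s \exp\bigl(\epsilon\, \boldsymbol{y}(s)^\top \boldsymbol{g}(s)\bigr),
\]
where the last step applies $1+z \leq e^z$. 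Telescoping from $s=1$ to $\tau-1$ then gives $\Phi_\tau \leq d \exp\bigl(\epsilon \sum_{s=1}^{\tau-1} \boldsymbol{y}(s)^\top \boldsymbol{g}(s)\bigr)$.

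Second, I would lower bound $\Phi_\tau$. For any fixed $j \in [d]$, the nonnegativity of the weights and iterating the update give $\Phi_\tau \geq v(\tau)_j = (1+\epsilon)^{\sum_{s=1}^{\tau-1} g_j(s)}$. Taking logarithms and combining with the upper bound yields
\[
\ln(1+\epsilon) \sum_{s=1}^{\tau-1} g_j(s) \leq \ln d + \epsilon \sum_{s=1}^{\tau-1} \boldsymbol{y}(s)^\top \boldsymbol{g}(s).
\]
Dividing by $\epsilon$ and invoking the bound $\ln(1+\epsilon)/\epsilon \geq 1 - \epsilon$ (which follows from the elementary inequality $\ln(1+x) \geq x - x^2/2$ for $x \geq 0$) produces the per-coordinate statement $\sum_{s=1}^{\tau-1} \boldsymbol{y}(s)^\top \boldsymbol{g}(s) \geq (1-\epsilon) \sum_{s=1}^{\tau-1} g_j(s) - \ln d/\epsilon$.

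Since this inequality holds uniformly for every $j \in [d]$, the final step is simply to take a convex combination with weights $y_j$ of any distribution $\boldsymbol{y}$ with $\|\boldsymbol{y}\|_1 = 1$ and $\boldsymbol{y} \in [0,1]^d$, which preserves the inequality and yields
\[
\sum_{s=1}^{\tau-1} \boldsymbol{y}(s)^\top \boldsymbol{g}(s) \geq (1-\epsilon) \sum_{s=1}^{\tau-1} \boldsymbol{y}^\top \boldsymbol{g}(s) - \frac{\ln d}{\epsilon}.
\]
This is a textbook multiplicative-weights analysis, so no step poses a genuine obstacle; the only sanity check is confirming $g_j(s) \in [0,1]$ so that the secant bound $(1+\epsilon)^{g_j(s)} \leq 1+\epsilon g_j(s)$ applies, which holds by the range of $\boldsymbol{C}^L(s)$ combined with $\boldsymbol{z}(s)$ being a coordinate indicator.
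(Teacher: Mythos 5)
Your proof is correct. The paper does not prove this proposition at all---it is imported as a folklore result cited to Freund--Schapire---and your potential-function argument (bounding $\Phi_s = \Vert\boldsymbol{v}(s)\Vert_1$ above via the secant inequality $(1+\epsilon)^x \leq 1+\epsilon x$ for $x\in[0,1]$ and below via a single coordinate, then using $\ln(1+\epsilon)/\epsilon \geq 1-\epsilon$ and averaging over $j$ with weights $y_j$) is exactly the standard proof of the cited result, with all steps valid for the gains $g_j(s)=(\boldsymbol{C}^L(s)\boldsymbol{z}(s))_j\in[0,1]$ arising here.
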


For the left-hand side of Eq.~\eq{left-right}, we define the vector $\bar{\boldsymbol{y}} = \frac{1}{\mathrm{REW}_\mathrm{UCB}} \sum_s (\boldsymbol{r}^U(s)^\top \boldsymbol{z}(s))\boldsymbol{y}(s)^\top$, where $\mathrm{REW}_\mathrm{UCB} = \sum_s (\boldsymbol{r}^U(s)^\top \boldsymbol{z}(s))$ represents the UCB of total reward in the stages $s$. Consequently, we have $\bar{\boldsymbol{y}} \in [0,1]^d, \Vert \bar{\boldsymbol{y}} \Vert_1 = 1$. Therefore, the following inequality holds:
\begin{align}\label{eq:right}
    \sum_s (\boldsymbol{r}^U(s)^{\top} \boldsymbol{z}(s))(\boldsymbol{y}(s)^{\top} \boldsymbol{C}^L(s) \boldsymbol{\xi}^*) \leq  \sum_s (\boldsymbol{r}^U(s)^{\top} \boldsymbol{z}(s))(\boldsymbol{y}(s)^{\top} \boldsymbol{C}  \boldsymbol{\xi}^*)
    ={\mathrm{REW}_\mathrm{UCB}} \bar{\boldsymbol{y}}^{\top} \boldsymbol{C} \boldsymbol{\xi}^* \leq B\cdot\mathrm{REW}_\mathrm{UCB}.
\end{align}

After successfully deriving both the left and right sides of \eq{left-right}, we proceed to combine them in order to establish the upper bound on regret. Before doing so, we introduce the following proposition, which bounds the difference between the LCB of the cumulative resource consumption and the actual resource consumption. The derivation of this proposition follows the same approach as \cite[Lemma 5.6]{badanidiyuru2018bandits}. Our \prop{bounds} ensure the accuracy of the classical estimator for the resource consumption.

\begin{proposition}[{\cite[Lemma 5.6]{badanidiyuru2018bandits}}] Let $K=\mathcal{O} (\log ( d T))$ and  $\boldsymbol{c}_s$ be the random resource consumption of arm $i(s)$ at time $s$. Then we have the following high probability events.
\label{prop:mean-mu}
    \begin{equation}
        \Vert \sum_{s=1}^{\tau-1}   (\boldsymbol{c}_{s}- \boldsymbol{C}^L(s) \boldsymbol{z}(s)) \Vert_\infty
        \leq
       \mathcal{O}\left( \sqrt{K m B} + K m \log T \right).
    \end{equation}
\end{proposition}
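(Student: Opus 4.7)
Since \prop{mean-mu} concerns the classical resource-estimation statistics in \algo{quant-bwk1}---recall that, per the discussion before \prop{bounds}, the resource UCBs and LCBs are computed classically via \lem{classic-monto}---the quantum Monte Carlo subroutine plays no role here. My plan is to reproduce the classical argument of \cite[Lemma 5.6]{badanidiyuru2018bandits}, after verifying that our \prop{bounds} yields exactly the confidence intervals its analysis requires. The infinity norm will be controlled coordinate by coordinate, with a union bound at the end.

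Fix $j \in [d]$ and decompose
\begin{equation*}
\sum_{s=1}^{\tau-1}\bigl(c_{s,j} - C^L_{j,i(s)}(s)\bigr) \;=\; \underbrace{\sum_{s=1}^{\tau-1}\bigl(c_{s,j} - C_{j,i(s)}\bigr)}_{(\mathrm{I})} \;+\; \underbrace{\sum_{s=1}^{\tau-1}\bigl(C_{j,i(s)} - C^L_{j,i(s)}(s)\bigr)}_{(\mathrm{II})}.
\end{equation*}
Term $(\mathrm{I})$ is a martingale whose increments lie in $[-1,1]$ and whose conditional variance is at most $C_{j,i(s)}$. Since the algorithm halts as soon as some budget is spent, $\sum_s c_{s,j} \leq B+1$; a Bernstein/Freedman inequality, self-bounding the predictable quadratic variation against $\sum_s c_{s,j}$, then yields $|(\mathrm{I})| = \mathcal{O}(\sqrt{KB} + K)$ with failure probability $\mathcal{O}(1/T^2)$. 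Term $(\mathrm{II})$ uses the confidence radius from \prop{bounds}: $C_{j,i(s)} - C^L_{j,i(s)}(s) \leq 2\sqrt{3K/n_{i(s)}(s)}$ on the good event, and summing the harmonic-type tail $\sum_{k=1}^{n_i} k^{-1/2} \leq 2\sqrt{n_i}$ arm by arm gives $(\mathrm{II}) = \mathcal{O}\bigl(\sqrt{K}\sum_i \sqrt{n_i(\tau)}\bigr)$.

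The hard part will be turning $\sum_i \sqrt{n_i(\tau)}$ into a $\sqrt{mB}$-type quantity: a naive Cauchy--Schwarz only gives $\sqrt{m\sum_i n_i(\tau)} \leq \sqrt{mT}$, which is loose when $B \ll T$. To close this gap I expect to need the argument from \cite{badanidiyuru2018bandits} that leverages the ``bang-per-buck'' arm-selection rule of \algo{quant-bwk1} together with the binding-resource identity $\sum_i n_i(\tau)\, C_{j^\star, i} \leq B + 1$ to charge each arm's exploration against its actual spending, thereby producing the sharper $\sqrt{KmB}$ bound. Finally I would union-bound over the $d$ resources (absorbed into the factor $K = \mathcal{O}(\log(dT))$) and add an $\mathcal{O}(Km\log T)$ term from the Phase~I warm-up, during which each of the $m$ arms is played $\Theta(K\log T)$ rounds before its LCB becomes informative; combining these estimates yields the stated $\mathcal{O}(\sqrt{KmB} + Km\log T)$ bound.
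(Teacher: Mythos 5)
The paper does not actually prove \prop{mean-mu}: it defers entirely to \cite[Lemma 5.6]{badanidiyuru2018bandits} with the remark that \prop{bounds} supplies the needed confidence intervals, so your attempt must be measured against that cited argument. Your skeleton matches it: the split into a martingale term $(\mathrm{I})$ and a confidence-radius term $(\mathrm{II})$, the Freedman-type self-bounding of the predictable variation of $(\mathrm{I})$ against $\sum_s c_{s,j}\leq B+1$ to get $\mathcal{O}(\sqrt{KB}+K)$, the per-coordinate treatment with the union bound over $j\in[d]$ absorbed into $K$, and the additive $\mathcal{O}(Km\log T)$ warm-up term are all the right pieces.

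The genuine gap is exactly the one you flag and then do not close, and your proposed fix points at the wrong mechanism. In \cite{badanidiyuru2018bandits} the $\sqrt{KmB}$ bound on $(\mathrm{II})$ does not come from the bang-per-buck selection rule at all; it comes from the \emph{variance-adaptive} confidence radius $\mathrm{rad}(\nu,N)=\sqrt{C_{\mathrm{rad}}\nu/N}+C_{\mathrm{rad}}/N$, where $\nu$ is the empirical mean consumption. With that radius, arm $i$'s total contribution to $(\mathrm{II})$ on resource $j$ is $\sum_{k=1}^{N_i}\mathrm{rad}(\nu_{i,j},k)=\mathcal{O}\bigl(\sqrt{C_{\mathrm{rad}}\,\nu_{i,j}N_i}+C_{\mathrm{rad}}\log N_i\bigr)$, and since $\nu_{i,j}N_i$ is (up to the already-controlled deviations) arm $i$'s actual spending on resource $j$, the stopping rule gives $\sum_i \nu_{i,j}N_i\leq B+\mathcal{O}(1)$ for \emph{every} $j$; Cauchy--Schwarz over arms then yields $\sqrt{C_{\mathrm{rad}}mB}$. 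No property of the arm-selection rule is used. By contrast, the Hoeffding-type radius $\sqrt{3\log T/n_i(s)}$ that \prop{bounds} actually defines gives $N_i\cdot\sqrt{3\log T/N_i}=\sqrt{3N_i\log T}$, which is blind to how little the arm consumes, and — as you correctly compute — cannot do better than $\sqrt{KmT}$. So to complete the proof you must replace the resource confidence radius by the variance-adaptive one (this also exposes a mismatch between \prop{bounds} as stated and the bound claimed in \prop{mean-mu}); charging exploration against spending via the selection rule will not substitute for it.
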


Now, we combine the results of \prop{mean-mu}, \eq{left}, and \eq{right} and apply them to \eq{left-right}. Denote $c_{s}$ as the random resource consumption in stage $s$, and set $\epsilon=\sqrt{\frac{\ln d}{B}}$. We have
\begin{equation}
\begin{aligned}
\label{eq:cost}
    \mathrm{REW}_\mathrm{UCB} 
    &\geq \mathrm{OPT}_{\mathrm{LP}} \left(\frac{(1-\epsilon)}{B}\sum_{s=1}^{\tau-1} (\boldsymbol{y}^{\top} \boldsymbol{C}^L(s) \boldsymbol{z}(s)) - \frac{\ln d}{B \epsilon} \right) \\
    & =\mathrm{OPT}_{\mathrm{LP}} \left( \frac{(1-\epsilon)}{B} \sum_{s=1}^{\tau-1} \boldsymbol{y}^{\top} \boldsymbol{c}_{s}
    - \frac{(1-\epsilon)}{B} \sum_{s=1}^{\tau-1}  \boldsymbol{y}^{\top} (\boldsymbol{c}_{s}-\boldsymbol{C}^L(s) \boldsymbol{z}(s)) 
    - \frac{\ln d}{B \epsilon} \right) \\
    & \geq \mathrm{OPT}_{\mathrm{LP}} \left( \frac{(1-\epsilon)}{B} (B-2m C_1 \log T)
    - \frac{1}{B} \Vert \sum_{s=1}^{\tau-1}  (\boldsymbol{c}_{s}-\boldsymbol{C}^L(s) \boldsymbol{z}(s)) \Vert_\infty
    - \frac{\ln d}{B \epsilon} \right) \\
    & \geq \mathrm{OPT}_{\mathrm{LP}} \left( \frac{(1-\epsilon)}{B} (B-2m C_1 \log T)
    - \frac{\ln d}{B \epsilon} \right) 
    - \mathcal{O}\left(\sqrt{\log (dT)}\right)\left(\mathrm{OPT}_{\mathrm{LP}} \sqrt{\frac{m}{B}}\right)-\mathcal{O}\left(m\log(dT)\log T\right)  \\
    & \geq \mathrm{OPT}_{\mathrm{LP}} \left(
    1- 2\sqrt{\frac{\ln d}{B}} - \frac{2m C_1 \log T}{B}
    \right)
    - \mathcal{O}\left(\sqrt{\log (dT)}\right)\left(\mathrm{OPT}_{\mathrm{LP}} \sqrt{\frac{m}{B}}\right)-\mathcal{O}\left(m\log(dT)\log T\right), 
\end{aligned}
\end{equation}

The third inequality follows from the fact that, upon termination of the algorithm, at least one resource must be exhausted. Let us choose the fixed strategy $\boldsymbol{y}$ to be the unit vector $\boldsymbol{e}_i$ corresponding to the exhausted resource $i$. Then, immediately after the preparation stage, the total consumption of that resource is guaranteed to be at least $B-2mC_1\log T$.

Let $\mathrm{REW}$ denote the expected total reward, and let $N_i$ represent the number of times the QMC method is invoked for arm $i$. Based on \prop{bounds}, the following inequality holds: 
\begin{equation}
\begin{aligned}
\label{eq:rew}
    \mathrm{REW} &\geq \sum_{s=1}^{\tau-1} r_{i(s)} \\
    & = \sum_{s=1}^{\tau-1}( r_{i(s)} - (\boldsymbol{r}^U(s)^{\top} \boldsymbol{z}(s)) +  (\boldsymbol{r}^U(s)^{\top} \boldsymbol{z}(s)) ) \\
    & = \mathrm{REW}_\mathrm{UCB} - \sum_{s=1}^{\tau-1} ( (\boldsymbol{r}^U(s)^{\top} \boldsymbol{z}(s)) - r_{i(s)} )\\ 
    & \geq \mathrm{REW}_\mathrm{UCB} - 
    \sum_{i=1}^{m} \sum_{n=1}^{N_i} \mathrm{rad}_{i} \frac{2C_1}{\mathrm{rad}_{i}/2} \log T\\ 
    & = \mathrm{REW}_\mathrm{UCB} - 4 C_1 \log T
    \sum_{i=1}^{m} N_i \\ 
    & \geq \mathrm{REW}_\mathrm{UCB}
    -4 m C_1 \log T   \log T. 
\end{aligned}
\end{equation}

Combining \eq{cost}, \eq{rew}, the total regret satisfies
\begin{align}
    \mathrm{Regret} & \leq \mathrm{OPT}_{\mathrm{LP}} - \mathrm{REW} \nonumber \\
    & \leq \mathrm{OPT}_{\mathrm{LP}} \left(
    2 \sqrt{\frac{\ln d}{B}} 
    + \frac{2m C_1 \log T}{B} \right) 
    + 4 m C_1 \log^2 T 
    + \mathcal{O}\left(\sqrt{\log (dT)}\right)\left(\mathrm{OPT}_{\mathrm{LP}} \sqrt{\frac{m}{B}}\right)+\mathcal{O}\left(m\log(dT)\log T\right)\nonumber\\
    & = \mathcal{O}\left(\sqrt{\log (dT)}\right)\left(\mathrm{OPT}_{\mathrm{LP}} \sqrt{\frac{m}{B}}\right) + \mathcal{O}\left(m\log(dT)\log T\right),\label{eq:theorem-2-last-line}
\end{align}
where we apply the fact $m\leq B/\log(dT)$ in Eq.~\eq{theorem-2-last-line}. This concludes \thm{main-independent}.

\section{Proofs for Problem-Dependent Quantum BwK}
In this section, we will give proof details for the problem-dependent quantum BwK algorithm (\algo{quant-bwk2}), including its time complexity analysis as well as regret analysis.

\subsection{Proof of \prop{time-lp}}
\label{append:proof of time-lp}

Recall the original LP defined in Eq.~\eq{lcb-lp}:
\begin{equation*}
        \mathrm{OPT}^L_{\mathrm{LP}}(t):=
        \max_{\boldsymbol{\xi}}\  {\boldsymbol{r}^L}(t)^\top \boldsymbol{\xi},\ 
        \mathrm{s.t.}\ {\boldsymbol{C}^U(t)} \boldsymbol{\xi} \leq  \boldsymbol{B},\ 
         \boldsymbol{\xi} \geq {0}, 
\end{equation*}
which needs to be solved to $\epsilon_{\mathrm{LP}} \cdot T$-optimality and $\epsilon_{\mathrm{LP}} \cdot T$-approximation. Since existing quantum approximate LP solvers require all coefficients to lie within $[-1,1]$, we rescale the above LP by a factor of $T$ and solve the following equivalent formulation with accuracy $\epsilon_{\mathrm{LP}}$:
\begin{equation}
        \max_{\boldsymbol{\xi}^{\prime}}\   {\boldsymbol{r}^L}(t)^\top \boldsymbol{\xi}^{\prime},\ 
        \mathrm{s.t.}\  {\boldsymbol{C}^U(t)} \boldsymbol{\xi}^{\prime} \leq  \boldsymbol{B}/T, \ 
         \boldsymbol{\xi}^{\prime} \geq {0}.
\end{equation}
Under this scaling, the LP coefficients satisfy the boundedness condition in $[-1,1]$, and the solution satisfies $\sum_i \xi^{\prime}_i \leq 1$. Let $\boldsymbol{\eta}^\prime$ denote the dual variable; then it holds that $\sum_i \eta^{\prime}_i \leq 1/B$. By existing reduction from general LPs to zero-sum games \cite[Section 4]{van2019quantum}, solving the LP within $\epsilon_{\mathrm{LP}}$ accuracy requires solving the corresponding zero-sum game to additive error $\epsilon^{\prime\prime} = (\epsilon_{\mathrm{LP}}) / (6(\frac{1}{B}+1))$. The associated zero-sum game formulation is:
\begin{equation}
    \min_{\boldsymbol{y}^{\prime\prime} \in \Delta^{n+2}, \lambda \in \mathbb{R}} \lambda, \quad
    \mathrm{s.t.}\ \boldsymbol{A}^{\prime\prime} \boldsymbol{y}^{\prime\prime} \leq \lambda \boldsymbol{e}, 
\end{equation}
where the matrix $\boldsymbol{A}^{\prime\prime}$ is given by:
\begin{equation}
\boldsymbol{A}^{\prime\prime} = 
    \begin{pmatrix}
    \boldsymbol{e}^\top & 1 & -1 \\
    -\boldsymbol{e}^\top & 1 & 1 \\
    -{\boldsymbol{r}^L(t)}^\top & 0 & \alpha  \\
    \boldsymbol{C}^{U}(t) & 0 & -\boldsymbol{B}/T
    \end{pmatrix}.
\end{equation}
The quantum algorithm for solving zero-sum games achieves a time complexity of $\tilde{\mathcal{O}}\left( \sqrt{m+n} \cdot \epsilon^{-2.5} \right)$ \cite{bouland2023quantum,gao2024logarithmic}. Applying this algorithm to the scaled LP yields a total running time of:
\begin{equation}
    \log(1/(\epsilon_{\mathrm{LP}}/T)) \tilde{\mathcal{O}}\left(\sqrt{m+n}\frac{(1\cdot(\frac{1}{B}+1))^{2.5}}{(\epsilon_{\mathrm{LP}})^{2.5}}
    \right) 
    = \tilde{\mathcal{O}}\left(\frac{\sqrt{m+n}}{\epsilon_{\mathrm{LP}}^{2.5}} \right).
\end{equation}

The same rescaling technique with factor $T$ can be applied to the LPs in \eq{ucb-i} and \eq{ucb-j}, and with factor $T - t$ to those in \eq{residual-lp} and \eq{residual-lp2}, yielding the desired quantum time complexity in each case.

\subsection{Proof of \thm{time-complexity}}
\label{append:proof-time}

The quantum time complexity in each round of Phase \uppercase\expandafter{\romannumeral1} is determined by two main components: the cost of QMC and the cost of quantum LP solver. For QMC, if an arm is queried $N$ times, the quantum time complexity for estimating its expected outcome is $\tilde{\mathcal{O}}\left(N + d\log d\right)$. Since each arm is pulled $\tilde{\mathcal{O}}\left( \sqrt{d} \log T \right)$ times with high probability according to \prop{phase 1}, and the number of pulls per arm increases geometrically across rounds, the total sampling complexity is bounded by $\sum_{i=0}^{\mathcal{O}( \log( \sqrt{d}\log T) )} m( 2^i + d\log d) = \mathcal{O}\left( m\sqrt{d}\log T  + m d \log d \log( \sqrt{d}\log T) )\right)$. 
The quantum LP solver in \lem{quantum solver 1} requires oracle access to the coefficients of the LP instance. In each round, the algorithm solves $\mathcal{O}(m + d)$ LPs. To support quantum access to these LPs, the algorithm constructs the required oracles based on the outputs of the QMC procedure, which incurs an additional cost of $\tilde{\mathcal{O}}(md)$ for preparing the relevant QRAM structures. Combining these components, the total quantum time complexity of Phase \uppercase\expandafter{\romannumeral1} is:
\begin{equation}
\begin{aligned}
\label{eq:time quantum 1-1}
    \mathcal{O}( \log( \sqrt{d}\log T) ) \left( \tilde{\mathcal{O}}(md) + \mathcal{O}(m+d) \tilde{\mathcal{O}}\left(\frac{\sqrt{m+d}}{\epsilon_{\mathrm{LP}}^{2.5}}  \right) \right) 
    + \mathcal{O}\left( m\sqrt{d}\log T  + m d \log d \log( \sqrt{d}\log T) )\right).
\end{aligned}
\end{equation}
In Phase \uppercase\expandafter{\romannumeral2}, the algorithm solves an LP problem $\mathcal{O}(T)$ times to exhaust the binding resources. In each round, the algorithm call an quantum oracle and immediately sample it, acting as a classical sample. The estimation will be updated by the classical Monte Carlo methods in \lem{classic-monto} in time $\mathcal{O}(d)$, and updating the QRAM also incurs a cost of $\tilde{\mathcal{O}}(d)$. Consequently, the time complexity in this phase is:
\begin{equation}
\label{eq:time quantum 1-2}
    \mathcal{O}(T) \left(\tilde{\mathcal{O}}(d) + \tilde{\mathcal{O}}\left( \frac{\sqrt{m+d}}{\epsilon_{\mathrm{LP}}^{2.5}}  \right) \right).
\end{equation}
Combining \eq{time quantum 1-1} and \eq{time quantum 1-2}, the overall time complexity of \algo{quant-bwk2} with quantum LP solver in \lem{quantum solver 1} is
\begin{equation}
   \tilde{\mathcal{O}} \left( md + \frac{(m+d)^{1.5}}{\epsilon_{\mathrm{LP}}^{2.5}}+ \left( \frac{\sqrt{m+d}}{\epsilon_{\mathrm{LP}}^{2.5}}  +d\right) T \right).
\end{equation}

\paragraph{Time complexity of the classical algorithm in \cite{li2021symmetry}}
We next analyze the time complexity of the classical two-phase algorithm proposed in \cite{li2021symmetry}, which requires solving LPs to high precision. Unlike the approximate LP solvers considered in our work, this algorithm relies on near-exact solutions. To estimate the cost, we adopt the interior-point method from \cite{jiang2020faster}, which solves LPs in time $\mathcal{O}(n^{2+\min\{\frac{1}{18}, \omega - 2, \frac{1 - \alpha}{2}}\})$, where $\omega$ and $\alpha$ denote the matrix multiplication exponent and its dual, respectively. Under the current state-of-the-art result showing that $\omega<2.371339$~\cite{alman2025more} and $\alpha>0.31389$~\cite{legall2018improved}, this becomes $\mathcal{O}(n^{2.372})$.

In Phase \uppercase\expandafter{\romannumeral1}, the algorithm identifies the optimal arm and the corresponding binding constraints. Each round involves $m$ arm pulls and $\mathcal{O}(m + d)$ LPs. The UCB/LCB updates require $\mathcal{O}(md)$ time per round. Since this phase runs for $\mathcal{O}(\log T)$ rounds, the total complexity is:
\begin{equation}
    \mathcal{O}(\log T)\left( \mathcal{O}(md) +  \mathcal{O}(m+d) \mathcal{O}\left(\max\{m,d\}^{2.372}\right) \right).
\end{equation}
In Phase \uppercase\expandafter{\romannumeral2}, each of the $\mathcal{O}(T)$ rounds involves pulling an arm and solving a single LP.
\begin{equation}
    \mathcal{O}(T)\left( \mathcal{O}(d) + \mathcal{O}\left(\max\{m,d\}^{2.372}\right) \right).
\end{equation}
Combining both phases, the total time complexity of the classical algorithm in \cite{li2021symmetry} is:
\begin{equation}
    \mathcal{O}\left( \max\{m,d\}^{2.372}\left(T + m+d \right) \right).
\end{equation}

\subsection{Proofs of \prop{phase 1} and \prop{regret 1}}
\label{append:proof-opt-ij}
In each pull of arm, the reward is estimated using the univariate quantum Monte Carlo method in \lem{quant-uni}, while the estimation of the $d$-dimension resource consumption vector of each arm uses the multivariate quantum Monte Carlo method in \lem{quant-multi}. 
In the following discussion, we consider the event that all expectations for the reward and resource consumption are bounded by the corresponding UCBs and LCBs produced by the quantum Monte Carlo method. 
In Phase \uppercase\expandafter{\romannumeral1}, suppose each arm is pulled $N(t) \equiv C_2 n(t) \sqrt{\log n(t)}$ times. As a result, both the univariate and multivariate QMC methods are invoked $2m\log N(t)$ times in total. Given that the failure probability of a single QMC invocation is $\delta_{\mathrm{QMC}}$, this setup ensures that the true expectations fall within the estimated confidence bounds with probability at least $1 - 2m \log N(t) \cdot \delta_{\mathrm{QMC}}$.

We then show that if
\begin{equation}
\label{eq:n(t)}
    n(t) \geq \left( 2 + \frac{1}{b} \right)\frac{C_1 \sqrt{d} \log \left( d/\delta_{\mathrm{QMC}}\right) }{\delta/2 - \epsilon_{\mathrm{LP}}} ,
\end{equation}
then Eqs.~\eq{opt-ij-1}, \eq{opt-ij-2}, and \eq{opt-ij-3} hold.

\paragraph{Proof of \eq{opt-ij-1}}
We introduce the following LP to relate $\mathrm{OPT}^U_i(t)$ with $\mathrm{OPT}_i$
\begin{equation}
\begin{aligned}
\label{eq:optu-opti}
    \max_{\boldsymbol{\xi}} \quad & \boldsymbol{r}^\top \boldsymbol{\xi} , \\
    \mathrm{s.t.}\quad & \boldsymbol{C}\boldsymbol{\xi} \leq (1+ \frac{\sqrt{d}\log \left( d/\delta_{\mathrm{QMC}}\right)}{bn(t)} ) \boldsymbol{B}, \\
    & \xi_i = 0,\: \boldsymbol{\xi} \geq 0 . 
\end{aligned}
\end{equation}
It is easy to see that the optimal solution of \eq{optu-opti} is 
$\left( 1 + \frac{\sqrt{d}\log \left( d/\delta_{\mathrm{QMC}}\right)}{bn(t)} \right)\mathrm{OPT}_i $.

Let $\boldsymbol{\xi}^U(t)$ be the optimal solution of $\mathrm{OPT}_i^U(t)$. Then we will prove that $\boldsymbol{\xi}^U(t)$ is a feasible solution of \eq{optu-opti} and build the relation between $\mathrm{OPT}_i^U(t)$ and $\mathrm{OPT}_i$.

First, we prove the feasibility
\begin{equation}
\begin{aligned}
(\boldsymbol{C} \boldsymbol{\xi}^U(t))_j 
& = (\boldsymbol{C}^L(t) \boldsymbol{\xi}^U(t))_j+((\boldsymbol{C}-\boldsymbol{C}^L(t)) \boldsymbol{\xi}^U (t))_j \\
& \leq B + \Vert (\boldsymbol{C}-\boldsymbol{C}^{L}(t))_{j} \Vert_{\infty}  \Vert \boldsymbol{\xi}^{U}(t) \Vert_{1} \\
& \leq B+  \frac{\sqrt{d}\log \left( d/\delta_{\mathrm{QMC}}\right)}{n(t)} T,\\
& = B\left(1+ \frac{\sqrt{d}\log \left( d/\delta_{\mathrm{QMC}}\right)}{bn(t)}\right).
\end{aligned}
\end{equation}

Then we have
\begin{equation}
\begin{aligned}
\mathrm{OPT}_i^U(t)
& = {\boldsymbol{r}^U}(t)^\top \boldsymbol{\xi}^U(t) \\
& \leq \boldsymbol{r}^\top \boldsymbol{\xi}^U(t) + \frac{C_1 \log \left(1/\delta_{\mathrm{QMC}}\right)}{N(t)} T \\
& \leq \left( 1 + \frac{\sqrt{d}\log \left( d/\delta_{\mathrm{QMC}}\right)}{bn(t)} \right)\mathrm{OPT}_i + \frac{ C_1 \log \left(1/\delta_{\mathrm{QMC}}\right)}{N(t)} T \\
& < \mathrm{OPT}_i + \left( 1 + \frac{1}{b} \right)\frac{ C_1 \sqrt{d} \log \left( d/\delta_{\mathrm{QMC}}\right) }{n(t)} T \\
& \leq \mathrm{OPT}_i + (\frac{\delta}{2} - \epsilon_{\mathrm{LP}})T.
\end{aligned}
\end{equation}
Hence we have established \eq{opt-ij-1}.

\paragraph{Proof of \eq{opt-ij-2}}
Let $\boldsymbol{\eta}$ be the optimal solution to $\mathrm{OPT}_j$, we will prove that
\begin{align}
\boldsymbol{\eta}^\prime(t) = \boldsymbol{\eta} +   \left( \frac{C_1 \log \left( 1/\delta_{\mathrm{QMC}}\right)}{bN(t)} + \left(2+\frac1b \right)\frac{\sqrt{d}\log \left( d/\delta_{\mathrm{QMC}}\right)}{bn(t)} \right)(1,0,\ldots,0)^\top
\end{align}
is the feasible solution to $\mathrm{OPT}^U_j$. We have
\begin{equation}
\begin{aligned}
{\boldsymbol{C}^L}^\top \boldsymbol{\eta}^{\prime}(t)
& ={\boldsymbol{C}^L(t)}^\top \boldsymbol{\eta} +  \frac{C_1 \log \left( 1/\delta_{\mathrm{QMC}}\right) }{bN(t)} {\boldsymbol{C}^L(t)}^\top (1,0,\ldots,0)^\top + \left(2+\frac1b \right) \frac{\sqrt{d}\log \left( d/\delta_{\mathrm{QMC}}\right)}{bn(t)} {\boldsymbol{C}^L(t)}^\top (1,0,\ldots,0)^\top \\
& = \boldsymbol{C}^\top \boldsymbol{\eta}+(\boldsymbol{C}^L(t)-\boldsymbol{C})^\top \boldsymbol{\eta} +  \frac{C_1 \log \left( 1/\delta_{\mathrm{QMC}}\right) }{N(t)} \boldsymbol{1}  + \left(2+\frac1b \right) \frac{\sqrt{d}\log \left( d/\delta_{\mathrm{QMC}}\right)}{n(t)} \boldsymbol{1} \\
& \geq \boldsymbol{r} + \boldsymbol{C}_{j,\cdot}^{\top} - \max_{j\in[d]} \Vert \boldsymbol{C}_{j,\cdot}^{L}(t) - \boldsymbol{C}_{j,\cdot} \Vert_{\infty}(\boldsymbol{1}^{\top} \boldsymbol{\eta}) \boldsymbol{1} + \frac{C_1 \log \left( 1/\delta_{\mathrm{QMC}}\right) }{N(t)} \boldsymbol{1}  + \left(2+\frac1b \right) \frac{\sqrt{d}\log \left( d/\delta_{\mathrm{QMC}}\right)}{n(t)} \boldsymbol{1}  \\
& \geq \boldsymbol{r} + \boldsymbol{C}_{j,\cdot}^{\top} -  \frac{\sqrt{d}\log \left( d/\delta_{\mathrm{QMC}}\right)}{n(t)} \left(1+\frac1b\right) \boldsymbol{1}
+ \frac{C_1 \log \left( 1/\delta_{\mathrm{QMC}}\right) }{N(t)} \boldsymbol{1}  + \left(2+\frac1b \right) \frac{\sqrt{d}\log \left( d/\delta_{\mathrm{QMC}}\right)}{n(t)} \boldsymbol{1} \\
& = \boldsymbol{r} + \boldsymbol{C}_{j,\cdot}^{\top} 
+  \frac{C_1 \log \left( 1/\delta_{\mathrm{QMC}}\right) }{N(t)} \boldsymbol{1}  + \frac{\sqrt{d}\log \left( d/\delta_{\mathrm{QMC}}\right)}{n(t)} \boldsymbol{1} \\
& \geq \boldsymbol{r}^{U}(t) + (\boldsymbol{C}_{j,\cdot}^{U}(t))^{\top}.
\end{aligned}
\end{equation}
The forth line is due to $\boldsymbol{B}^\top \boldsymbol{\eta} - B \leq T$ and $\boldsymbol{1}^\top \boldsymbol{\eta} \leq 1 + \frac{1}{b}$. As a result, the following inequalities hold:
\begin{equation}
\begin{aligned}
\mathrm{OPT}_j^U(t) 
& \leq \boldsymbol{B}^{\top} \boldsymbol{\eta}^{\prime}(t) - B\\
&\leq \boldsymbol{B}^\top \boldsymbol{\eta} - B + \left( \frac{C_1 \log \left( 1/\delta_{\mathrm{QMC}}\right)}{N(t)} + \left(2+\frac1b \right)\frac{\sqrt{d}\log \left( d/\delta_{\mathrm{QMC}}\right)}{n(t)} \right) T \\
& \leq \boldsymbol{B}^\top \boldsymbol{\eta} - B  +  \left(2+\frac1b \right)\frac{C_1\sqrt{d}\log \left( d/\delta_{\mathrm{QMC}}\right)}{n(t)}  T \\
& \leq \mathrm{OPT}_j + \left(\frac{\delta}{2}-\epsilon_{\mathrm{LP}}\right)T.
\end{aligned}
\end{equation}
Hence we have established \eq{opt-ij-2}.

\paragraph{Proof of \eq{opt-ij-3}}
Recall that $\boldsymbol{\xi}^*$ is the optimal solution of $\mathrm{OPT}_\mathrm{LP}$, we will prove that $\left(1-\frac{\sqrt{d}\log \left( d/\delta_{\mathrm{QMC}}\right)}{bn(t)} \right)\boldsymbol{\xi}^*$ is the feasible solution to $\mathrm{OPT}^L_\mathrm{LP}(t)$.
\begin{equation}
\begin{aligned}
B - \left(1-\frac{\sqrt{d}\log \left( d/\delta_{\mathrm{QMC}}\right)}{bn(t)} \right) (\boldsymbol{C}^{U}(t) \boldsymbol{\xi}^{*})_{j}
& = B-(\boldsymbol{C}^U(t) \boldsymbol{\xi}^*)_j + \frac{\sqrt{d}\log \left( d/\delta_{\mathrm{QMC}}\right)}{n(t)}(\boldsymbol{C}^U(t) \boldsymbol{\xi}^*)_j \\
& \geq B-(\boldsymbol{C} \boldsymbol{\xi}^*)_j - ((\boldsymbol{C}^U(t)-\boldsymbol{C}) \boldsymbol{\xi}^*)_j + \frac{\sqrt{d}\log \left( d/\delta_{\mathrm{QMC}}\right)}{bn(t)}(\boldsymbol{C} \boldsymbol{\xi}^*)_j \\
&\geq B- (\boldsymbol{C}\boldsymbol{\xi}^*)_j - \Vert (\boldsymbol{C}^U(t)-\boldsymbol{C})_{j,\cdot} \Vert_\infty \Vert \boldsymbol{\xi}^*\Vert_1 + \frac{\sqrt{d}\log \left( d/\delta_{\mathrm{QMC}}\right)}{bn(t)}(\boldsymbol{C}\boldsymbol{\xi}^*)_j \\
&\geq B- (\boldsymbol{C}\boldsymbol{\xi}^*)_j
- \frac{\sqrt{d}\log \left( d/\delta_{\mathrm{QMC}}\right)}{bn(t)} B
+ \frac{\sqrt{d}\log \left( d/\delta_{\mathrm{QMC}}\right)}{bn(t)}(\boldsymbol{C}\boldsymbol{\xi}^*)_j \\
& = \left(1-\frac{\sqrt{d}\log \left( d/\delta_{\mathrm{QMC}}\right)}{bn(t)}\right) (B- (\boldsymbol{C}\boldsymbol{\xi}^*)_j) \\
& \geq 0.
\end{aligned}
\end{equation}

Then we have the following inequality
\begin{equation}
\begin{aligned}
\mathrm{OPT}_{\mathrm{LP}}^{L}(t)
& \geq \left(1-\frac{\sqrt{d}\log \left( d/\delta_{\mathrm{QMC}}\right)}{bn(t)} \right)(\boldsymbol{r}^L(t))^\top \boldsymbol{\xi}^* \\
&=\boldsymbol{r}^\top \boldsymbol{\xi}^* 
+ \left(\boldsymbol{r}^L(t)-\boldsymbol{r}\right)^\top \boldsymbol{\xi}^*
-\frac{\sqrt{d}\log \left( d/\delta_{\mathrm{QMC}}\right)}{bn(t)} {\boldsymbol{r}^L(t)}^\top \boldsymbol{\xi}^* \\
& \geq \mathrm{OPT}_{\mathrm{LP}} 
-  \Vert \boldsymbol{r}^L(t)-\boldsymbol{r} \Vert_\infty \Vert \boldsymbol{\xi}^* \Vert_1
-\frac{\sqrt{d}\log \left( d/\delta_{\mathrm{QMC}}\right)}{bn(t)} {\boldsymbol{r}}^\top \boldsymbol{\xi}^* \\
& \geq \mathrm{OPT}_{{\mathrm{LP}}}
- \frac{C_1 \log \left( 1/\delta_{\mathrm{QMC}}\right)}{N(t)} T - \frac{\sqrt{d}\log \left( d/\delta_{\mathrm{QMC}}\right)}{bn(t)} T \\
& > \mathrm{OPT}_{{\mathrm{LP}}}
- \left( 1 + \frac{1}{b} \right)\frac{ C_1 \sqrt{d} \log \left( d/\delta_{\mathrm{QMC}}\right) }{n(t)} T \\
& > \mathrm{OPT}_{{\mathrm{LP}}} - \left(\frac{\delta}{2} - \epsilon_{\mathrm{LP}}\right)T.
\end{aligned}
\end{equation}
Hence we have established \eq{opt-ij-3}.
\\\\
In conclusion, let $\delta_{\mathrm{QMC}} = \frac{d}{T^3}$ in \eq{n(t)}. Then, with probability at least $1-\frac{2md}{T^2}$, if $n(t) \geq \left( 2 + \frac{1}{b} \right)\frac{3C_1 \sqrt{d} \log T  }{\delta/2 - \epsilon_{\mathrm{LP}}}$, 
then Eqs.~\eq{opt-ij-1}, \eq{opt-ij-2}, and \eq{opt-ij-3} are all satisfied. This completes the proof of \prop{phase 1}. The result guarantees that, under the assumption $\epsilon_{\mathrm{LP}} \leq \frac{\delta}{4}$, each arm is pulled at most $\tilde{\mathcal{O}} \left( \left(2+\frac{1}{b} \right)\frac{ \sqrt{d} \log T}{\delta} \right)$ times during Phase \uppercase\expandafter{\romannumeral1} of \algo{quant-bwk2}.
To upper bound the regret incurred in Phase \uppercase\expandafter{\romannumeral1}, we introduce the following general regret bound for BwK algorithms:
\begin{proposition}[{\cite[Proposition 1]{li2021symmetry}}]
\label{prop:regret-bound}
    The following inequality holds for any BwK algorithm:
    \begin{equation}
    \mathrm{Regret} \leq
    \sum_{i \in \mathcal{I^\prime}} n_i(t) \Delta_i + 
    \mathbb{E} \left[\boldsymbol{B}^{(\tau)}\right]^\top \boldsymbol{\eta}^*,
    \end{equation}
    where $\Delta_i = \boldsymbol{C}_{\cdot,i} \boldsymbol{\eta}^* - r_i$ for $i \in [m]$.
\end{proposition}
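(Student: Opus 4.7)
The plan is to derive this bound by LP duality together with a primal--dual decomposition of the gap between $\mathrm{OPT}_{\mathrm{LP}}$ and the algorithm's cumulative reward, using complementary slackness to zero out the contribution of optimal arms. Concretely, I would start from the chain $\mathrm{OPT} \le \mathrm{OPT}_{\mathrm{LP}} = \boldsymbol{B}^\top \boldsymbol{\eta}^{*}$, where the equality is strong duality for~\eq{primal-LP}--\eq{dual-LP}. Thus, writing the expected reward as $\mathrm{REW} = \mathbb{E}\bigl[\sum_{i\in[m]} n_i(\tau)\, r_i\bigr]$ (using the tower property to condition on the arm choices and replace each sampled reward by its mean), we get
\begin{equation*}
    \mathrm{Regret} \;\le\; \boldsymbol{B}^\top \boldsymbol{\eta}^{*} \;-\; \mathbb{E}\!\left[\sum_{i\in[m]} n_i(\tau)\, r_i\right].
\end{equation*}

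Next I would split the budget vector at the termination time $\tau$ as $\boldsymbol{B} = \bigl(\boldsymbol{B} - \boldsymbol{B}^{(\tau)}\bigr) + \boldsymbol{B}^{(\tau)}$, where the first summand is exactly the cumulative consumption vector $\sum_{t=1}^{\tau} \boldsymbol{c}_t$. A Wald-type identity (since $\tau$ is a stopping time with respect to the filtration generated by the pulls, and the per-round costs are conditionally independent with conditional mean $\boldsymbol{C}_{\cdot,i(t)}$) gives
\begin{equation*}
    \mathbb{E}\bigl[(\boldsymbol{B}-\boldsymbol{B}^{(\tau)})^\top \boldsymbol{\eta}^{*}\bigr]
    \;=\; \mathbb{E}\!\left[\sum_{i\in[m]} n_i(\tau)\, \boldsymbol{C}_{\cdot,i}^\top \boldsymbol{\eta}^{*}\right].
\end{equation*}
Substituting this back yields
\begin{equation*}
    \mathrm{Regret} \;\le\; \mathbb{E}\!\left[\sum_{i\in[m]} n_i(\tau)\bigl(\boldsymbol{C}_{\cdot,i}^\top\boldsymbol{\eta}^{*}-r_i\bigr)\right] + \mathbb{E}\bigl[\boldsymbol{B}^{(\tau)}\bigr]^\top \boldsymbol{\eta}^{*} \;=\; \mathbb{E}\!\left[\sum_{i\in[m]} n_i(\tau)\Delta_i\right] + \mathbb{E}\bigl[\boldsymbol{B}^{(\tau)}\bigr]^\top \boldsymbol{\eta}^{*}.
\end{equation*}

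To collapse the first sum onto $\mathcal{I}^{\prime}$, I would invoke complementary slackness for the LP pair~\eq{primal-LP}--\eq{dual-LP}: whenever $i \in \mathcal{I}^{*}$, i.e.\ $\xi_i^{*} > 0$, the corresponding dual constraint in~\eq{dual-LP} is tight, so $\boldsymbol{C}_{\cdot,i}^\top \boldsymbol{\eta}^{*} = r_i$ and hence $\Delta_i = 0$. Only the indices $i \in \mathcal{I}^{\prime}$ survive, giving exactly the claimed inequality.

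The main obstacle I anticipate is the rigorous justification of the identity $\mathbb{E}[\sum_t \boldsymbol{c}_t] = \mathbb{E}[\sum_i n_i(\tau)\boldsymbol{C}_{\cdot,i}]$, because $\tau$ depends on the realized cost samples through the budget-exhaustion rule and the arm selections are adaptive. The cleanest way to handle this is to apply optional stopping to each coordinate martingale $M_j^{(t)} := \sum_{s \le t}(c_{j,s} - C_{j,i(s)})$ (bounded increments in $[-1,1]$ and $\tau \le T$ almost surely, so boundedness is immediate), which turns the informal linearity-of-expectation argument into a rigorous identity. Everything else is bookkeeping on top of strong duality and complementary slackness.
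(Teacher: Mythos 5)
The paper does not prove this proposition itself---it is imported verbatim from \cite[Proposition 1]{li2021symmetry}---so there is no in-paper argument to compare against. Your proof is correct and is essentially the standard primal--dual derivation from that reference: strong duality gives $\mathrm{OPT}\le\mathrm{OPT}_{\mathrm{LP}}=\boldsymbol{B}^\top\boldsymbol{\eta}^*$, optional stopping on the bounded-increment coordinate martingales (with $\tau\le T$ a.s.) rigorously converts realized consumption and reward into $\sum_i n_i(\tau)\boldsymbol{C}_{\cdot,i}$ and $\sum_i n_i(\tau)r_i$ in expectation, and complementary slackness kills the $\Delta_i$ terms for $i\in\mathcal{I}^*$; the only (harmless) discrepancy is that your bound carries the expectation over $n_i(\tau)$ explicitly, which is in fact the correct reading of the paper's slightly informal $n_i(t)$.
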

This bound indicates that the regret consists of two parts: the cumulative contribution from pulling suboptimal arms, and the expected value of leftover resources weighted by the optimal dual solution. In our two-phase quantum algorithm, suboptimal arms are pulled only in Phase \uppercase\expandafter{\romannumeral1}. Phase \uppercase\expandafter{\romannumeral2} selects only optimal arms and thus contributes regret solely through the second term. Observe that $\Delta_i = \boldsymbol{C}_{\cdot,i}^\top \boldsymbol{\eta}^* - r_i \leq \boldsymbol{1}^\top \boldsymbol{\eta}^*$, and by duality, $\boldsymbol{B}^\top \boldsymbol{\eta}^* \leq T$, implying $\boldsymbol{1}^\top \boldsymbol{\eta}^* \leq \frac{1}{b}$. Therefore, we obtain the bound
\begin{equation}
    \sum_{i \in \mathcal{I^\prime}} n_i(t) \Delta_i = \tilde{\mathcal{O}} \left( \left(2+\frac1b\right)\frac{ m\sqrt{d} \log T}{b\delta} \right),
\end{equation}
which completes the proof of \prop{regret 1}.

\subsection{Robustness Analysis of Phase \uppercase\expandafter{\romannumeral2} of \algo{quant-bwk2} under Approximate LP Solutions}
\label{append:proof-regret-2}

When addressing linear programming problems whose time complexity depends on $1/\epsilon$ rather than $\log(1/\epsilon)$, it is critical to assess how approximations might affect the algorithm's robustness. In this work, we present a regret analysis for Phase \uppercase\expandafter{\romannumeral2} of \algo{quant-bwk2} under approximate LP solutions. Our analysis demonstrates that even when the LP is solved to $\epsilon_{\mathrm{LP}}(T-t)/\log^2 T $-optimal and $\epsilon_{\mathrm{LP}}(T-t)/\log^2 T$-feasible, the regret in Phase \uppercase\expandafter{\romannumeral2} remains a problem-dependent constant, provided that $\epsilon_{\mathrm{LP}}$ is sufficiently small. Such robustness of LP may be of independent interest.
Our analysis is built upon the framework of the pure classical problem-dependent bandit with knapsacks problem as presented in \cite{li2021symmetry}. In particular, with $\theta=\min\left\{\frac{\min\{1,\sigma^2 \}\min\{\chi,\delta\}}{12\min\{m^2,d^2\}},\left(2+\frac{1}{b}\right)^{-2}\cdot\frac{\delta}{5} \right\}$ and $\epsilon=\frac{\min\{1,\sigma\} \min\{\chi,\delta\}b}{5d^{3/2}}$, Phase \uppercase\expandafter{\romannumeral2} of \algo{quant-bwk2} can be decomposed into three stages:
\begin{enumerate}
    \item In initial stage, the optimal arms have not yet been played for a sufficient number of rounds, so that the condition $\Vert \boldsymbol{C}_{\cdot,i}^L(t)-\boldsymbol{C}_{\cdot,i} \Vert_\infty > \theta$ holds. Intuitively, the duration of this stage is of order $\mathcal{O}(\log T)$.
    \item After initial stage, each optimal arm has been sampled sufficiently often, which facilitates an accurate estimation of the resource consumption matrix $\boldsymbol{C}$. This improved estimation ensures that the condition $\Vert \boldsymbol{C}_{\cdot,i}^L(t)-\boldsymbol{C}_{\cdot,i} \Vert_\infty \leq\theta$ continues to hold, thereby guaranteeing the stability of the smallest singular value of $\boldsymbol{C}^L(t)$. Concurrently, we impose that in each round the average remaining resource of binding constraints is maintained within the interval $[\boldsymbol{b}-\epsilon, \boldsymbol{b} + \epsilon]$. Specifically, if $\boldsymbol{B}_{\mathcal{I}^*}^{(t-1)}/(T-t+1) \in [\boldsymbol{b}-\epsilon, \boldsymbol{b} + \epsilon]$, then it follows with high probability that in the subsequent round $t$, the condition  $\boldsymbol{B}_{\mathcal{I}^*}^{(t)}/(T-t) \in [\boldsymbol{b}-\epsilon, \boldsymbol{b} + \epsilon]$ will be satisfied. Intuitively, this stage is expected to extend over $\mathcal{O}(T)$ rounds.
    \item The final stage occurs when the average remaining resource deviates from the target interval, i.e., when $\frac{\boldsymbol{B}_{\mathcal{I}^*}^{(t)}}{T-t}\not \in [\boldsymbol{b}-\epsilon, \boldsymbol{b} + \epsilon]$. This stage occurs with a small probability and is expected to last only $\mathcal{O}(1)$ rounds.
\end{enumerate}
Since all optimal arms have been identified in Phase \uppercase\expandafter{\romannumeral1}, the regret incurred during Phase \uppercase\expandafter{\romannumeral2} arises from the residual resources associated with the binding constraints, as characterized by \prop{regret-bound} in \append{proof-opt-ij}. In the following analysis, we examine the remaining resources of the binding constraints after stage 2, which provides an upper bound on the final residual resources after the completion of all three stages. Our analysis is conducted under the \assum{warm start assum}, which is also adopted in \cite{li2021symmetry}.
\begin{assumption}
\label{assum:warm start assum}
    We make the following assumption on the following analysis:
    \begin{enumerate}
        \item $\boldsymbol{r}^{U}(t) \geq \boldsymbol{r}$, $\boldsymbol{C}^{L}(t) \leq \boldsymbol{C}$ hold element-wise for all $t\in [T]$.
        \item 
        all arms are optimal and all the constraints are binding, i.e., $\mathcal{I}^* = [m],\text{ }\mathcal{J}^* = [d]$. 
    \end{enumerate}
\end{assumption}

Based on \lem{classic-monto}, the first part of \assum{warm start assum} is satisfied with high probability. Note that in this phase, our \algo{quant-bwk2} employs the quantum oracle as a classical oracle, and uses the classical Monte Carlo method to estimate the UCB and LCB of the reward and resource consumption. Let $\hat{\boldsymbol{r}}$ and $\hat{\boldsymbol{C}}_{\cdot,i}$ denote the classical estimates of the reward and consumption of arm $i$. If this arm is pulled $N$ times, we have  $\vert \hat{r}_i -r_i \vert \leq \sqrt{ \frac{2\log T}{N} }$ and $\Vert \hat{\boldsymbol{C}}_{\cdot,i} - \boldsymbol{C}_{\cdot,i} \Vert_\infty \leq  \sqrt{ \frac{2\log T}{N} }$ with probability at least $1 - \frac{4md}{T^2}$ by \cite[Lemma 2]{li2021symmetry}. The second part of \assum{warm start assum} is motivated by the intuition that the non-binding constraints will not be exhausted until the binding constraints are depleted. The relaxation of the second part follows directly from \cite[Appendix C4]{li2021symmetry}.

\subsubsection{Stage 1 of Phase \uppercase\expandafter{\romannumeral2}}
We begin by formulating the following three linear programming problems only for arms $i \in \mathcal{I}^*$. 
\begin{equation}
\begin{aligned}
\label{eq:stage1:eq_i}
\tilde{\text{OPT}}_i^{U}(t)=\max_{\boldsymbol{\xi}} \ \ & \left(\boldsymbol{\mu}^U(t-1) \right)^\top \boldsymbol{\xi},\\
\text{s.t.}\ \ &  \boldsymbol{C}^L(t-1) \boldsymbol{\xi} \le \boldsymbol{B}^{(t-1)},    \\
& \xi_i=0,\ \boldsymbol{\xi}\geq \boldsymbol{0},
\end{aligned}
\end{equation}
\begin{equation}
\begin{aligned}
\label{eq:stage1:eq_i_relaxed}
\tilde{\text{OPT}}_i^{U,\text{Relaxed}}(t)=\max_{\boldsymbol{\xi}} \ \ & \left(\boldsymbol{\mu}^U(t-1) \right)^\top \boldsymbol{\xi},\\
\text{s.t.}\ \ &  \boldsymbol{C}^L(t-1) \boldsymbol{\xi} \le \boldsymbol{B}^{(t-1)} + \epsilon_1,    \\
& \xi_i=0,\ \boldsymbol{\xi}\geq \boldsymbol{0},
\end{aligned}
\end{equation}
\begin{equation}
\begin{aligned}
\label{eq:stage1:eq_lp}
    \tilde{\text{OPT}}^{U}_{\text{LP}}(t)=\max_{\boldsymbol{\xi}} \ \ & \left(\boldsymbol{\mu}^U(t-1) \right)^\top \boldsymbol{\xi},\\
        \text{s.t.}\ \ &  \boldsymbol{C}^L(t-1) \boldsymbol{\xi} \le \boldsymbol{B}^{(t-1)},    \\
        & \boldsymbol{\xi}\geq \boldsymbol{0}.
\end{aligned}
\end{equation}

\begin{lemma}[{\cite[Appendix C3]{li2021symmetry}}] There is a lower bound on the gap between $\tilde{\mathrm{OPT}}^{U}_{\mathrm{LP}}(t)$ and $\tilde{\mathrm{OPT}}_i^{U}(t)$ for $t\leq\frac{b\min\{\chi\theta,\delta\theta\}}{90d}T$: 
\begin{equation*} \tilde{\mathrm{OPT}}^{U}_{\mathrm{LP}}(t) - \tilde{\mathrm{OPT}}_i^{U}(t) \geq \frac{\min\{\chi\theta,\delta\theta\}}{45}(T-t). \end{equation*} \end{lemma}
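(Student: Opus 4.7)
The plan is to compare the three LPs \eqn{stage1:eq_lp}, \eqn{stage1:eq_i_relaxed}, and \eqn{stage1:eq_i} pairwise, using (i) appropriately rescaled versions of the original primal and dual optimal solutions as feasible witnesses, and (ii) the problem-dependent gap $\mathrm{OPT}_{\mathrm{LP}} - \mathrm{OPT}_i \geq \delta T$ guaranteed by \prop{optimal gap} and the definition of $\delta$ in \eq{delta}. The relaxed LP \eqn{stage1:eq_i_relaxed} plays the role of a bridge that absorbs the discrepancies between the remaining-budget vector $\boldsymbol{B}^{(t-1)}$ and its ideal value $(T-t+1)\boldsymbol{b}$, as well as the deviations $\boldsymbol{\mu}^U(t-1) - \boldsymbol{r}$ and $\boldsymbol{C} - \boldsymbol{C}^L(t-1)$, so that the clean pre-Phase-II gap $\delta T$ can be passed through.

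First I would quantify all three perturbations in terms of a single parameter $\epsilon_1$. Under \assum{warm start assum}, $\boldsymbol{\mu}^U(t-1) \geq \boldsymbol{r}$ and $\boldsymbol{C}^L(t-1) \leq \boldsymbol{C}$ entrywise, with the gaps controlled by the classical Monte Carlo bound in \lem{classic-monto} applied to the $t-1$ rounds already played; together with the Phase I sample size from \prop{phase 1}, this gives a uniform $\ell_\infty$ estimation error that shrinks with $t$. Second, by Azuma--Hoeffding applied to the martingale of per-round resource consumption minus its conditional mean across the first $t-1$ rounds, with high probability $\Vert \boldsymbol{B}^{(t-1)} - (T-t+1)\boldsymbol{b}\Vert_\infty = \mathcal{O}(\sqrt{t\log T})$; because $t \leq \frac{b\min\{\chi\theta,\delta\theta\}}{90 d}T$ is small, all of these deviations fit inside a single $\epsilon_1 = \Theta(\sqrt{t\log T})$ with room to spare.

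Second, I would establish a lower bound on $\tilde{\mathrm{OPT}}^U_{\mathrm{LP}}(t)$ by plugging in the rescaled primal witness $\tfrac{T-t+1}{T}\boldsymbol{\xi}^*$. Since $\boldsymbol{C}^L(t-1)\boldsymbol{\xi}^* \leq \boldsymbol{C}\boldsymbol{\xi}^* \leq \boldsymbol{B}$ and the residual budget obeys $\boldsymbol{B}^{(t-1)} \geq \tfrac{T-t+1}{T}\boldsymbol{B} - \epsilon_1\boldsymbol{1}$, a minor additional shrinkage makes this witness feasible, yielding
\[
    \tilde{\mathrm{OPT}}^U_{\mathrm{LP}}(t) \;\geq\; \tfrac{T-t+1}{T}\,\mathrm{OPT}_{\mathrm{LP}} \;-\; \mathcal{O}(\epsilon_1).
\]
Third, to upper bound $\tilde{\mathrm{OPT}}_i^{U}(t)$ I would go through the relaxed version: since enlarging the feasible set weakly increases the optimum, $\tilde{\mathrm{OPT}}_i^{U}(t) \leq \tilde{\mathrm{OPT}}_i^{U,\text{Relaxed}}(t)$. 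I would then dominate the relaxed LP by weak duality, using the optimal dual $\boldsymbol{\eta}^i$ of $\mathrm{OPT}_i$ (whose $\ell_1$-norm is bounded by $1+1/b$, since $\boldsymbol{B}^\top\boldsymbol{\eta}^i \leq T$), and appeal to $\boldsymbol{C}^L(t-1)^\top\boldsymbol{\eta}^i \geq \boldsymbol{r}^\top - \Vert \boldsymbol{C}-\boldsymbol{C}^L(t-1)\Vert_\infty\Vert\boldsymbol{\eta}^i\Vert_1\boldsymbol{1}$ to certify $\boldsymbol{\eta}^i$'s feasibility after an additional slight enlargement (absorbed into $\epsilon_1$). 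This gives
\[
    \tilde{\mathrm{OPT}}_i^{U}(t) \;\leq\; \tfrac{T-t+1}{T}\,\mathrm{OPT}_i \;+\; \mathcal{O}\bigl(\epsilon_1/b\bigr).
\]

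Subtracting the two bounds and invoking $\mathrm{OPT}_{\mathrm{LP}} - \mathrm{OPT}_i \geq \delta T$ for $i\in\mathcal{I}^*$ yields
\[
    \tilde{\mathrm{OPT}}^U_{\mathrm{LP}}(t) - \tilde{\mathrm{OPT}}_i^U(t) \;\geq\; \tfrac{T-t+1}{T}\,\delta T \;-\; \mathcal{O}(\epsilon_1/b) \;\geq\; \delta(T-t) - \mathcal{O}(\epsilon_1/b).
\]
The time restriction $t \leq \frac{b\min\{\chi\theta,\delta\theta\}}{90 d}T$ is precisely calibrated so that the perturbation term falls below $\tfrac{\delta(T-t)}{2}$, and more strongly so that the final residual is at least $\tfrac{\min\{\chi\theta,\delta\theta\}}{45}(T-t)$. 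The appearance of $\chi$ (rather than only $\delta$) handles the edge case where the removal of arm $i$ merely reshuffles primal mass: since $\xi_i^* \geq \chi T$, such a redistribution still incurs a loss scaling as $\chi\theta$, which complements the $\delta\theta$ branch via the minimum.

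The main obstacle will be threading the perturbation constants cleanly through all three LPs while keeping track of the $\ell_1$-norm $1+1/b$ of the dual witness, so that every slack absorbed into $\epsilon_1$ remains strictly smaller than the $\delta T \cdot (T-t)/T$ signal being separated. A subtle secondary point is ensuring that $\epsilon_1$ in \eqn{stage1:eq_i_relaxed} is large enough to render $\boldsymbol{\eta}^i$ dual-feasible for the relaxed LP under the approximate constraint matrix $\boldsymbol{C}^L(t-1)$, yet small enough that the duality-based upper bound on $\tilde{\mathrm{OPT}}_i^{U,\text{Relaxed}}(t)$ does not erode the $\delta T$ gap.
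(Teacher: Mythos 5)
First, a point of reference: the paper does not prove this lemma at all --- it is imported directly from \cite[Appendix C3]{li2021symmetry}, so there is no in-paper proof to compare your attempt against. Judged on its own merits, your proposal contains a genuine gap.

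The core problem is that your argument lives in the wrong regime. Steps 2--3 push the clean gap $\mathrm{OPT}_{\mathrm{LP}}-\mathrm{OPT}_i\geq\delta T$ through the perturbed LPs by treating $\boldsymbol{\mu}^U(t-1)-\boldsymbol{r}$ and $\boldsymbol{C}-\boldsymbol{C}^L(t-1)$ as small perturbations absorbed into a single $\epsilon_1$. But this lemma is invoked precisely in Stage 1 of Phase \uppercase\expandafter{\romannumeral2}, which is by definition the regime where $\Vert\boldsymbol{C}^L_{\cdot,i}(t)-\boldsymbol{C}_{\cdot,i}\Vert_\infty>\theta$ for some optimal arm: the Phase-\uppercase\expandafter{\romannumeral2} classical confidence radii have not converged and can be $\Theta(1)$. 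Concretely, in your step 3 the dual witness $\boldsymbol{\eta}^i$ satisfies $\boldsymbol{C}^\top\boldsymbol{\eta}^i\geq\boldsymbol{r}$, but feasibility for \eq{stage1:eq_i_relaxed} requires $(\boldsymbol{C}^L(t-1))^\top\boldsymbol{\eta}^i\geq\boldsymbol{\mu}^U(t-1)$; repairing the deficit $(\boldsymbol{\mu}^U-\boldsymbol{r})+(\boldsymbol{C}-\boldsymbol{C}^L)^\top\boldsymbol{\eta}^i$ by inflating the time-resource dual coordinate costs on the order of $\frac{\mathrm{rad}_r+(1+1/b)\,\mathrm{rad}_C}{b}\cdot b(T-t)$ in the objective, which is not $o(\delta(T-t))$ unless the radii are already well below $\delta b/(2+1/b)$ --- exactly what Stage 1 does not provide. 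Relatedly, the Azuma--Hoeffding claim $\Vert\boldsymbol{B}^{(t-1)}-(T-t+1)\boldsymbol{b}\Vert_\infty=\mathcal{O}(\sqrt{t\log T})$ is not valid here: the conditional mean consumption per round is $\boldsymbol{C}\hat{\boldsymbol{\eta}}(s)$, which is only calibrated to $\boldsymbol{b}$ in Stage 2; in Stage 1 the deviation is merely $\mathcal{O}(t)$ (tolerable given $t\leq\frac{b\min\{\chi\theta,\delta\theta\}}{90d}T$, but not for the reason you give). The tell-tale symptom is that your derivation terminates at $\delta(T-t)-\mathcal{O}(\epsilon_1/b)$ and never actually produces the factors $\theta$ and $\chi$ in the target $\frac{\min\{\chi\theta,\delta\theta\}}{45}(T-t)$; the closing sentence about $\chi$ handling ``reshuffled primal mass'' is asserted rather than derived. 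Since the stated bound is a $\theta$-degraded gap rather than a $\delta$-gap, the degradation must arise from a mechanism your proof does not contain; the actual argument in \cite[Appendix C3]{li2021symmetry} works under \assum{warm start assum} and exploits the square, well-conditioned structure of $\boldsymbol{C}_{\mathcal{J}^*,\mathcal{I}^*}$ (through $\sigma$ and $\chi$) rather than a pure perturbation of the $\delta$-gap.
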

Now, let us define $\epsilon_1 := \epsilon_{\mathrm{LP}}(T-t)/\log^2 T$. We consider the scenario where the linear program in \eq{stage1:eq_lp} is solved up to an $\epsilon_1$-optimal and $\epsilon_1$-approximate solution. Denote this solution by $\tilde{\boldsymbol{\xi}}$, which satisfies: \begin{equation} \begin{aligned} \tilde{\mathrm{OPT}}^{U}_{\text{LP}}(t) - \epsilon_1 \leq\ &\left(\boldsymbol{\mu}^U(t-1)\right)^\top \tilde{\boldsymbol{\xi}} \leq \tilde{\mathrm{OPT}}^{U}_{\text{LP}}(t) + \epsilon_1, \: \boldsymbol{C}^L(t-1) \tilde{\boldsymbol{\xi}} \leq \boldsymbol{B}^{(t-1)} + \epsilon_1, \: \tilde{\boldsymbol{\xi}} \geq \boldsymbol{0}. \end{aligned} \end{equation}
Our goal is to derive an element-wise lower bound for the approximate solution $\tilde{\boldsymbol{\xi}}$, in order to demonstrate that even with approximate LP solutions, stage 1 can still be completed in $\mathcal{O}(\log T)$ rounds. 
Observe that $\tilde{\boldsymbol{\xi}}$ with $\tilde{\boldsymbol{\xi}}_i = 0$ is feasible for the relaxed version of the LP in \eq{stage1:eq_i_relaxed}. Hence, we have the inequality: 
\begin{equation}
    \left(\boldsymbol{\mu}^U(t-1) \right)^\top \tilde{\boldsymbol{\xi}} = \left(\boldsymbol{\mu}^U(t-1) \right)^\top_{\neq i} \tilde{\boldsymbol{\xi}}_{\neq i} + \left(\boldsymbol{\mu}^U(t-1) \right)^\top_i \tilde{\boldsymbol{\xi}}_i \leq \tilde{\text{OPT}}_i^{U,\text{Relaxed}}(t) + \tilde{\boldsymbol{\xi}}_i.
\end{equation}
Let $\boldsymbol{\xi}$ be an optimal solution to \eq{stage1:eq_i_relaxed}. Then, the scaled vector $\boldsymbol{\xi}/(1+\epsilon_1/\max\{\boldsymbol{B}^{(t-1)}\})$ is a feasible solution for \eq{stage1:eq_i}, implying the bound:
\begin{equation}
    \frac{\tilde{\text{OPT}}_i^{U,\text{Relaxed}}(t)}{1+\epsilon_1/\max\{\boldsymbol{B}^{(t-1)}\}} \leq \tilde{\text{OPT}}_i^{U}(t).
\end{equation}
Combining the inequalities above, we obtain: 
\begin{equation}
\begin{aligned}
    \tilde{\boldsymbol{\xi}}_i &\geq \tilde{\text{OPT}}^{U}_{\text{LP}}(t)- \epsilon_1 - 
     \tilde{\text{OPT}}_i^{U}(t) \left( 1+\epsilon_1/\max\{\boldsymbol{B}^{(t-1)}\} \right)  \\
     & \geq \tilde{\text{OPT}}^{U}_{\text{LP}}(t)-  
     \tilde{\text{OPT}}_i^{U}(t) -\epsilon_1  \left( 1+(T-t)/\max\{\boldsymbol{B}^{(t-1)} \}\right) \\
     & \geq \frac{\min\{\chi\theta,\delta\theta\}}{45}(T-t)  - \frac{\epsilon_{\mathrm{LP}}(T-t)}{\log^2 T}\left( 1+\frac{T-t}{(T-t)\frac{4}{5}b} \right) \\
     & \geq \frac{\min\{\chi\theta,\delta\theta\}}{45}(T-t)  - \frac{9\epsilon_{\mathrm{LP}}(T-t)}{4b\log^2 T} \\
     & \geq \frac{\min\{\chi\theta,\delta\theta\}}{90}(T-t),
\end{aligned}
\end{equation}
where the last line follows from the assumption that $\epsilon_{\mathrm{LP}} \leq \frac{2b\log^2 T  \min\{\chi\theta,\delta\theta\}}{405}$ in \prop{regret 2}. Under this condition, the probability of each arm being played is lower bounded by $\frac{\min\{\chi\theta,\delta\theta\}}{90}$. Following a similar argument as in \append{stage 2}, we can conclude that with high probability after $\mathcal{O}(\log T)$ rounds, the condition $\Vert \boldsymbol{C}{\cdot,i}^L(t)-\boldsymbol{C}{\cdot,i} \Vert_\infty \leq \theta$ is satisfied.

\subsubsection{Stage 2 of Phase \uppercase\expandafter{\romannumeral2}}
\label{append:stage 2}
In Stage 2 of Phase \uppercase\expandafter{\romannumeral2}, the condition $\Vert \boldsymbol{C}_{\cdot,i}^L(t)-\boldsymbol{C}_{\cdot,i} \Vert_\infty \leq \theta$ is met. According to \prop{regret-bound}, the regret of the \algo{quant-bwk2} is determined by the term $\mathbb{E}\left[\boldsymbol{B}^{(\tau)}\right]^\top \boldsymbol{\eta}^*$, where $\boldsymbol{B}^{(t)}$ represents the vector of remaining resources at time $t$, and $\tau$ is the algorithm's stopping time. To facilitate the analysis, we define the average remaining resources as $\boldsymbol{b}^{(t)}:=\frac{\boldsymbol{B}^{(t)}}{T-t}$. We then define $\tau^\prime$ as the first time step at which $\boldsymbol{b}^{(t)}$ exits the interval $[\boldsymbol{b} - \epsilon, \boldsymbol{b} + \epsilon]$, which marks the end of Stage 2. The value of parameter $\epsilon$ is specified in \lem{part2}.
\begin{equation}
\label{eq:tau-prime}
    \tau^\prime := \min \{t: \boldsymbol{b}^{(t)} \not \in [\boldsymbol{b}-\epsilon,\boldsymbol{b}+\epsilon]\}.
\end{equation}
It is straightforward that with $\epsilon$ small enough, $\tau^\prime<\tau$. In the following analysis, we focus on bounding the expectation $\mathbb{E}\left[T-\tau^\prime \right]$ to bind the upper bound of $\mathbb{E}\left[\boldsymbol{B}^{(\tau)}\right]$. 

For each $t \in [T]$, we define the following event. Let $\mathcal{H}_{t-1} = {(\boldsymbol{r}_s, \boldsymbol{C}_s)}_{s=1}^{t-1}$ denote the filtration up to time $t$, which captures all reward and resource consumption observations up to round $t-1$. Define the event $\mathcal{E}_t$ as follows: it occurs if for all $s \in [t-1]$, the average remaining resource $\boldsymbol{b}^{(s)}$ stays within the interval $[\boldsymbol{b} - \epsilon, \boldsymbol{b} + \epsilon]$, and at time $t$, the expected resource consumption $\mathbb{E}[\boldsymbol{C}_{\cdot,i_{t}}(\boldsymbol{b}^{(t-1)})|\mathcal{H}_{t-1}]$ deviates from  $\boldsymbol{b}^{(t-1)}$ by more than $\epsilon_t$ in the $\ell_\infty$ norm. The parameter $\epsilon_t$ will be formally specified in \lem{part2}. 
\begin{equation}
    \mathcal{E}_t \coloneqq  \left\{\boldsymbol{b}^{(s)}\in[\boldsymbol{b}-\epsilon,\boldsymbol{b}+\epsilon]\text{ for each $s\in[t-1]$} \: \text{ and }\: \Vert\mathbb{E}[\boldsymbol{C}_{\cdot,i_{t}}(\boldsymbol{b}^{(t-1)})|\mathcal{H}_{t-1}]-\boldsymbol{b}^{(t-1)}\Vert_\infty > \epsilon_t \right\}.
\end{equation}

To analyze $\mathbb{E}\left[T - \tau^\prime\right]$, we examine the probability $\mathbb{P}(\tau^\prime \leq t)$, which can be decomposed into two parts. Here, the complement of the event $\mathcal{E}_t$, denoted $\mathcal{E}_t^{\complement}$, is defined over the space of $t$ vectors in $\mathbb{R}^d$.
\begin{equation}
\mathbb{P}\left(\tau^\prime\leq t\right)
 =  \mathbb{P}\left(\tau^\prime\leq t, \bigcap\limits_{s=1}^{T}\mathcal{E}_t^{\complement} \right) + \mathbb{P}\left( \tau^\prime\leq t, \bigcup\limits_{s=1}^{T}\mathcal{E}_t \right) 
 \leq \mathbb{P}\left( \tau^\prime\leq t, \bigcap\limits_{s=1}^{T}\mathcal{E}_t^{\complement} \right) + \mathbb{P}\left( \bigcup\limits_{s=1}^{T}\mathcal{E}_t \right)  .    
\end{equation}
Our analysis of the impact of the approximate LP solutions on these two probability terms relies on \lem{part2} and \lem{part1}. We present these lemmas below and defer their proofs to the end of the subsection.

\begin{lemma}
\label{lem:part2}
    Let $\alpha = \frac{\min\{1,\sigma\} \min\{\chi,\delta\}b}{d^{3/2}}$, $\epsilon=\frac{\alpha}{5}$,
    \begin{equation}
      \epsilon_t=\left\{ \begin{array}{lc}
          \frac{6}{5} & t < \frac{\alpha T}{19} \\
          \frac{19b\min\{m,d\}\sqrt{d}}{\sigma} \sqrt{\frac{\log T}{t}}+ \frac{4 \min\{m^{\frac52},d^{\frac52}\} \epsilon_{\mathrm{LP}} }{\sigma \log^2 T} & t \geq \frac{\alpha T}{19}
      \end{array} \right..
    \end{equation}
    If $\epsilon_{\mathrm{LP}} \leq \frac{\sigma \chi \log^2T}{40\min\{m^{\frac32},d^{\frac32}\}}$, then there exists a constant $\underline{T}$ such that for all $T>\underline{T}$
    \begin{equation}
        \mathbb{P}\left(
            \bigcup\limits_{s=1}^{T}\mathcal{E}_t
         \right) \leq \frac{\min\{m,d\}}{T^3}.
    \end{equation}
\end{lemma}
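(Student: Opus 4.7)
The plan is to follow the skeleton of the analogous concentration argument for the classical problem-dependent BwK algorithm of \cite{li2021symmetry}, while carefully propagating the additional slack introduced by solving \eq{residual-lp2} only to $\epsilon_1$-optimality and $\epsilon_1$-feasibility with $\epsilon_1=\epsilon_{\mathrm{LP}}(T-t)/\log^2 T$. Throughout we work under \assum{warm start assum} and condition on the high-probability event that the classical Monte Carlo confidence intervals for $\boldsymbol{r}$ and $\boldsymbol{C}$ are valid for every arm and every round, which fails with probability at most $\tilde{\mathcal{O}}(md/T^2)$ and is therefore absorbed into the final $\min\{m,d\}/T^3$ bound.

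First I would translate the event $\mathcal{E}_t$ into a statement purely about the randomized action distribution induced by the approximate LP solution $\tilde{\boldsymbol{\xi}}$. Writing $p_i(t)=\tilde{\xi}_i/\Vert\tilde{\boldsymbol{\xi}}\Vert_1$, the conditional expectation satisfies $\mathbb{E}[\boldsymbol{C}_{\cdot,i_t}\mid\mathcal{H}_{t-1}]=\boldsymbol{C}\tilde{\boldsymbol{\xi}}/\Vert\tilde{\boldsymbol{\xi}}\Vert_1$. On the good estimation event and under \assum{warm start assum}, the constraints of \eq{residual-lp2} force $\boldsymbol{C}^L(t-1)\tilde{\boldsymbol{\xi}}=\boldsymbol{B}^{(t-1)}\pm\epsilon_1$ on the rows in $\mathcal{J}^*$, so I rewrite
\begin{equation*}
    \mathbb{E}[\boldsymbol{C}_{\cdot,i_t}\mid\mathcal{H}_{t-1}]-\boldsymbol{b}^{(t-1)}
    = \frac{(\boldsymbol{C}-\boldsymbol{C}^L(t-1))\tilde{\boldsymbol{\xi}}}{\Vert\tilde{\boldsymbol{\xi}}\Vert_1}
    + \frac{\boldsymbol{B}^{(t-1)}}{\Vert\tilde{\boldsymbol{\xi}}\Vert_1} - \frac{\boldsymbol{B}^{(t-1)}}{T-t+1}
    \pm \frac{\epsilon_1}{\Vert\tilde{\boldsymbol{\xi}}\Vert_1}.
\end{equation*}
This splits the deviation into an estimation-error term, a scale-mismatch term, and an LP-approximation term.

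The second step is to control each term. For the estimation-error term I would first lower-bound the number of pulls of every optimal arm by time $t$. Because $\boldsymbol{b}^{(s)}\in[\boldsymbol{b}-\epsilon,\boldsymbol{b}+\epsilon]$ for all $s\le t-1$, and because the LP solution $\tilde{\boldsymbol{\xi}}/\Vert\tilde{\boldsymbol{\xi}}\Vert_1$ must remain close to the distribution $\boldsymbol{\xi}^*/T$ determined by solving the exact system $\boldsymbol{C}_{\mathcal{J}^*,\mathcal{I}^*}\boldsymbol{\xi}=\boldsymbol{b}^{(s)}$, each arm is played with probability at least $\chi-O(\epsilon)$ in every round; an Azuma/Bernstein argument then gives $n_i(t-1)\gtrsim \chi t$ for $t\ge \alpha T/19$ except on an event of probability $\tilde{\mathcal{O}}(1/T^3)$. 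Plugging this into \lem{classic-monto} yields $\Vert\boldsymbol{C}^L(t-1)-\boldsymbol{C}\Vert_\infty=\mathcal{O}(\sqrt{\log T/(\chi t)})$, and combined with $\Vert\tilde{\boldsymbol{\xi}}\Vert_1\le (T-t+1)(1+o(1))$ the first term contributes the $\sqrt{\log T/t}$ piece of $\epsilon_t$. The scale-mismatch term is handled by sensitivity analysis: under \assum{warm start assum} the active submatrix of $\boldsymbol{C}^L(t-1)$ is invertible with minimum singular value at least $\sigma-O(\theta)\gtrsim\sigma$, so the implicit function theorem for LP solutions gives $|\Vert\tilde{\boldsymbol{\xi}}\Vert_1-(T-t+1)|\le \sigma^{-1}\bigl(d\Vert\boldsymbol{C}-\boldsymbol{C}^L(t-1)\Vert_\infty\cdot(T-t+1)+\epsilon_1\bigr)$ up to standard multiplicative factors, which absorbs into the same two terms. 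The LP-approximation term contributes exactly the $\epsilon_{\mathrm{LP}}$-piece of $\epsilon_t$, and the hypothesis $\epsilon_{\mathrm{LP}}\le \sigma\chi\log^2T/(40\min\{m^{3/2},d^{3/2}\})$ is what guarantees that this contribution cannot dominate or kick the solution out of the linear regime where the sensitivity bound applies.

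Finally I would handle the short initial window $t<\alpha T/19$ separately by the trivial bound $\epsilon_t=6/5$, which exceeds the range of any $\ell_\infty$ deviation between two vectors in $[0,1]^d$ (so $\mathcal{E}_t$ is vacuously empty there), and take a union bound over $s=1,\ldots,T$, shrinking each per-round failure probability to $\tilde{\mathcal{O}}(\min\{m,d\}/T^4)$ so that the total is at most $\min\{m,d\}/T^3$ for $T$ larger than some absolute constant $\underline{T}$. The main obstacle I anticipate is the LP-sensitivity step: proving that an $\epsilon_1$-feasible and $\epsilon_1$-optimal solution of \eq{residual-lp2} really does behave like a perturbation of the nominal solution, rather than jumping to a different vertex. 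Showing this rigorously requires combining \assum{warm start assum} (uniqueness and non-degeneracy), the lower bound $\sigma$ on the smallest singular value of $\boldsymbol{C}^L_{\mathcal{J}^*,\mathcal{I}^*}(t-1)$, and the already-established entrywise lower bound $\tilde{\xi}_i\gtrsim \chi(T-t)$ from the Stage 1 argument in \append{proof-regret-2}, which together ensure that the approximate optimum lies in the same basis as the exact one.
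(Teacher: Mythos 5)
Your proposal follows essentially the same route as the paper's proof: lower-bound each optimal arm's selection probability by showing the approximate LP solution's normalized distribution is within $\mathcal{O}(\min\{m^{3/2},d^{3/2}\}\epsilon_{\mathrm{LP}}/(\sigma\log^2 T))$ of the exact one (via the singular-value bound on $\boldsymbol{C}^L_{\mathcal{J}^*,\mathcal{I}^*}$ and the two-sided constraints of \eq{residual-lp2}), apply Azuma--Hoeffding to get $n_i(t)\gtrsim\chi t$, decompose the conditional deviation into an estimation-error piece giving the $\sqrt{\log T/t}$ term plus an LP-approximation piece giving the $\epsilon_{\mathrm{LP}}$ term, note that $\mathcal{E}_t$ is vacuous for $t<\alpha T/19$, and union bound. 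The only cosmetic difference is that you split off a separate ``scale-mismatch'' term for $\Vert\tilde{\boldsymbol{\xi}}\Vert_1$ versus $T-t+1$, which the paper absorbs into its Lemma~\ref{lem:approx-prob-arms} bounding $\Vert\tilde{\boldsymbol{\eta}}(t)-\hat{\boldsymbol{\eta}}(t)\Vert_\infty$ directly.
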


\begin{lemma}
\label{lem:part1}
If $\epsilon, \bar{\epsilon } , \alpha$ and $\epsilon _t$ satisfy that $\epsilon _t= \bar{\epsilon }$ for all $t \leq \alpha T$ and
\begin{equation}
\label{eq:condition}
    \frac{\alpha\bar{\epsilon}}{1-\alpha}+\sum_{t=\alpha T+1}^{T-1}\frac{\epsilon_t}{T-t}\leq\frac{2\epsilon}3,
\end{equation}
then the probability is bounded by
\begin{equation}
    \mathbb{P}\left(\tau^\prime \leq t ,\:\bigcap_{s=1}^T\mathcal{E}_s^\complement \right) 
     \leq 2d\exp \left( -\frac{ (T-t-1) \epsilon^2}{18} \right).
\end{equation}
\end{lemma}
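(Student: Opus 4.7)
The plan is to express $\boldsymbol{b}^{(t)}-\boldsymbol{b}$ as a sum that cleanly separates into a predictable ``drift'' (which the hypothesis is designed to cancel) and a mean-zero martingale (which Azuma--Hoeffding handles), and then read off the tail bound. A short calculation from $\boldsymbol{B}^{(t)}=\boldsymbol{B}^{(t-1)}-\boldsymbol{C}_{\cdot,i_t}$ and the definitions of $\boldsymbol{b}^{(t)},\boldsymbol{b}^{(t-1)}$ gives the one-step update $\boldsymbol{b}^{(t)}-\boldsymbol{b}^{(t-1)}=(\boldsymbol{b}^{(t-1)}-\boldsymbol{C}_{\cdot,i_t})/(T-t)$. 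Writing $\boldsymbol{E}_t:=\mathbb{E}[\boldsymbol{C}_{\cdot,i_t}\mid\mathcal{H}_{t-1}]$ and $\boldsymbol{X}_t:=\boldsymbol{E}_t-\boldsymbol{C}_{\cdot,i_t}$, this rearranges to
\[
\boldsymbol{b}^{(t)}-\boldsymbol{b}^{(t-1)}=\frac{(\boldsymbol{b}^{(t-1)}-\boldsymbol{E}_t)+\boldsymbol{X}_t}{T-t}.
\]
Since $\boldsymbol{b}^{(0)}=\boldsymbol{b}$, telescoping yields the key decomposition
\[
\boldsymbol{b}^{(t)}-\boldsymbol{b}=\underbrace{\sum_{s=1}^{t}\frac{\boldsymbol{b}^{(s-1)}-\boldsymbol{E}_s}{T-s}}_{\text{drift}}+\underbrace{\sum_{s=1}^{t}\frac{\boldsymbol{X}_s}{T-s}}_{\text{martingale }\boldsymbol{M}_t},
\]
which drives the remainder of the argument.

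I would then exploit the event structure to bound the drift. On $\{\tau^\prime\leq t\}\cap\bigcap_{s=1}^{T}\mathcal{E}_s^{\complement}$, the minimality of $\tau^\prime$ ensures that for every $s\leq\tau^\prime$ and every $s^\prime\in[s-1]$ we have $\boldsymbol{b}^{(s^\prime)}\in[\boldsymbol{b}-\epsilon,\boldsymbol{b}+\epsilon]$; hence the first clause of $\mathcal{E}_s$ fails and $\mathcal{E}_s^{\complement}$ forces $\|\boldsymbol{b}^{(s-1)}-\boldsymbol{E}_s\|_\infty\leq\epsilon_s$. Substituting into the decomposition at $t=\tau^\prime$ and splitting the drift at $\alpha T$, the head contributes at most $\bar{\epsilon}\sum_{s=1}^{\alpha T}1/(T-s)\leq\alpha\bar{\epsilon}/(1-\alpha)$ (since each summand is at most $1/(T(1-\alpha))$), while the tail is upper-bounded by $\sum_{s=\alpha T+1}^{T-1}\epsilon_s/(T-s)$. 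Hypothesis \eq{condition} then caps the total drift by $2\epsilon/3$, so the requirement $\|\boldsymbol{b}^{(\tau^\prime)}-\boldsymbol{b}\|_\infty>\epsilon$ forces $\|\boldsymbol{M}_{\tau^\prime}\|_\infty>\epsilon/3$ on the event of interest.

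Finally I would invoke martingale concentration. For each coordinate $j\in[d]$, $M_\tau^{(j)}$ is a martingale whose increments satisfy $|X_s^{(j)}/(T-s)|\leq 1/(T-s)$ (because $X_s^{(j)}\in[-1,1]$), and the integral comparison gives $\sum_{s=1}^{t}1/(T-s)^2\leq 1/(T-t-1)$. The maximal form of Azuma--Hoeffding---obtained by applying Doob's submartingale inequality to $\exp(\pm\lambda M_\tau^{(j)})$ and optimizing in $\lambda$---therefore gives
\[
\mathbb{P}\!\left(\max_{\tau\leq t}|M_\tau^{(j)}|>\epsilon/3\right)\leq 2\exp\!\left(-\frac{(T-t-1)\epsilon^2}{18}\right),
\]
and a union bound over the $d$ coordinates yields the stated $2d\exp(-(T-t-1)\epsilon^2/18)$. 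The main obstacle I anticipate is the careful bookkeeping of the conditioning: I need the deviation bound $\|\boldsymbol{b}^{(s-1)}-\boldsymbol{E}_s\|_\infty\leq\epsilon_s$ to be available for every $s\leq\tau^\prime$ including the exit index $s=\tau^\prime$ itself, and the asymmetric indexing ``$s^\prime\in[s-1]$'' in the definition of $\mathcal{E}_s$ is precisely what permits this; a minor additional care is to apply the maximal inequality to the bounded stopping time $\tau^\prime\wedge t$ rather than to $\tau^\prime$ directly.
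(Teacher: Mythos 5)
Your proposal is correct and follows essentially the same route as the paper: decompose $\boldsymbol{b}^{(s)}-\boldsymbol{b}$ into a predictable drift controlled by $\mathcal{E}_s^{\complement}$ and condition \eq{condition} (yielding the $2\epsilon/3$ cap) plus a martingale with increments bounded by $1/(T-s)$, then apply Azuma--Hoeffding with $\sum_k (T-k)^{-2}\leq 1/(T-t-1)$ and a union bound over the $d$ coordinates. The paper implements the stopping via an explicitly frozen per-coordinate sequence $\tilde{b}_j^{(s)}$ rather than your stopped martingale $M_{\tau'\wedge t}$, but these are the same device.
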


Based on \lem{part2} and \lem{part1}, we have
\begin{equation}
\begin{aligned}
\mathbb{P}\left(\tau^\prime \leq t \right)
 \leq \mathbb{P} \left( \tau^\prime \leq t, \bigcap\limits_{s=1}^{t}\mathcal{E}_t^{\complement} \right) + \mathbb{P}\left( \bigcup\limits_{s=1}^{t}\mathcal{E}_t \right) 
\leq 2d\exp \left( -\frac{ (T-t-1) {\min\{1,\sigma^2 \} \min\{\chi^2,\delta^2 \}b^2}}{450 d^3} \right) + \frac{\min\{m,d\}}{T^3}.
\end{aligned}
\end{equation}
Consequently, the expectation can be bounded as
\begin{equation}
\begin{aligned}
    \mathbb{E}[T - \tau^\prime] & = \sum_{t=1}^T \mathbb{P}(\tau^\prime \leq t) \\
    & \leq  \sum_{t=1}^T 2d\exp \left( -\frac{ (T-t-1) {\min\{1,\sigma^2 \} \min\{\chi^2,\delta^2 \}b^2}}{450 d^3} \right) + \frac{\min\{m,d\}}{T^3} \\
    & \leq \frac{900 d^4}{ {\min\{1,\sigma^2 \} \min\{\chi^2,\delta^2 \}b^2}} + \frac{\min\{m,d\}}{T^2}.
\end{aligned}
\end{equation}
As a result,
\begin{equation}
    \mathbb{E}[\boldsymbol{B}^{(\tau)}] \leq \mathbb{E}[\boldsymbol{B}^{(\tau^\prime)}] \leq \mathbb{E} \left[ (T-\tau^\prime)(\boldsymbol{b}+\epsilon\boldsymbol{1})  \right] \leq \mathbb{E} \left[ (T-\tau^\prime)\frac{6}{5}\boldsymbol{b}  \right] = \mathcal{O}\left( 
    \frac{d^4}{ {\min\{1,\sigma^2 \} \min\{\chi^2,\delta^2 \}b}}
    \right).
\end{equation}
Using the fact that $\boldsymbol{1}^\top \boldsymbol{\eta}^* \leq \frac{1}{b}$, we derive the expected regret bound $\mathbb{E}[\boldsymbol{B}^{(\tau)}] \boldsymbol{\eta}^*  \leq \mathcal{O}\left( 
    \frac{d^4}{ {\min\{1,\sigma^2 \} \min\{\chi^2,\delta^2 \}b^2}}
    \right)$ for Phase \uppercase\expandafter{\romannumeral2} of \algo{quant-bwk2}, thereby completing the proof of \prop{regret 2}.

\paragraph{Proof of \lem{part2}} 
The key to the proof of \lem{part2} is to establish a lower bound on the number of times each optimal arm is played. To this end, we first invoke \lem{prob-arm}, which provides a lower bound on the probability of selecting each optimal arm when the LP problems are solved nearly exactly. 
\begin{lemma}
\label{lem:prob-arm}
    With $\epsilon \leq \frac{\min\{\chi,\delta\}}{5d^{3/2}}$ and \assum{warm start assum}, if $\boldsymbol{b}^{(t)} \in [\boldsymbol{b}-\epsilon,\boldsymbol{b}+\epsilon]$,  the optimal solution of LP \eq{residual-lp} at time $t+1$ is 
    \begin{equation}
        \boldsymbol{{\eta}}^*(t) = (\boldsymbol{{C}}^L(t))^{-1} \boldsymbol{{B}}^{(t)},
    \end{equation}
    and the normalized probability $\widetilde{{\boldsymbol{{\eta}}}}(t)$ used to play arm at time $t+1$ satisfies
    \begin{equation}
        \widetilde{{\boldsymbol{{\eta}}}}(t) = (\boldsymbol{{C}}^L(t))^{-1} \boldsymbol{{b}}^{(t)},
    \end{equation}
    and is element-wise lower bounded by 
    \begin{equation}
        \widetilde{{\boldsymbol{{\eta}}}}(t) \geq \frac{\chi}{5}.
    \end{equation}
\end{lemma}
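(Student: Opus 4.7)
}
My plan is to reduce the statement to a clean matrix-perturbation computation in three conceptual steps. First, I would use the warm-start assumption to argue that the LP in \eq{residual-lp} (or equivalently \eq{residual-lp2} under the stated regime) has all $d$ resource constraints binding, so that its optimal solution is uniquely determined by a square linear system. Since $\mathcal{I}^*=[m]$ and $\mathcal{J}^*=[d]$, the non-degeneracy part of \assum{lp} forces $m=d$, and the true consumption matrix $\boldsymbol{C}$ is invertible with $\|\boldsymbol{C}^{-1}\|\le 1/\sigma$. When $\boldsymbol{b}^{(t)}\in[\boldsymbol{b}-\epsilon,\boldsymbol{b}+\epsilon]$ and $\|\boldsymbol{C}_{\cdot,i}^L(t)-\boldsymbol{C}_{\cdot,i}\|_\infty\le\theta$ as in Phase~\uppercase\expandafter{\romannumeral2} stage~2, $\boldsymbol{C}^L(t)$ is a small perturbation of $\boldsymbol{C}$, hence still invertible, and the LP's primal optimum must saturate every constraint; otherwise the dual would have zero entries on coordinates of $\mathcal{J}^*$, contradicting \prop{optimal gap} after passing from $\boldsymbol{C}^L(t)$ back to $\boldsymbol{C}$ by continuity.

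Second, once all constraints bind, the identity $\boldsymbol{C}^L(t)\boldsymbol{\eta}^*(t)=\boldsymbol{B}^{(t)}$ yields the closed form $\boldsymbol{\eta}^*(t)=(\boldsymbol{C}^L(t))^{-1}\boldsymbol{B}^{(t)}$. Because $C_{1,i}=b$ for every arm, the first row of $\boldsymbol{C}^L(t)$ equals $b\boldsymbol{1}^\top$, so $b\,\boldsymbol{1}^\top\boldsymbol{\eta}^*(t)=B^{(t)}_1=b(T-t)$ (up to the $\epsilon$-window on $\boldsymbol{b}^{(t)}$), and the normalization in the algorithm divides by $T-t$, producing the claimed formula $\widetilde{\boldsymbol{\eta}}(t)=(\boldsymbol{C}^L(t))^{-1}\boldsymbol{b}^{(t)}$.

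Third, for the element-wise lower bound I would decompose
\begin{equation*}
(\boldsymbol{C}^L(t))^{-1}\boldsymbol{b}^{(t)}
= \boldsymbol{C}^{-1}\boldsymbol{b}
\;+\;\bigl[(\boldsymbol{C}^L(t))^{-1}-\boldsymbol{C}^{-1}\bigr]\boldsymbol{b}
\;+\;(\boldsymbol{C}^L(t))^{-1}\bigl(\boldsymbol{b}^{(t)}-\boldsymbol{b}\bigr).
\end{equation*}
The leading term equals $\boldsymbol{\xi}^*/T$ (since under the warm-start all constraints bind in the true LP, giving $\boldsymbol{\xi}^*=\boldsymbol{C}^{-1}\boldsymbol{B}$), which is $\ge\chi\mathbf{1}$ by the definition of $\chi$. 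I would then bound the remaining two terms in $\|\cdot\|_\infty$ by combining the resolvent identity $(\boldsymbol{C}^L(t))^{-1}-\boldsymbol{C}^{-1}=(\boldsymbol{C}^L(t))^{-1}(\boldsymbol{C}-\boldsymbol{C}^L(t))\boldsymbol{C}^{-1}$, the bound $\|\boldsymbol{C}-\boldsymbol{C}^L(t)\|_\infty\le d\theta$, the hypothesis $\|\boldsymbol{b}^{(t)}-\boldsymbol{b}\|_\infty\le\epsilon$, and the operator-norm estimate $\|\boldsymbol{C}^{-1}\|_\infty\le\sqrt{d}/\sigma$, together with a standard Neumann-series argument that $\|(\boldsymbol{C}^L(t))^{-1}\|_\infty\le 2\sqrt{d}/\sigma$ once $\theta$ is small enough.

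The main obstacle is the bookkeeping in this last step: one must verify that the prescribed choices $\theta\le\frac{\min\{1,\sigma^2\}\min\{\chi,\delta\}}{12\min\{m^2,d^2\}}$ and $\epsilon\le\frac{\min\{\chi,\delta\}}{5d^{3/2}}$ are sharp enough that both perturbation terms are bounded by $2\chi/5$ in $\ell_\infty$, yielding $\widetilde{\boldsymbol{\eta}}(t)\ge\chi-\tfrac{4\chi}{5}=\tfrac{\chi}{5}$. Getting the dimension factors to line up correctly (in particular converting between $\ell_\infty$ on columns of $\boldsymbol{C}-\boldsymbol{C}^L(t)$ and matrix $\ell_\infty$-norm, and between $\|\boldsymbol{C}^{-1}\|_2$ and $\|\boldsymbol{C}^{-1}\|_\infty$) is where the constants $5$ and $12$ enter, and is the only delicate calculation; every other step is a direct consequence of the warm-start assumption and the structural properties of the primal–dual pair established in \sec{Primal-dual}.
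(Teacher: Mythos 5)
The paper states \lem{prob-arm} without giving a proof (it is imported from the classical analysis of \cite{li2021symmetry}), so there is no in-paper argument to compare against line by line. Your reconstruction has the right architecture and matches how the surrounding results (\lem{approx-prob-arms}, \lem{part2}) actually use the lemma: under \assum{warm start assum} and \assum{lp} the system is square ($m=d$), the optimum of \eq{residual-lp} is the fully-binding basic solution, the normalization by the deterministic time row gives $\widetilde{\boldsymbol{\eta}}(t)=(\boldsymbol{C}^L(t))^{-1}\boldsymbol{b}^{(t)}$, and the $\chi/5$ bound follows from a perturbation of $\boldsymbol{C}^{-1}\boldsymbol{b}=\boldsymbol{\xi}^*/T\ge\chi\boldsymbol{1}$.

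There is, however, one genuine gap in your first step. Establishing that the optimum of \eq{residual-lp} is $(\boldsymbol{C}^L(t))^{-1}\boldsymbol{B}^{(t)}$ requires not only primal feasibility of that vector (its nonnegativity, which your third step supplies) but also optimality of that basis, i.e.\ nonnegativity of the associated dual vector $\left((\boldsymbol{C}^L(t))^{\top}\right)^{-1}\boldsymbol{r}^U(t)$. You dispatch this with ``otherwise the dual would have zero entries\dots, contradicting \prop{optimal gap} by continuity,'' but \prop{optimal gap} concerns the \emph{unperturbed} LP; transferring it to the instance with data $(\boldsymbol{r}^U(t),\boldsymbol{C}^L(t),\boldsymbol{B}^{(t)})$ requires a quantitative basis-stability argument, and that is precisely where $\delta$ enters the hypothesis $\epsilon\le\frac{\min\{\chi,\delta\}}{5d^{3/2}}$ and the choice of $\theta$. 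Tellingly, your constants-checking in the last step consumes only $\chi$ and never uses $\delta$, which signals that the dual-feasibility half of the argument is missing rather than merely deferred. A secondary arithmetic point: with the lemma's literal hypothesis on $\epsilon$, your bound on $(\boldsymbol{C}^L(t))^{-1}(\boldsymbol{b}^{(t)}-\boldsymbol{b})$ comes out as roughly $\frac{2\chi}{5\sigma d}$, which is not $\le\frac{2\chi}{5}$ when $\sigma$ is small; you need the extra $\min\{1,\sigma\}$ factor present in the algorithm's actual choice $\epsilon=\frac{\min\{1,\sigma\}\min\{\chi,\delta\}b}{5d^{3/2}}$, so you should carry that stronger bound rather than the weaker one quoted in the lemma statement.
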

In our algorithm, the LP problems are solved approximately. \lem{approx-prob-arms} demonstrates that if each LP is solved to within an additive error of at most $\epsilon_{\mathrm{LP}} (T - t)/\log^2 T$, and $\epsilon_{\mathrm{LP}} \leq \frac{\sigma \chi \log^2 T}{40\min\{m^{3/2}, d^{3/2}\}}$, then the resulting approximate solution remains close to the exact one in terms of the normalized probability.
\begin{lemma}
\label{lem:approx-prob-arms}
    If the LP problem is solved $\epsilon_{\mathrm{LP}} (T-t)/\log^2 T$ optimal and $\epsilon_{\mathrm{LP}} (T-t)/\log^2 T$-approximate, let the approximate solution to the LP be denoted by ${\boldsymbol{\eta}^*}^\prime(t)$, and define the probability of playing each arm as $\hat{\boldsymbol{\eta}}(t) = {\boldsymbol{\eta}^*}^\prime(t) / \vert {\boldsymbol{\eta}^*}^\prime (t)\vert_1 $. Then, if $ \epsilon_{\mathrm{LP}} \leq \frac{\sigma \chi \log^2T}{40\min\{m^{\frac32},d^{\frac32}\}}$, the element-wise difference in the probabilities of playing each arm between the approximate and exact solutions to the LP is bounded by
     \begin{equation}
        \Vert \tilde{\boldsymbol{\eta}}(t) - \hat{\boldsymbol{\eta}}(t) \Vert_\infty \leq \frac{4 \min\{m^{\frac32},d^{\frac32}\} \epsilon_{\mathrm{LP}} }{\sigma \log^2 T} \leq \frac{\chi}{10}.
    \end{equation}
\end{lemma}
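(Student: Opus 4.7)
The plan is to control the raw perturbation $\Delta(t):={\boldsymbol{\eta}^*}'(t)-\boldsymbol{\eta}^*(t)$ through the combined approximate-feasibility and approximate-optimality guarantees, and then transfer the bound to the normalized probabilities. First I would invoke \lem{prob-arm}: under \assum{warm start assum} and the stage-2 condition $\boldsymbol{b}^{(t-1)}\in[\boldsymbol{b}-\epsilon,\boldsymbol{b}+\epsilon]$, the exact LP \eq{residual-lp} is square ($m=d$) with a unique strictly positive solution $\boldsymbol{\eta}^*(t)=(\boldsymbol{C}^L(t))^{-1}\boldsymbol{B}^{(t)}$ and all $d$ resource constraints binding. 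Combining the stage-2 guarantee $\Vert\boldsymbol{C}^L(t)-\boldsymbol{C}\Vert_\infty\leq\theta$ with the choice of $\theta$ in \prop{regret 2} and Weyl's perturbation inequality yields $\sigma_{\min}(\boldsymbol{C}^L(t))\geq\sigma/2$. Because every primal variable is positive, the exact dual is also unique, $\boldsymbol{\lambda}^*(t)=(\boldsymbol{C}^L(t))^{-\top}\boldsymbol{r}^U(t)$, and weak duality forces $\Vert\boldsymbol{\lambda}^*(t)\Vert_1\leq 1/b$.

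Next I would exploit both slacks of the approximate solution. Approximate feasibility gives $\boldsymbol{s}:=\boldsymbol{B}^{(t)}-\boldsymbol{C}^L(t){\boldsymbol{\eta}^*}'(t)\geq-\epsilon_1\boldsymbol{1}$, where $\epsilon_1=\epsilon_{\mathrm{LP}}(T-t)/\log^2 T$. Approximate optimality, together with the stationarity identity $(\boldsymbol{r}^U(t))^\top=\boldsymbol{\lambda}^{*\top}\boldsymbol{C}^L(t)$ which holds because every primal variable is positive, gives
\begin{equation*}
\boldsymbol{\lambda}^{*\top}\boldsymbol{s}=\boldsymbol{\lambda}^{*\top}\boldsymbol{B}^{(t)}-(\boldsymbol{r}^U(t))^\top{\boldsymbol{\eta}^*}'(t)=(\boldsymbol{r}^U(t))^\top\boldsymbol{\eta}^*(t)-(\boldsymbol{r}^U(t))^\top{\boldsymbol{\eta}^*}'(t)\leq\epsilon_1.
\end{equation*}
Splitting $\boldsymbol{s}=\boldsymbol{s}^+-\boldsymbol{s}^-$ with $\boldsymbol{s}^{\pm}\geq 0$, feasibility bounds $\Vert\boldsymbol{s}^-\Vert_\infty\leq\epsilon_1$, while $\boldsymbol{\lambda}^{*\top}\boldsymbol{s}^+\leq\epsilon_1(1+\Vert\boldsymbol{\lambda}^*\Vert_1)$ together with a quantitative lower bound $\lambda^*_{\min}=\Omega(\sigma/d)$ (derived from $\boldsymbol{\lambda}^*=(\boldsymbol{C}^L(t))^{-\top}\boldsymbol{r}^U(t)$ and structural positivity of $\boldsymbol{r}^U$ on $\mathcal{I}^*$) gives $\Vert\boldsymbol{s}^+\Vert_\infty=O(d\epsilon_1/\sigma)$, hence $\Vert\boldsymbol{s}\Vert_\infty=O(d\epsilon_1/\sigma)$.

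Finally I would push the residual bound through the inverse matrix,
\begin{equation*}
\Vert\Delta(t)\Vert_\infty\leq\Vert\Delta(t)\Vert_2=\bigl\Vert(\boldsymbol{C}^L(t))^{-1}\boldsymbol{s}\bigr\Vert_2\leq\frac{\sqrt{d}\,\Vert\boldsymbol{s}\Vert_\infty}{\sigma_{\min}(\boldsymbol{C}^L(t))}=O\!\left(\frac{d^{3/2}\,\epsilon_1}{\sigma}\right),
\end{equation*}
with the symmetric estimate in $m$ holding when $m<d$. Since \lem{prob-arm} also yields $\Vert\boldsymbol{\eta}^*(t)\Vert_1=\Theta(T-t)$ via its componentwise lower bound $\widetilde{\boldsymbol{\eta}}(t)\geq\chi/5$, the elementary normalization inequality $\bigl\Vert\boldsymbol{u}/\Vert\boldsymbol{u}\Vert_1-\boldsymbol{v}/\Vert\boldsymbol{v}\Vert_1\bigr\Vert_\infty\leq 2\Vert\boldsymbol{u}-\boldsymbol{v}\Vert_\infty/\min\{\Vert\boldsymbol{u}\Vert_1,\Vert\boldsymbol{v}\Vert_1\}$ gives
\begin{equation*}
\Vert\tilde{\boldsymbol{\eta}}(t)-\hat{\boldsymbol{\eta}}(t)\Vert_\infty\leq\frac{2\Vert\Delta(t)\Vert_\infty}{T-t}\leq\frac{4\min\{m^{3/2},d^{3/2}\}\,\epsilon_{\mathrm{LP}}}{\sigma\log^2 T},
\end{equation*}
and the inequality $\leq\chi/10$ then follows immediately from the hypothesis $\epsilon_{\mathrm{LP}}\leq\sigma\chi\log^2 T/(40\min\{m^{3/2},d^{3/2}\})$. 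I expect the main obstacle to be Step~2: the feasibility and optimality slacks enter only through weighted sums, so extracting a componentwise $\ell_\infty$ bound on $\boldsymbol{s}$ requires quantitative control of the dual-variable magnitudes, which in turn hinges on how tightly one can relate $\lambda^*_{\min}$ to $\sigma$ when $\boldsymbol{r}^U$ is only uniformly upper bounded. The singular-value perturbation of $\boldsymbol{C}^L(t)$ and the normalization step are comparatively routine.
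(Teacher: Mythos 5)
Your overall architecture matches the paper's: bound the raw perturbation $\Delta(t)={\boldsymbol{\eta}^*}'(t)-\boldsymbol{\eta}^*(t)$ by pushing a constraint-residual bound through $(\boldsymbol{C}^L(t))^{-1}$ using $\sigma_{\min}$, then transfer to the normalized probabilities. The difference is in how the residual $\boldsymbol{s}=\boldsymbol{B}^{(t)}-\boldsymbol{C}^L(t){\boldsymbol{\eta}^*}'(t)$ is controlled. The paper simply takes $\Vert \boldsymbol{C}^L(t)\Delta(t)\Vert_\infty\leq\epsilon_{\mathrm{LP}}(T-t)/\log^2T$ as the starting point and immediately writes $\Vert\Delta(t)\Vert_\infty\leq\Vert\Delta(t)\Vert_2\leq\sigma_{\min}^{-1}\Vert\boldsymbol{C}^L(t)\Delta(t)\Vert_2\leq\sigma_{\min}^{-1}\sqrt{\min\{m,d\}}\,\Vert\boldsymbol{C}^L(t)\Delta(t)\Vert_\infty$; it then handles the normalization with an explicit two-term split that carries the $(1+\min\{m,d\})$ factor. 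You instead try to derive the two-sided residual bound from the $\epsilon_{\mathrm{LP}}$-feasibility and $\epsilon_{\mathrm{LP}}$-optimality guarantees separately, which is a more honest reading of \defn{lp solver setting}.

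However, your Step~2 has a genuine gap, and you correctly flag it yourself: feasibility only gives $\boldsymbol{s}\geq-\epsilon_1\boldsymbol{1}$, and optimality only gives the weighted sum $\boldsymbol{\lambda}^{*\top}\boldsymbol{s}\leq\epsilon_1$. To extract $\Vert\boldsymbol{s}^+\Vert_\infty=\mathcal{O}(\epsilon_1)$ you invoke $\lambda^*_{\min}=\Omega(\sigma/d)$, but this is asserted, not proved, and it does not follow from $\boldsymbol{\lambda}^*=(\boldsymbol{C}^L(t))^{-\top}\boldsymbol{r}^U(t)$ together with $\sigma_{\min}(\boldsymbol{C}^L(t))\geq\sigma/2$: a small minimum singular value bounds $\Vert\boldsymbol{\lambda}^*\Vert$ from \emph{above}, and non-degeneracy only guarantees $\boldsymbol{\lambda}^*>0$, not a quantitative floor in terms of $\sigma$. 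Without that floor, the weighted optimality slack cannot be converted into a componentwise bound on $\boldsymbol{s}^+$, and the chain breaks. Moreover, even granting the floor, your arithmetic does not reproduce the stated constant: $\Vert\boldsymbol{s}\Vert_\infty=\mathcal{O}(d\epsilon_1/\sigma)$ followed by $\Vert\Delta(t)\Vert_\infty\leq\sqrt{d}\,\Vert\boldsymbol{s}\Vert_\infty/\sigma_{\min}$ yields $\mathcal{O}(d^{3/2}\epsilon_1/\sigma^2)$, i.e.\ an extra factor of $1/\sigma$ relative to the lemma; and the normalization inequality $\bigl\Vert\boldsymbol{u}/\Vert\boldsymbol{u}\Vert_1-\boldsymbol{v}/\Vert\boldsymbol{v}\Vert_1\bigr\Vert_\infty\leq 2\Vert\boldsymbol{u}-\boldsymbol{v}\Vert_\infty/\min\{\Vert\boldsymbol{u}\Vert_1,\Vert\boldsymbol{v}\Vert_1\}$ is false in general --- the cross term involves $\vert\Vert\boldsymbol{u}\Vert_1-\Vert\boldsymbol{v}\Vert_1\vert\leq\min\{m,d\}\Vert\boldsymbol{u}-\boldsymbol{v}\Vert_\infty$, contributing the dimension factor that the paper retains explicitly. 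To repair the proof along the paper's lines, you should take the constraint-residual bound $\Vert\boldsymbol{C}^L(t)\Delta(t)\Vert_\infty\leq\epsilon_1$ as given (or justify the lower side of the residual separately), apply $\sigma_{\min}(\boldsymbol{C}^L(t))\geq\sigma/2$ once, and carry the $(1+\min\{m,d\})$ normalization factor explicitly.
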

\begin{proof}
    By obtaining an approximate solution to the LP,
    \begin{equation}
         \Vert \boldsymbol{{C}}^L(t) ({\boldsymbol{\eta}^*}^\prime(t)-{\boldsymbol{\eta}^*}(t)) \Vert_\infty \leq \epsilon_{\mathrm{LP}} (T-t)/\log^2 T 
    \end{equation}
    then
    \begin{equation}
    \begin{aligned}
        \Vert {\boldsymbol{\eta}^*}^\prime(t) -   {\boldsymbol{\eta}^*}(t) \Vert_\infty &\leq \Vert {\boldsymbol{\eta}^*}^\prime(t) -   {\boldsymbol{\eta}^*}(t) \Vert_2 \\
        & \leq \frac{1}{\sigma_{\min}(\boldsymbol{{C}}^L(t))} \Vert \boldsymbol{{C}}^L(t) ({\boldsymbol{\eta}^*}^\prime(t)-{\boldsymbol{\eta}^*}(t)) \Vert_2 \\
        & \leq \frac{\sqrt{\min\{m,d\}}}{\sigma_{\min}(\boldsymbol{{C}}^L(t))} \Vert \boldsymbol{{C}}^L(t) ({\boldsymbol{\eta}^*}^\prime(t)-{\boldsymbol{\eta}^*}(t)) \Vert_\infty \\
        & \leq \frac{2\sqrt{\min\{m,d\}} \epsilon_{\mathrm{LP}} (T-t)}{\sigma \log^2 T},
    \end{aligned}
    \end{equation}
    and the 1-norm of ${\boldsymbol{\eta}^*}^\prime(t)$ satisfies $ \vert \Vert{\boldsymbol{\eta}^*}^\prime(t)\Vert_1 - \Vert{\boldsymbol{\eta}^*}(t)\Vert_1 \vert \leq \min\{m,d\} \Vert {\boldsymbol{\eta}^*}^\prime(t) -   {\boldsymbol{\eta}^*}(t) \Vert_\infty$. 
    Therefore we can bound the difference in the  probabilities of playing each arm:
    \begin{equation}
    \begin{aligned}
        \Vert \tilde{\boldsymbol{\eta}}(t) - \hat{\boldsymbol{\eta}}(t) \Vert_\infty  &= \Vert \frac{{\boldsymbol{\eta}^*}^\prime(t)}{\Vert {\boldsymbol{\eta}^*}^\prime(t) \Vert_1} - \frac{{\boldsymbol{\eta}^*}(t)}{\Vert {\boldsymbol{\eta}^*}(t) \Vert_1} \Vert_\infty  \\
        & \leq  \frac{ (T-t)\Vert {\boldsymbol{\eta}^*}^\prime(t) -{\boldsymbol{\eta}^*}(t)   \Vert_\infty } {(T-t)\Vert {\boldsymbol{\eta}^*}^\prime(t) \Vert_1} + \frac{\Vert {\boldsymbol{\eta}^*} (t) \Vert_\infty  \left( T-t- \Vert {\boldsymbol{\eta}^*}^\prime(t) \Vert_1  
         \right) }{(T-t)\Vert {\boldsymbol{\eta}^*}^\prime(t) \Vert_1} \\
         & \leq \frac{(1+\min\{m,d\}) \frac{2 \min\{m^{\frac12},d^{\frac12}\} \epsilon_{\mathrm{LP}} }{\sigma \log^2 T} }{1-\frac{2 \min\{m^{\frac32},d^{\frac32}\} \epsilon_{\mathrm{LP}} }{\sigma \log^2 T}} \\
        & \leq \frac{4 \min\{m^{\frac32},d^{\frac32}\} \epsilon_{\mathrm{LP}} }{\sigma \log^2 T},
    \end{aligned}
    \end{equation}
    where we apply $\frac{2 \min\{m^{\frac32},d^{\frac32}\} \epsilon_{\mathrm{LP}} }{\sigma \log^2 T} \leq \frac{\chi}{20} < \frac12 $ in the last line.
\end{proof}
From \lem{prob-arm}, we know that if the LP problem is solved approximately, the approximate probability is element-wise lower bounded by $\hat{\boldsymbol{\eta}}(t) \geq \tilde{\boldsymbol{\eta}}(t) -\Vert \tilde{\boldsymbol{\eta}}(t) - \hat{\boldsymbol{\eta}}(t) \Vert_\infty\geq \frac{\chi}{10}$. 
This implies that each optimal arm is selected with probability at least $\frac{\chi}{10}$. In the following, we show that if $\boldsymbol{b}^{(s)} \in [\boldsymbol{b} - \epsilon, \boldsymbol{b} + \epsilon]$ for all $s \in [t]$ and $t \geq \frac{3200 \log T}{\chi^2}$, then the number of times arm $i\in \mathcal{I}^*$ is played up to time $t$, denoted as $n_i(t)$, is lower bounded by $n_i(t) \geq \frac{\chi t}{20}$.

Recall that the filtration $\mathcal{H}_t$ contains all random rewards and resource consumption observations up to time $t$, i.e., $\mathcal{H}_t = \{\boldsymbol{r}_s,\boldsymbol{C}_s\}_{s=1}^{t}$. Define the martingale difference sequence $X_s = n_i(s)-n_i(s-1)-\hat{\eta}_i(s)$. By the Azuma–Hoeffding inequality, for all $t \in [T]$, we have
\begin{equation}
    \mathbb{P}\left(\sum_{s=1}^t n_i(s)-n_i(s-1)-\hat{\eta}_i(s) \leq -\sqrt{8t\log T}\right) \leq \frac{1}{T^4}.
\end{equation}
This leads to 
\begin{equation}
    \mathbb{P}\left(n_i(t) \leq \sum_{s=1}^t\hat{\eta}_i(s) -\sqrt{8t\log T},\text{ for some $t\in[T]$}\right) \leq \frac{1}{T^3}.
\end{equation}
Therefore, we know that $n_i(t) \geq \frac{\chi t}{20}$ holds with probability $1-\frac{1}{T^3}$ if $\boldsymbol{b}^{(s)}\in [\boldsymbol{b}-\epsilon,\boldsymbol{b}+\epsilon]$ for all $s\in [t]$ and $t\geq \frac{3200\log T}{\chi^2}$, and the bias in the resource consumption can be upper bounded as 
\begin{equation}
\begin{aligned}
\label{eq:gap-e-b}
\Vert \mathbb{E} \left[\boldsymbol{C}_{\cdot, i_{t+1}}(\boldsymbol{b}^{(t)}) | \mathcal{H}_t\right] -\boldsymbol{b}^{(t)}\Vert_\infty & =\Vert\boldsymbol{C}\hat{\boldsymbol{\eta}}(t)-\boldsymbol{b}^{(t)}\Vert_\infty \\
& \leq \Vert\boldsymbol{C}\tilde{\boldsymbol{\eta}}(t)  -\boldsymbol{b}^{(t)}\Vert_\infty + \Vert \boldsymbol{C}\Vert_\infty  \Vert(\tilde{\boldsymbol{\eta}}(t) - \hat{\boldsymbol{\eta}}(t)) \Vert_\infty \\
 & \leq\Vert\left(\boldsymbol{C}-\boldsymbol{C}^L(t)\right)\left(\boldsymbol{C}^L(t)\right)^{-1}\boldsymbol{b}^{(t)}\Vert_\infty + \frac{4 \min\{m^{\frac52},d^{\frac52}\} \epsilon_{\mathrm{LP}} }{\sigma \log^2 T} \\
 & \leq\Vert\boldsymbol{C}-\boldsymbol{C}^L(t)\Vert_\infty\Vert\left(\boldsymbol{C}^L(t)\right)^{-1}\Vert_\infty\Vert\boldsymbol{b}^{(t)}\Vert_\infty +  \frac{4 \min\{m^{\frac52},d^{\frac52}\} \epsilon_{\mathrm{LP}} }{\sigma \log^2 T} \\
 & \leq \min\{m,d\} \sqrt{\frac{40  \log T}{\chi t}} \cdot \frac{2\sqrt{d}}{\sigma} \cdot{} \frac{6b}{5} + \frac{4 \min\{m^{\frac52},d^{\frac52}\} \epsilon_{\mathrm{LP}} }{\sigma \log^2 T} \\
 & \leq \frac{16b\min\{m,d\}\sqrt{d}}{\sigma} \sqrt{\frac{\log T}{t}}+ \frac{4 \min\{m^{\frac52},d^{\frac52}\} \epsilon_{\mathrm{LP}} }{\sigma \log^2 T}.
\end{aligned}
\end{equation}
The last line is defined to be $\epsilon_t$ when $t > \frac{\alpha T}{19}$. Otherwise, when $t \leq \frac{\alpha T}{19}$, $\epsilon_t$ is defined to be $\frac{6}{5}$.  Given $\alpha$, consider the minimum $\underline{T}$ that satisfies $\frac{\alpha T}{19}  > \frac{3200\log T}{\chi^2}$. 
For any $T > \underline{dT}$, we have the following two cases:
\begin{enumerate}
    \item $t \geq \frac{\alpha T}{19}$. In this case, we upper bound the gap between the expected resource consumption and the average remaining resource as shown in \eq{gap-e-b}. From the definition of $\epsilon_t$, we have $\Vert \mathbb{E} \left[\boldsymbol{C}_{\cdot,i_{t+1}}(\boldsymbol{b}^{(t)}) | \mathcal{H}_t\right] -\boldsymbol{b}^{(t)}\Vert_\infty \leq \epsilon_t$, which implies that the event $\mathcal{E}_t$ is contained in the complement of the following set:
    \begin{equation}
        \mathcal{E}_t \subset \left\{ n_i(t) \geq \sum_{s=1}^t\hat{\eta}_i(s) -\sqrt{8t\log T},\text{ for all $i \in [m]$}\right\}^{\complement}.
    \end{equation}
    \item  $t < \frac{\alpha T}{19}$. Since the resource consumption matrix is element-wise bounded in $[0,1]$, we have $\boldsymbol{0}\leq  \mathbb{E} \left[\boldsymbol{C}_{\cdot,i_{t+1}}(\boldsymbol{b}^{(t)}) | \mathcal{H}_t\right] \leq \boldsymbol{1}$. In addition, $\boldsymbol{0} \leq \boldsymbol{b}^{(t)} \leq \boldsymbol{\frac{6}{5}}$. As a result, $ \Vert \mathbb{E} \left[\boldsymbol{C}_{\cdot,i_{t+1}}(\boldsymbol{b}^{(t)}) | \mathcal{H}_t\right] -\boldsymbol{b}^{(t)}\Vert_\infty \leq \frac{6}{5} = \epsilon_t$. Since the event  $\mathcal{E}_t$ requires $\Vert \mathbb{E} \left[\boldsymbol{C}_{\cdot,i_{t+1}}(\boldsymbol{b}^{(t)}) | \mathcal{H}_t\right] -\boldsymbol{b}^{(t)}\Vert_\infty$ to be greater than $\epsilon_t$, it follows that $\mathcal{E}_t = \emptyset$ in this case.
\end{enumerate}

Combining the two cases, we obtain the following bound, which completes the proof of \lem{part2}:
\begin{equation}
    \mathbb{P}\left(\bigcup_{t=1}^T \mathcal{E}_t\right) \subset \mathbb{P}\left( \left\{ n_i(t) \geq \sum_{s=1}^t\hat{\eta}_i(s) -\sqrt{8t\log T},\text{ for all $t\in [T]$ and all $i \in [m]$}\right\}^{\complement} \right) \leq \frac{\min\{m,d\}}{T^3}.
\end{equation}

\paragraph{Proof of \lem{part1}}
In the proof of \lem{part1}, our argument deviates from that of \cite{li2021symmetry} by employing a different choice of $\epsilon_t$, which reflects the approximation error incurred when solving the LP problems. For any $j \in [d]$, define
\begin{equation}
    \tau^\prime_j = \min \{t: b_j^{(s)} \not \in [b-\epsilon,b+\epsilon]\}.
\end{equation}
That is, $\tau^\prime_j$ represents the first time the $j$-th entry of $ \boldsymbol{b}^{(s)}$ exceeds the interval $[\boldsymbol{b}-\epsilon,\boldsymbol{b}+\epsilon]$. Recall that $\tau^\prime = \min \{t: \boldsymbol{b}^{(t)} \not \in [\boldsymbol{b}-\epsilon,\boldsymbol{b}+\epsilon]\}$, which implies that $\tau^\prime = \min_j \{\tau^\prime_j\}$. Consequently, the following event holds:
\begin{equation}
    \left\{\tau^\prime \leq t ,\:\bigcap_{s=1}^t\mathcal{E}_s^\complement \right\} = \bigcup_{j\in [d]} \left\{ \tau_j^\prime \leq t,\: \tau_k^\prime \geq \tau_j^\prime \text{ for $k \in [j]$ and $k \neq j$ },\:\bigcap_{s=1}^t\mathcal{E}_s^\complement \right\}.
\end{equation}
With $\mathcal{F}_j = \left\{\tau_k^\prime \geq \tau_j^\prime \text{ for $k \in [j]$ and $k \neq j$} \right\}$, we obtain the following probability bound:
\begin{equation}
    \mathbb{P}\left(\tau^\prime \leq t ,\:\bigcap_{s=1}^t\mathcal{E}_s^\complement \right) \leq \sum_{j\in [d]} \mathbb{P}\left( \tau_j^\prime \leq t,\: \mathcal{F}_j,\:\bigcap_{s=1}^t\mathcal{E}_s^\complement \right).
\end{equation}

Next, we derive a bound for the probability corresponding to each constraint $j\in[d] $. Given the original sequence $b_j^{(s)}$ We define the sequence $\tilde{b}_j^{(s)}$ as follows: 
\begin{equation}
    \tilde{b}_j^{(s)} =\begin{cases} 
    b_j^{(s)} & s \leq  \tau^\prime_j \\
    \tilde{b}_j^{(s-1)} & s > \tau^\prime_j
    \end{cases}.
\end{equation}
It is straightforward to see that $\tilde{b}_j^{(s)}$ can be interpreted as a truncated version of $b_j^{(s)}$ when it exceeds the threshold $\epsilon$. Notably, we have the following equivalence:
\begin{equation}
\hspace{-2mm} \left\{{b}^{(s)}_j\not\in[b-\epsilon,b+\epsilon] \text{ for some $s\leq t$, } \mathcal{F}_j,\:\bigcap\limits_{s=1}^{t}\mathcal{E}_t^{\complement}\right\}
        =
        \left\{
            \tilde{{b}}^{(s)}_j\not\in[b-\epsilon,b+\epsilon] \text{ for some } s\leq t,\: \mathcal{F}_j,\:\bigcap\limits_{s=1}^{t}\mathcal{E}_t^{\complement}\right\}.
\end{equation}

The definition of $\tilde{{b}}^{(s)}_j$ has the following properties. Consider a sample point 
\begin{equation}
    \omega = \{ \tilde{{b}}^{(0)}_j, \tilde{{b}}^{(1)}_j,\dots,\tilde{{b}}^{(t)}_j \}\in \left\{\tilde{{b}}^{(s)}_j\not\in[b-\epsilon,b+\epsilon] \text{ for some } s\leq t, \: \mathcal{F}_j,\:\bigcap\limits_{s=1}^{t}\mathcal{E}_t^{\complement}\right\},
\end{equation}
where $\omega$ is a sequence satisfying
\begin{enumerate}
    \item $\tilde{{b}}^{(s)}_j \in [b-\epsilon, b+\epsilon] \text{, for $s < \tau^\prime_j $}$,
    \item $ \tilde{{b}}^{(s)}_j \equiv \tilde{{b}}^{(\tau^\prime_j)}_j \not \in [b-\epsilon, b+\epsilon] \text{, for $ \tau^\prime_j \leq s \leq t$} $,
    \item $\tilde{{b}}^{(s)}_k \in [b-\epsilon, b+\epsilon]$  for $s \leq \tau^\prime_j$ and $k\in[d],\: k \neq j$.
\end{enumerate}

Then from the definition of $\tilde{{b}}^{(s)}_j$ and ${b}^{(s)}_j$, the increment $\Delta_s = \tilde{{b}}_{j}^{(s)}-\tilde{{b}}_{j}^{(s-1)}$ is given by
\begin{equation}
\begin{aligned}
        \Delta _s
        &=\tilde{{b}}_{j}^{(s)}-\tilde{{b}}_{j}^{(s-1)}
        =
        \left\{
            \begin{matrix}
                0, & s > \tau^\prime_j,\\ -\frac{1}{T-s}(C_{j,s}-\tilde{{b}}_{j}^{(s-1)}),   & s \leq \tau^\prime_j .
            \end{matrix}
        \right.
\end{aligned}
\end{equation}
 With $\tilde{{b}}^{(0)}_j = b$, we obtain
\begin{equation}
\label{eq:summation break}
 \tilde{{b}}^{(s)}_j-b = \sum_{k=1}^s \Delta_k=\sum_{k=1}^s \mathbb{E}[\Delta_k|\mathcal{H}_{k-1}] + \sum_{k=1}^s \left(\Delta_k -\mathbb{E}[\Delta_k|\mathcal{H}_{k-1}]\right).
\end{equation}
From the definition of $\bigcap\limits_{s=1}^{t}\mathcal{E}_t^{\complement}$ and the properties of the sequence $\tilde{b}_j^{(s)}$, we observe that:
\begin{enumerate}
    \item Since $\{\tilde{{b}}^{(s)}_j \in [b-\epsilon, b+\epsilon] \text{, for $s < \tau^\prime_j$}\}$ and $\left\{\tau_k^\prime \geq \tau_j^\prime \text{ for $k \in [j]$ and $k \neq j$} \right\}$, we have $\mathbb{E}[\Delta_s|\mathcal{H}_{s-1}] \leq \frac{\epsilon_s}{T-s} \text{, for $s \leq \tau^\prime_j$}$.
    \item Since $\{\tilde{{b}}^{(s)}_j \equiv \tilde{{b}}^{(s-1)}_j  \text{, for $s > \tau^\prime_j$}\}$, it follows that $\mathbb{E}[\Delta_s|\mathcal{H}_{s-1}] \equiv 0\text{ for $s > \tau^\prime_j$}$. 
\end{enumerate}

For the first summation in \eq{summation break}, it is bound by 
\begin{equation}
\begin{aligned}
   \left\vert \sum_{k=1}^s \mathbb{E}[\Delta_k|\mathcal{H}_{k-1}] \right\vert 
    \leq  \sum_{k=1}^T \left\vert \mathbb{E}[\Delta_k|\mathcal{H}_{k-1}] \right\vert 
    = \sum_{k=1}^{\alpha T} \frac{\bar{\epsilon}}{T-k} + \sum_{k=\alpha T+1}^{T-1} \frac{\epsilon_k} {T-k} 
    \leq  \frac{\alpha \bar{\epsilon}}{1-\alpha} + \sum_{k=\alpha T+1}^{T-1} \frac{\epsilon_k} {T-k}.
\end{aligned}
\end{equation}
Consequently, if condition \eqref{eq:condition} holds, namely, $\frac{\alpha\bar{\epsilon}}{1-\alpha}+\sum_{t=\alpha T+1}^{T-1}\frac{\epsilon_t}{T-t}\leq\frac{2\epsilon}3$, 
we have $\left\vert \sum_{k=1}^s \mathbb{E}[\Delta_k|\mathcal{H}_{k-1}] \right\vert \leq \frac{2\epsilon}{3}$. 

For the second summation in \eq{summation break}, the term  $\Delta_k -\mathbb{E}[\Delta_k|\mathcal{H}_{k-1}]$ forms a martingale difference sequence. Moreover, since $C_{j,t}\in[0,1]$ and $\mathbb{E}[C_{j,t}|\mathcal{H}_{k-1}] \in [0,1]$, we obtain the bound
\begin{equation}
    \left\vert \Delta_k -\mathbb{E}[\Delta_k|\mathcal{H}_{k-1}] \right\vert  = \frac{1}{T-k} \left\vert C_{j,t} - \tilde{b}_j^{(t-1)} - \left(\mathbb{E}[C_{j,t}|\mathcal{H}_{k-1}]  - \tilde{b}_j^{(t-1)} \right) \right\vert \leq \frac{1}{T-k}.
\end{equation}
Applying the Azuma–Hoeffding inequality, we derive
\begin{equation}
    \mathbb{P}\left( \left\vert \sum_{k=1}^s \Delta_k -\mathbb{E}[\Delta_k|\mathcal{H}_{k-1}] \right\vert \geq \frac{\epsilon}{3} \text{, for some $s \leq t$} ,\:\mathcal{F}_j,\:\bigcap\limits_{s=1}^{t}\mathcal{E}_t^{\complement} \right) \leq 2\exp \left( -\frac{(\frac{\epsilon}{3})^2}{2\sum_{k=1}^t \frac{1}{(T-k)^2}} \right).
\end{equation}
To further simplify the probability, we note that $\sum_{k=1}^t \frac{1}{(T-k)^2} \leq \sum_{k=1}^t \frac{1}{(T-k)(T-k-1)} \leq \frac{1}{T-t-1}$. 
Substituting this result, we obtain
\begin{equation}
     \mathbb{P}\left( \left\vert \sum_{k=1}^s \Delta_k -\mathbb{E}[\Delta_k|\mathcal{H}_{k-1}] \right\vert \geq \frac{\epsilon}{3} \text{, for some $s \leq t$} ,\:\mathcal{F}_j,\:\bigcap\limits_{s=1}^{t}\mathcal{E}_t^{\complement} \right) \leq 2\exp \left( -\frac{ (T-t-1) \epsilon^2}{18} \right).
\end{equation}

Finally, the event $\left\{\tilde{\tau}^\prime_j \leq t ,\:\mathcal{F}_j,\:\bigcap\limits_{s=1}^{t}\mathcal{E}_t^{\complement}\right\}$ can be appropriately bounded as follows:
\begin{small}
\begin{align}
\left\{\tilde{\tau}^\prime_j \leq t ,\:\mathcal{F}_j,\:\bigcap\limits_{s=1}^{t}\mathcal{E}_t^{\complement}\right\}
=&\left\{{b}^{(s)}_j\not\in[b-\epsilon,b+\epsilon] \text{ for some $s\leq t$, } \mathcal{F}_j,\:\bigcap\limits_{s=1}^{t}\mathcal{E}_t^{\complement}\right\}\nonumber \\
    \subset&
    \left\{
        \left|\sum\limits_{k=1}^{s} \Delta_k\right|\geq\epsilon \text{ for some }s\leq t ,\:\mathcal{F}_j,\:\bigcap\limits_{s=1}^{t}\mathcal{E}_t^{\complement}
    \right\}\nonumber\\
    \subset&
    \left\{
        \left\vert\sum_{k=1}^s \left(\Delta_k-\mathbb{E}[\Delta_k|\mathcal{H}_{k-1}]\right) \right\vert \geq \frac{\epsilon}{3} \text{ for some } s\le t,\:\mathcal{F}_j,\:\bigcap\limits_{s=1}^{t}\mathcal{E}_t^{\complement}
    \right\}\bigcup\left\{
        \left\vert
            \sum\limits_{k=1}^{s} \mathbb{E}[\Delta_k|\mathcal{H}_{k-1}]
        \right\vert
        >
        \frac{2\epsilon}{3} ,\:\mathcal{F}_j,\:\bigcap\limits_{s=1}^{t}\mathcal{E}_t^{\complement} \right\}\nonumber \\
         =& \left\{
        \left\vert\sum_{k=1}^s \left(\Delta_k-\mathbb{E}[\Delta_k|\mathcal{H}_{k-1}]\right) \right\vert \geq \frac{\epsilon}{3} \text{ for some } s \leq t ,\:\mathcal{F}_j,\:\bigcap\limits_{s=1}^{t}\mathcal{E}_t^{\complement} \right\},
\end{align}
\end{small} 
which indicates that
\begin{equation}
\begin{aligned}
    \mathbb{P}\left(\tau^\prime \leq t ,\:\bigcap_{s=1}^t\mathcal{E}_s^\complement \right) 
    & \leq \sum_{j\in [d]} \mathbb{P}\left( \tau_j^\prime \leq t,\: \mathcal{F}_j,\:\bigcap_{s=1}^t\mathcal{E}_s^\complement \right) \\
    & = \sum_{j\in [d]} \mathbb{P}\left( \left\vert \sum_{k=1}^s \Delta_k -\mathbb{E}[\Delta_k|\mathcal{H}_{k-1}] \right\vert \geq \frac{\epsilon}{3} \text{, for some $s \leq t$} ,\:\mathcal{F}_j,\:\bigcap\limits_{s=1}^{t}\mathcal{E}_t^{\complement} \right)  \\
    & \leq 2d\exp \left( -\frac{ (T-t-1) \epsilon^2}{18} \right).
\end{aligned}
\end{equation}


\begin{thebibliography}{10}

\bibitem{Abbas2024}
Amira Abbas, Andris Ambainis, Brandon Augustino, Andreas Bärtschi, Harry Buhrman, Carleton Coffrin, Giorgio Cortiana, Vedran Dunjko, Daniel~J. Egger, Bruce~G. Elmegreen, Nicola Franco, Filippo Fratini, Bryce Fuller, Julien Gacon, Constantin Gonciulea, and Christa Zoufal, \emph{Challenges and opportunities in quantum optimization}, Nature Reviews Physics \textbf{6} (2024), no.~12, 718--735.

\bibitem{agrawal2014dynamic}
Shipra Agrawal, Zizhuo Wang, and Yinyu Ye, \emph{A dynamic near-optimal algorithm for online linear programming}, Operations Research \textbf{62} (2014), no.~4, 876--890, \arxiv{0911.2974}.

\bibitem{alman2025more}
Josh Alman, Ran Duan, Virginia~Vassilevska Williams, Yinzhan Xu, Zixuan Xu, and Renfei Zhou, \emph{More asymmetry yields faster matrix multiplication}, Proceedings of the 2025 Annual ACM-SIAM Symposium on Discrete Algorithms (SODA), pp.~2005--2039, SIAM, 2025, \arxiv{2404.16349}.

\bibitem{vanApeldoorn2018SDP}
Joran~van Apeldoorn and Andr{\'a}s Gily{\'e}n, \emph{Improvements in quantum {SDP}-solving with applications}, Proceedings of the 46th International Colloquium on Automata, Languages, and Programming, Leibniz International Proceedings in Informatics (LIPIcs), vol. 132, pp.~99:1--99:15, Schloss Dagstuhl--Leibniz-Zentrum fuer Informatik, 2019, \arxiv{1804.05058}.

\bibitem{van2019quantum}
Joran~van Apeldoorn and Andr{\'a}s Gily{\'e}n, \emph{Quantum algorithms for zero-sum games}, arXiv preprint arXiv:1904.03180 (2019), \arxiv{1904.03180}.

\bibitem{van2017quantum}
Joran~van Apeldoorn, Andr{\'a}s Gily{\'e}n, Sander Gribling, and Ronald de~Wolf, \emph{Quantum {SDP}-solvers: Better upper and lower bounds}, 2017 IEEE 58th Annual Symposium on Foundations of Computer Science (FOCS), pp.~403--414, IEEE, 2017, \arxiv{1705.01843}.

\bibitem{van2020convex}
Joran~van Apeldoorn, Andr{\'a}s Gily{\'e}n, Sander Gribling, and Ronald~de Wolf, \emph{Convex optimization using quantum oracles}, Quantum \textbf{4} (2020), 220, \arxiv{1809.00643}.

\bibitem{auer2002nonstochastic}
Peter Auer, Nicolo Cesa-Bianchi, Yoav Freund, and Robert~E. Schapire, \emph{The nonstochastic multiarmed bandit problem}, SIAM Journal on Computing \textbf{32} (2002), no.~1, 48--77.

\bibitem{augustino2025fast}
Brandon Augustino, Dylan Herman, Enrico Fontana, Junhyung~Lyle Kim, Jacob Watkins, Shouvanik Chakrabarti, and Marco Pistoia, \emph{Fast convex optimization with quantum gradient methods}, arXiv preprint arXiv:2503.17356 (2025), \arxiv{2503.17356}.

\bibitem{augustino2023quantum}
Brandon Augustino, Jiaqi Leng, Giacomo Nannicini, Tam{\'a}s Terlaky, and Xiaodi Wu, \emph{A quantum central path algorithm for linear optimization}, arXiv preprint arXiv:2311.03977 (2023), \arxiv{2311.03977}.

\bibitem{augustino2023interior}
Brandon Augustino, Giacomo Nannicini, Tam{\'a}s Terlaky, and Luis~F. Zuluaga, \emph{Quantum interior point methods for semidefinite optimization}, Quantum \textbf{7} (2023), 1110, \arxiv{2112.06025}.

\bibitem{badanidiyuru2018bandits}
Ashwinkumar Badanidiyuru, Robert Kleinberg, and Aleksandrs Slivkins, \emph{Bandits with knapsacks}, Journal of the ACM (JACM) \textbf{65} (2018), no.~3, 1--55, \arxiv{1305.2545}.

\bibitem{bertsimas2000restless}
Dimitris Bertsimas and Jos{\'e} Ni{\~n}o-Mora, \emph{Restless bandits, linear programming relaxations, and a primal-dual index heuristic}, Operations Research \textbf{48} (2000), no.~1, 80--90.

\bibitem{besbes2012blind}
Omar Besbes and Assaf Zeevi, \emph{Blind network revenue management}, Operations Research \textbf{60} (2012), no.~6, 1537--1550.

\bibitem{borkar2002risk}
Vivek~S. Borkar and Sean~P. Meyn, \emph{Risk-sensitive optimal control for {M}arkov decision processes with monotone cost}, Mathematics of Operations Research \textbf{27} (2002), no.~1, 192--209.

\bibitem{bouland2023quantum}
Adam Bouland, Yosheb~M. Getachew, Yujia Jin, Aaron Sidford, and Kevin Tian, \emph{Quantum speedups for zero-sum games via improved dynamic {G}ibbs sampling}, International Conference on Machine Learning, pp.~2932--2952, PMLR, 2023, \arxiv{2301.03763}.

\bibitem{boyd2004convex}
Stephen~P. Boyd and Lieven Vandenberghe, \emph{Convex optimization}, Cambridge University Press, 2004.

\bibitem{van2021minimum}
Jan van~den Brand, Yin~Tat Lee, Yang~P. Liu, Thatchaphol Saranurak, Aaron Sidford, Zhao Song, and Di~Wang, \emph{Minimum cost flows, {MDP}s, and $\ell_{1}$-regression in nearly linear time for dense instances}, Proceedings of the 53rd Annual ACM SIGACT Symposium on Theory of Computing, pp.~859--869, 2021, \arxiv{2101.05719}.

\bibitem{brandao2017quantum}
Fernando G. S.~L. Brand\~{a}o, Amir Kalev, Tongyang Li, Cedric Yen-Yu Lin, Krysta~M. Svore, and Xiaodi Wu, \emph{Quantum {SDP} solvers: Large speed-ups, optimality, and applications to quantum learning}, 46th International Colloquium on Automata, Languages, and Programming (ICALP 2019) (Dagstuhl, Germany) (Christel Baier, Ioannis Chatzigiannakis, Paola Flocchini, and Stefano Leonardi, eds.), Leibniz International Proceedings in Informatics (LIPIcs), vol. 132, pp.~27:1--27:14, Schloss Dagstuhl -- Leibniz-Zentrum f{\"u}r Informatik, 2019, \arxiv{1710.02581}.

\bibitem{brandao2016quantum}
Fernando~G.S.L. Brand{\~a}o and Krysta Svore, \emph{Quantum speed-ups for semidefinite programming}, Proceedings of the 58th Annual Symposium on Foundations of Computer Science, pp.~415--426, 2017, \arxiv{1609.05537}.

\bibitem{brassard2000quantum}
Gilles Brassard, Peter H{\o}yer, Michele Mosca, and Alain Tapp, \emph{Quantum amplitude amplification and estimation}, Contemporary Mathematics \textbf{305} (2002), 53--74, \arxiv{quant-ph/0005055}.

\bibitem{chakrabarti2020quantum}
Shouvanik Chakrabarti, Andrew~M. Childs, Tongyang Li, and Xiaodi Wu, \emph{Quantum algorithms and lower bounds for convex optimization}, Quantum \textbf{4} (2020), 221, \arxiv{1809.01731}.

\bibitem{chen2025quantum}
Zherui Chen, Yuchen Lu, Hao Wang, Yizhou Liu, and Tongyang Li, \emph{Quantum {L}angevin dynamics for optimization}, Communications in Mathematical Physics \textbf{406} (2025), no.~3, 52, \arxiv{2311.15587}.

\bibitem{cohen2021solving}
Michael~B. Cohen, Yin~Tat Lee, and Zhao Song, \emph{Solving linear programs in the current matrix multiplication time}, Journal of the ACM (JACM) \textbf{68} (2021), no.~1, 1--39, \arxiv{1810.07896}.

\bibitem{cornelissen2022near}
Arjan Cornelissen, Yassine Hamoudi, and Sofiene Jerbi, \emph{Near-optimal quantum algorithms for multivariate mean estimation}, Proceedings of the 54th Annual ACM SIGACT Symposium on Theory of Computing, pp.~33--43, 2022, \arxiv{2111.09787}.

\bibitem{dai2023quantum}
Zhongxiang Dai, Gregory Kang~Ruey Lau, Arun Verma, Yao Shu, Bryan Kian~Hsiang Low, and Patrick Jaillet, \emph{Quantum {B}ayesian optimization}, Advances in Neural Information Processing Systems \textbf{36} (2023), 20179--20207, \arxiv{2310.05373}.

\bibitem{dantzig1997simplex}
George~B. Dantzig and Mukund~N. Thapa, \emph{The simplex method}, Springer, 1997.

\bibitem{dunjko2018machine}
Vedran Dunjko and Hans~J. Briegel, \emph{Machine learning \& artificial intelligence in the quantum domain: a review of recent progress}, Reports on Progress in Physics \textbf{81} (2018), no.~7, 074001.

\bibitem{ferreira2018online}
Kris~Johnson Ferreira, David Simchi-Levi, and He~Wang, \emph{Online network revenue management using thompson sampling}, Operations Research \textbf{66} (2018), no.~6, 1586--1602.

\bibitem{freund1997decision}
Yoav Freund and Robert~E. Schapire, \emph{A decision-theoretic generalization of on-line learning and an application to boosting}, Journal of Computer and System Sciences \textbf{55} (1997), no.~1, 119--139.

\bibitem{gao2024logarithmic}
Minbo Gao, Zhengfeng Ji, Tongyang Li, and Qisheng Wang, \emph{Logarithmic-regret quantum learning algorithms for zero-sum games}, Advances in Neural Information Processing Systems \textbf{36} (2024), \arxiv{2304.14197}.

\bibitem{giovannetti2008quantum}
Vittorio Giovannetti, Seth Lloyd, and Lorenzo Maccone, \emph{Quantum random access memory}, Physical Review Letters \textbf{100} (2008), no.~16, 160501, \arxiv{0708.1879}.

\bibitem{gittins2011multi}
John Gittins, Kevin Glazebrook, and Richard Weber, \emph{Multi-armed bandit allocation indices}, John Wiley \& Sons, 2011.

\bibitem{gong2022robustness}
Weiyuan Gong, Chenyi Zhang, and Tongyang Li, \emph{Robustness of quantum algorithms for nonconvex optimization}, The Thirteenth International Conference on Learning Representations, 2025, \arxiv{2212.02548}.

\bibitem{grigoriadis1995sublinear}
Michael~D. Grigoriadis and Leonid~G. Khachiyan, \emph{A sublinear-time randomized approximation algorithm for matrix games}, Operations Research Letters \textbf{18} (1995), no.~2, 53--58.

\bibitem{grover2002creating}
Lov Grover and Terry Rudolph, \emph{Creating superpositions that correspond to efficiently integrable probability distributions}, arXiv preprint quant-ph/0208112 (2002), \arxiv{quant-ph/0208112}.

\bibitem{jerbi2022quantum}
Sofiene Jerbi, Arjan Cornelissen, M{\=a}ris Ozols, and Vedran Dunjko, \emph{Quantum policy gradient algorithms}, arXiv preprint arXiv:2212.09328 (2022), \arxiv{2212.09328}.

\bibitem{jiang2020faster}
Shunhua Jiang, Zhao Song, Omri Weinstein, and Hengjie Zhang, \emph{Faster dynamic matrix inverse for faster {LP}s}, arXiv preprint arXiv:2004.07470 (2020), \arxiv{2004.07470}.

\bibitem{karmarkar1984new}
Narendra Karmarkar, \emph{A new polynomial-time algorithm for linear programming}, Proceedings of the Sixteenth Annual ACM Symposium on Theory of Computing, pp.~302--311, 1984.

\bibitem{kerenidis2016quantum}
Iordanis Kerenidis and Anupam Prakash, \emph{Quantum recommendation systems}, 8th Innovations in Theoretical Computer Science Conference (ITCS 2017), pp.~49--1, Schloss Dagstuhl--Leibniz-Zentrum f{\"u}r Informatik, 2017, \arxiv{1603.08675}.

\bibitem{khachiyan1979polynomial}
Leonid~Genrikhovich Khachiyan, \emph{A polynomial algorithm in linear programming}, Doklady Akademii Nauk, vol. 244, pp.~1093--1096, Russian Academy of Sciences, 1979.

\bibitem{kothari2023mean}
Robin Kothari and Ryan O'Donnell, \emph{Mean estimation when you have the source code; or, quantum {M}onte {C}arlo methods}, Proceedings of the 2023 Annual ACM-SIAM Symposium on Discrete Algorithms (SODA), pp.~1186--1215, SIAM, 2023, \arxiv{2208.07544}.

\bibitem{legall2018improved}
Francois Le~Gall and Florent Urrutia, \emph{Improved rectangular matrix multiplication using powers of the {C}oppersmith-{W}inograd tensor}, Proceedings of the Twenty-Ninth Annual ACM-SIAM Symposium on Discrete Algorithms, pp.~1029--1046, SIAM, 2018.

\bibitem{lee2015efficient}
Yin~Tat Lee and Aaron Sidford, \emph{Efficient inverse maintenance and faster algorithms for linear programming}, 2015 IEEE 56th annual symposium on foundations of computer science, pp.~230--249, IEEE, 2015, \arxiv{1503.01752}.

\bibitem{lee2015faster}
Yin~Tat Lee, Aaron Sidford, and Sam Chiu-wai Wong, \emph{A faster cutting plane method and its implications for combinatorial and convex optimization}, 2015 IEEE 56th Annual Symposium on Foundations of Computer Science, pp.~1049--1065, IEEE, 2015, \arxiv{1508.04874}.

\bibitem{leng2023quantum}
Jiaqi Leng, Ethan Hickman, Joseph Li, and Xiaodi Wu, \emph{Quantum {H}amiltonian descent}, arXiv preprint arXiv:2303.01471 (2023), \arxiv{2303.01471}.

\bibitem{leng2025sub}
Jiaqi Leng, Kewen Wu, Xiaodi Wu, and Yufan Zheng, \emph{(sub) exponential quantum speedup for optimization}, arXiv preprint arXiv:2504.14841 (2025), \arxiv{2504.14841}.

\bibitem{li2021symmetry}
Xiaocheng Li, Chunlin Sun, and Yinyu Ye, \emph{The symmetry between arms and knapsacks: A primal-dual approach for bandits with knapsacks}, International Conference on Machine Learning, pp.~6483--6492, PMLR, 2021, \arxiv{2102.06385}.

\bibitem{liu2023quantum}
Yizhou Liu, Weijie~J. Su, and Tongyang Li, \emph{On quantum speedups for nonconvex optimization via quantum tunneling walks}, Quantum \textbf{7} (2023), 1030, \arxiv{2209.14501}.

\bibitem{megiddo1989varepsilon}
Nimrod Megiddo and Ramaswamy Chandrasekaran, \emph{On the $\varepsilon$-perturbation method for avoiding degeneracy}, Operations Research Letters \textbf{8} (1989), no.~6, 305--308.

\bibitem{mehta2007adwords}
Aranyak Mehta, Amin Saberi, Umesh Vazirani, and Vijay Vazirani, \emph{Adwords and generalized online matching}, Journal of the ACM (JACM) \textbf{54} (2007), no.~5, 22--es.

\bibitem{meyer2022survey}
Nico Meyer, Christian Ufrecht, Maniraman Periyasamy, Daniel~D. Scherer, Axel Plinge, and Christopher Mutschler, \emph{A survey on quantum reinforcement learning}, arXiv preprint arXiv:2211.03464 (2022), \arxiv{2211.03464}.

\bibitem{montanaro2015quantum}
Ashley Montanaro, \emph{Quantum speedup of {M}onte {C}arlo methods}, Proceedings of the Royal Society A: Mathematical, Physical and Engineering Sciences \textbf{471} (2015), no.~2181, 20150301, \arxiv{1504.06987}.

\bibitem{nannicini2024fast}
Giacomo Nannicini, \emph{Fast quantum subroutines for the simplex method}, Operations Research \textbf{72} (2024), no.~2, 763--780, \arxiv{1910.10649}.

\bibitem{robbins1952some}
Herbert Robbins, \emph{Some aspects of the sequential design of experiments}, Bulletin of the American Mathematical Society \textbf{58} (1952), no.~5, 527--535.

\bibitem{ruszczynski2010risk}
Andrzej Ruszczy{\'n}ski, \emph{Risk-averse dynamic programming for {M}arkov decision processes}, Mathematical programming \textbf{125} (2010), 235--261.

\bibitem{slivkins2019introduction}
Aleksandrs Slivkins, \emph{Introduction to multi-armed bandits}, Foundations and Trends{\textregistered} in Machine Learning \textbf{12} (2019), no.~1-2, 1--286, \arxiv{1904.07272}.

\bibitem{wan2023quantum}
Zongqi Wan, Zhijie Zhang, Tongyang Li, Jialin Zhang, and Xiaoming Sun, \emph{Quantum multi-armed bandits and stochastic linear bandits enjoy logarithmic regrets}, Proceedings of the AAAI Conference on Artificial Intelligence, vol.~37, pp.~10087--10094, 2023, \arxiv{2205.14988}.

\bibitem{wang2021quantum}
Daochen Wang, Xuchen You, Tongyang Li, and Andrew~M. Childs, \emph{Quantum exploration algorithms for multi-armed bandits}, Proceedings of the AAAI Conference on Artificial Intelligence, vol.~35, pp.~10102--10110, 2021, \arxiv{2007.07049}.

\bibitem{wang2020randomized}
Mengdi Wang, \emph{Randomized linear programming solves the {M}arkov decision problem in nearly linear (sometimes sublinear) time}, Mathematics of Operations Research \textbf{45} (2020), no.~2, 517--546.

\bibitem{weber1992gittins}
Richard Weber, \emph{On the {G}ittins index for multiarmed bandits}, The Annals of Applied Probability (1992), 1024--1033.

\bibitem{yu2009markov}
Jia~Yuan Yu, Shie Mannor, and Nahum Shimkin, \emph{Markov decision processes with arbitrary reward processes}, Mathematics of Operations Research \textbf{34} (2009), no.~3, 737--757.

\bibitem{zhang2023quantum}
Chenyi Zhang and Tongyang Li, \emph{Quantum lower bounds for finding stationary points of nonconvex functions}, International Conference on Machine Learning, pp.~41268--41299, PMLR, 2023, \arxiv{2212.03906}.

\bibitem{zhong2023provably}
Han Zhong, Jiachen Hu, Yecheng Xue, Tongyang Li, and Liwei Wang, \emph{Provably efficient exploration in quantum reinforcement learning with logarithmic worst-case regret}, arXiv preprint arXiv:2302.10796 (2023), \arxiv{2302.10796}.

\end{thebibliography}
\end{document}